\crefname{figure}{Fig.}{Figs.}
\definecolor{darkgreen}{rgb}{0.0, 0.5, 0.0}
\definecolor{darkred}{rgb}{0.9, 0.0, 0.0}
\definecolor{darkblue}{rgb}{0.0, 0.0, 0.9}
\tikzset{
	state/.style={rounded rectangle,draw=black,inner sep=1.5mm,minimum width=9mm,minimum height=5mm},
	rstate/.style={rectangle,draw=black,inner sep=1ex,minimum width=9mm,minimum height=5mm},
	istate/.style={minimum width=5mm, minimum height=6mm},
	bullet/.style={circle,draw=black,fill=black,inner sep=0cm, minimum size=0.7mm},
	initial text={},
	every initial by arrow/.style={->,>=stealth'},
	ptran/.style={rounded corners, ->,>=stealth',auto},
	ntran/.style={rounded corners, -,auto}
}
  \tikzset{
    invisible/.style={opacity=0, fill opacity=0},
    visible on/.style={alt={#1{}{invisible}}},
    alt/.code args={<#1>#2#3}{%
      \alt<#1>{\pgfkeysalso{#2}}{\pgfkeysalso{#3}} %
    },
  }
\definecolor{milpColor}{HTML}{778da9}
\definecolor{mdpColor}{HTML}{a6a6a8}
\definecolor{queryColor}{HTML}{cecece}
\definecolor{certColor}{HTML}{c6dabf}
\definecolor{fixedPointColor}{HTML}{4c956c}
\definecolor{cmarkColor}{HTML}{38b000}
\definecolor{xmarkColor}{HTML}{c1121f}
\newtcolorbox{myframe}[2][]{%
center,
  enhanced,colback=white,colframe=black,coltitle=black,
  rounded corners,boxrule=0.4pt,
  attach boxed title to top left={yshift=-0.5\baselineskip-0.4pt,xshift=2mm},
  boxed title style={tile,size=minimal,left=0.5mm,right=0.5mm,
    colback=white,before upper=\strut},
  title=#2,#1,
  width=1\linewidth,
  boxsep=1pt,left=2pt,right=2pt,top=2pt,bottom=2pt
}
\tikzstyle{milp} = [rectangle,
\tikzstyle{result} = [rectangle,
\tikzstyle{certificate} = [rectangle,
\tikzstyle{fixedPoint} = [rectangle,
\tikzstyle{mdp} = [rectangle, 
\tikzstyle{query} = [rectangle, 
\tikzstyle{mdpShort} = [rectangle, 
\tikzstyle{queryShort} = [rectangle, 
\tikzstyle{decision} = [diamond, 
\tikzstyle{arrow} = [thick,->,>=stealth]
\tikzstyle{resultArrow} = [thick,->]
\newcommand{\union}{\cup}
\newcommand{\Union}{\bigcup}
\newcommand{\intersection}{\cap}
\newcommand{\prob}{\mathsf{Pr}}
\newcommand{\vect}[1]{\vec{#1}}
\newcommand{\reals}{\mathbb{R}}
\newcommand{\realsnn}{\reals_{\geq 0}}
\DeclareMathOperator{\suppOp}{supp}
\newcommand{\supp}[2][]{\suppOp_{#1}(#2)}
\DeclareMathOperator{\stateSuppOp}{state-supp}
\newcommand{\stateSupp}[2][]{\stateSuppOp_{#1}(#2)}
\newcommand{\naturals}{\mathbb{N}}
\newcommand{\card}[1]{{\lvert {#1} \rvert}}
\DeclareMathOperator{\distrOp}{Distr}
\newcommand{\distr}[1]{\distrOp(#1)}
\newcommand{\tuple}[1]{\langle #1 \rangle}
\DeclareMathOperator*{\argmax}{arg\,max}
\DeclareMathOperator*{\opt}{opt}
\newcommand{\boldLambda}{\pmb{\lambda}}
\newcommand{\boldMu}{\pmb{\mu}}
\newcommand{\boldDelta}{\pmb{\delta}}
\newcommand{\boldBeta}{\pmb{\beta}}
\DeclareMathOperator{\downOp}{down}
\newcommand{\down}[1]{\downOp(#1)}
\newcommand{\Conj}{\bigwedge}
\newcommand{\Disj}{\bigvee}
\newcommand{\eventually}{\lozenge}
\newcommand{\globally}{\square}
\DeclareMathOperator{\UntilOp}{\mathsf{U}}
\DeclareMathOperator{\NextOp}{\mathsf{X}}
\newcommand{\targetSet}{G}
\newcommand{\mdp}{\mathcal{M}}
\newcommand{\dtmc}{\mathcal{C}}
\newcommand{\states}{S}
\newcommand{\state}{s}
\newcommand{\targets}{F}
\newcommand{\actions}{Act}
\newcommand{\action}{a}
\newcommand{\init}{\bar{\state}}
\newcommand{\transMat}{\vect{P}}
\newcommand{\exit}{\bot}
\newcommand{\SM}{\vect{A}}
\newcommand{\SA}{\mathsf{En}}
\newcommand{\TM}{\vect{T}}
\newcommand{\scheduler}{\mathfrak{G}}
\newcommand{\schedulers}{\Sigma}
\DeclareMathOperator{\pathsOp}{Paths}
\DeclareMathOperator{\lastOp}{last}
\newcommand{\last}[1]{\lastOp({#1})}
\newcommand{\paths}{\pathsOp}
\newcommand{\pathsFin}{\pathsOp_{\mathsf{fin}}}
\newcommand{\Path}{\pi}
\newcommand{\initDistr}{\pmb{\delta}_{in}}
\newcommand{\mec}{\mathcal{E}}
\newcommand{\ec}{\mathcal{E}}
\newcommand{\MECS}{\mathsf{MEC}}
\newcommand{\ECS}{\mathsf{EC}}
\newcommand{\mecQuotient}{\widehat{\mdp}}
\newcommand{\labels}{L}
\newcommand{\labeling}{\Lambda}
\newcommand{\mecClass}[2][]{{[\![ #2 ]\!]}_{#1}}
\newcommand{\universalDQ}{\ensuremath{(\forall, \lor)}}
\newcommand{\existsCQ}{\ensuremath{(\exists, \land)}}
\newcommand{\pairs}{P}
\newcommand{\modelsRabin}{\models_{\mathsf{Rabin}}}
\newcommand{\queryProb}{\mathbb{P}}
\newcommand{\query}{\Psi}
\newcommand{\existsQuery}{\query}
\newcommand{\forallQuery}{\Phi}
\newcommand{\prop}{\varphi}
\newcommand{\valueVector}{\vect{v}}
\newcommand{\partition}{\mathcal{D}}
\newcommand{\lang}[1]{\mathcal{L}(#1)}
\newcommand{\automaton}{\mathcal{A}}
\newcommand{\acceptanceCondition}{\alpha}
\newcommand{\alphabet}{\Sigma}
\newcommand{\autoStates}{Q}
\newcommand{\autoState}{q}
\newcommand{\autoInit}{\bar{\autoState}}
\DeclareMathOperator{\infiniteOp}{\mathsf{inf}}
\newcommand{\infinite}[1]{\infiniteOp(#1)}
\newcommand{\run}{r}
\newcommand{\word}{w}
\newcommand{\letter}{\sigma}
\newcommand{\autoTransition}{\delta}
\begin{document}
\title{Certificates and Witnesses for Multi-objective $\omega$-regular Queries in Markov Decision Processes
\thanks{The authors were supported by the German Federal Ministry of Education and Research (BMBF) within the project SEMECO Q1 (03ZU1210AG) and by the German Research Foundation (DFG) through the Cluster of Excellence EXC 2050/1 (CeTI, project ID 390696704, as part of Germany’s Excellence Strategy) and the DFG Grant 389792660 as part of TRR 248 (Foundations of Perspicuous Software System).}}
\titlerunning{Certificates and Witnesses for Multi-Objective $\omega$-regular Queries} 
\author{Christel Baier\inst{1} \and Calvin Chau\inst{1} \and Volodymyr Drobitko\inst{1} \and Simon Jantsch\inst{2} \and Sascha Klüppelholz\inst{1}}
\authorrunning{Baier et al.}
\institute{Technische Universität Dresden, Dresden, Germany \and Siemens EDA, Munich, Germany}
\maketitle              %
\begin{abstract}
Multi-objective probabilistic model checking is a powerful technique for verifying stochastic systems against multiple (potentially conflicting) properties. To enhance the trustworthiness and explainability of model checking tools, we present independently checkable \emph{certificates} and \emph{witnesses} for multi-objective $\omega$-regular queries in Markov decision processes. For the certification, we extend and improve existing certificates for the decomposition of maximal end components and reachability properties. We then derive mixed-integer linear programs (MILPs) for finding \emph{minimal witnessing subsystems}. For the special case of Markov chains and LTL properties, we use \emph{unambiguous Büchi automata} to find witnesses, resulting in an algorithm that requires single-exponential space. Existing approaches based on deterministic automata require doubly-exponential space in the worst case. Finally, we consider the practical computation of our certificates and witnesses and provide an implementation of the developed techniques, along with an experimental evaluation, demonstrating the efficacy of our techniques.

\keywords{Certificates \and Markov decision process \and Multi-objective}
\end{abstract}

\renewcommand{\restriction}{\mathord{\upharpoonright}}

\section{Introduction}
Probabilistic model checking (PMC) is a well-established approach for analysing and formally verifying stochastic systems, e.g. randomized algorithms or security systems \cite{kwiatkowsa_prism_2012,hartmanns_quantitative_2019}.
Model checkers, like \textsc{Prism} \cite{kwiatkowska_prism_2011-1} or \textsc{Storm} \cite{hensel_probabilistic_2022}, receive a probabilistic model of the system and formal specification and output whether the specification is satisfied.
Unsurprisingly, as with any complex software, model checkers have been observed to contain bugs \cite{wimmer_demand_2008}.
In the context of verifying (safety-critical) systems, this raises the question of how to ensure that model checking results can be trusted.
An approach for mitigating this issue are \emph{certifying algorithms}~\cite{mcconnell_certifying_2011} which also return a \emph{certificate} accompanying the result. This can be used to independently verify the correctness of the result.

Another challenge for PMC is the \emph{explication} of verification results.
When a system fails to meet the specification, it is desirable to obtain diagnostic information that allows the user to understand the behavior and debug the system. Indeed, counterexample traces are a crucial feature of classical model checking techniques in practice.
A canonical notion of explication does not exist for PMC, e.g. sets of paths~\cite{han_counterexample_2009-1,aljazzar_generation_2009-1} or subsystems~\cite{jansen_hierarchical_2011,wimmer_minimal_2012,jansen_counterexamples_2015,funke_farkas_2020} have been considered. Witnessing subsystems are parts of the system that on their own violate (or satisfy) the specification and can therefore serve as useful explications.

This work addresses the \emph{certification} and \emph{computation of witnesses} for \emph{multi-objective $\omega$-regular queries} in {Markov decision processes} (MDPs)~\cite{puterman_markov_1994}. MDPs are the standard model for systems with stochastic and nondeterministic behavior. Further, multi-objective $\omega$-regular queries are expressive formalisms, enabling the specification of multiple (possibly conflicting) requirements \cite{etessami_multi-objective_2008}. For instance, for a server system one might specify that ``every request is served with probability greater than $0.9$'' while ``the probability of failure is less than $0.01$'' \cite{etessami_multi-objective_2008,forejt_quantitative_2011}.

To provide certificates, we revisit the steps of the verification procedure for these queries (e.g. \cite{forejt_quantitative_2011}) and make them certifying. More precisely, we devise certificates for the \emph{classification} of \emph{end components} based on the properties they satisfy. Further, we improve on the certification of the \emph{decomposition into maximal end components} (MECs) \cite{de_alfaro_formal_1997} from \cite{jantsch_certificates_2022}, using recent ideas from \cite{chatterjee_fixed_2025}, and provide a first implementation thereof. The MEC decomposition is a common preprocessing step in PMC, and its certification is thus interesting on its own. In combination with previous work on multi-objective reachability queries \cite{baier_certificates_2024}, this allows us to devise \emph{sound and complete} certificates for $\omega$-regular queries. Our certificates can then be used to \emph{independently validate} that a query is indeed satisfied (or violated), thereby enhancing trust in the verification process.

As in the setting of reachability queries \cite{funke_farkas_2020,jantsch_certificates_2022,baier_certificates_2024}, we show that our certificates \emph{induce witnessing subsystems}. Since small witnesses highlight critical parts of the MDP and might be easier to understand, we present algorithms based on mixed-integer linear programs (MILPs) for computing \emph{minimal witnessing subsystems}.

For the special case of discrete-time Markov chains (DTMCs) and LTL properties, we devise a novel algorithm for computing minimal witnesses based on unambiguous B\"uchi automata (UBAs).
Checking DTMCs against probabilistic LTL specifications is \textsf{PSPACE}-complete \cite{baier_markov_2023-1}, implying that minimal witnesses can be computed in \textsf{PSPACE} by enumerating and checking all subsystems.
As this is impractical, previous approaches relied on computing deterministic Rabin automata (DRAs) \cite{jansen_counterexamples_2015}, requiring \emph{doubly-exponential} space.\todo{SJ. I believe this may even have been triple-exponential, because they linearly many binary variables in the size of the DRA. But I'd have to check.}
Instead, we define a \emph{single-exponential} MILP based on the product of the DTMC with a UBA for the LTL property, building on results for model checking DTMCs against UBAs~\cite{baier_markov_2023-1}, which is possible in \textsf{PTIME}.

Proofs are in the appendix and the implementation and data are given in \cite{baier_2025_15680332}.

\medskip

\noindent\textbf{Contributions.}
\begin{itemize}
\item We devise certificates for the satisfaction (resp. violation) of Rabin properties in end components and improve the certification of the MEC decomposition (\Cref{subsection:satisfying-ecs}) from \cite{jantsch_certificates_2022} based on ideas from \cite{chatterjee_fixed_2025}.
\item In \Cref{subsection:certificate-rabin-streett} we present certificates for multi-objective queries with $\omega$-regular properties in MDPs which are composed of the aforementioned certificates and the certificates for multi-objective reachability from \cite{baier_certificates_2024}. We also discuss the computation of the latter via the algorithm of \cite{forejt_pareto_2012}.
\item \Cref{section:witnesses-mdps} presents techniques for computing minimal witnessing subsystems for multi-objective $\omega$-regular queries, supporting various minimality criteria.
\item For LTL queries in DTMCs, we give a single-exponential algorithm for computing minimal witnessing subsystems, utilizing unambiguous B\"uchi automata (\Cref{section:dtmc}).
\item All presented techniques are implemented in an efficient C\texttt{++} tool built on top of \textsc{Storm}, for which we provide an experimental evaluation (\Cref{section:experimental}).
\end{itemize}

\noindent\textbf{Related work.}
\emph{Multi-objective queries} in MDPs have been extensively studied, e.g. in \cite{etessami_multi-objective_2008,forejt_quantitative_2011,forejt_pareto_2012,randour_percentile_2015,quatmann_verification_2023}. Techniques for verifying multi-objective $\omega$-regular queries based on LPs are presented in \cite{etessami_multi-objective_2008,forejt_quantitative_2011}, where \cite{forejt_quantitative_2011} considers queries with rewards. The works of \cite{forejt_pareto_2012,quatmann_verification_2023} consider the verification of multi-objective queries via value iteration. Queries with mean-payoff objectives have been studied in \cite{10.1109/LICS.2011.10}. In \cite{randour_percentile_2015} complexity results for percentile queries are established. Certifying verification algorithms have not been considered in the mentioned works.

\emph{Farkas certificates} \cite{funke_farkas_2020,jantsch_certificates_2022} certify \emph{single} constraints on \emph{reachability probabilities} in MDPs. They are solutions to linear inequalities, derived from known LP characterizations \cite{puterman_markov_1994}, and can also be used to obtain \emph{minimal witnessing subsystems}. Certificates for the MEC decomposition are presented in \cite{jantsch_certificates_2022}, however without implementation.
The work of \cite{baier_certificates_2024} generalizes Farkas certificates to the setting of \emph{multi-objective reachability}. 
\emph{Fixed-point certificates} \cite{chatterjee_fixed_2025} can be seen as certified bounds on the optimal reachability probabilities and expected rewards in MDPs, based on fixed-points of the \emph{Bellman operator} for reachability probabilities and expected rewards (see \cite{puterman_markov_1994,baier_principles_2008-1}) and newly developed certificates for qualitative reachability. An implementation for computing the certificates and a verified checker for validating them is also provided. Similarly, \cite{Schäffeler_Abdulaziz_2023} provides a verified implementation of solution methods for expected rewards in MDPs.

\emph{Martingale-based certificates}, e.g. \cite{chakarov_probabilistic_2013-1,takisaka_ranking_2021,henzinger_supermartingale_2025-1,abate_quantitative_2025-1}, are important for the control and verification of stochastic systems with infinite state space, e.g. arising from probabilistic programs. The certificates from \cite{abate_quantitative_2025-1} combine Streett supermartingales with stochastic invariants and are sound and complete (i.a. for finite systems). However, multiple objectives are not supported and their implementation cannot be easily reused. The certificates from \cite{henzinger_supermartingale_2025-1} are sound, but incomplete, and multiple objectives are not considered.

\emph{Explications} for PMC have been considered in the form of sets of paths~\cite{han_counterexample_2009-1,aljazzar_generation_2009-1}, fault-trees \cite{kuntz_probabilistic_2011} and witnessing subsystems~\cite{jansen_hierarchical_2011,wimmer_minimal_2012,jansen_counterexamples_2015,funke_farkas_2020}.

\section{Preliminaries}
\label{section:prelims}
\noindent\textbf{Notation.}
We define $[k] \coloneqq \{1, \dots, k\}$. Vectors and matrices are written in boldface, e.g. $\vect{x}$ and $\SM$. Let $\states = \{\state_0, \dots, \state_n \}$ be a finite set. We write $\vect{x} \in \reals^\states$ instead of $\vect{x} \in \reals^\card{\states}$ and $\vect{x}(\state_i)$ to denote its $i$th entry. Let $\vect{1}_{\state} \in \{0, 1\}^{\states}$ be the vector that is \emph{Dirac} in $\state$. %
The \emph{support} of $\vect{x} \in \realsnn^\states$ is defined by $\supp{\vect{x}} = \{\state \in \states \mid \vect{x}(\state) > 0\}$. The set of probability distributions over $S$ is denoted by $\distr{\states}$.

\medskip

\noindent\textbf{$\omega$-regular property.}
An \emph{automaton} $\automaton = (\autoStates, \alphabet, \autoInit, \autoTransition,\allowbreak \acceptanceCondition)$ is a tuple where $\autoStates$ is a finite set of \emph{states}, $\alphabet$ a finite \emph{alphabet}, $\autoInit$ the \emph{initial state}, $\autoTransition \subseteq \autoStates \times \alphabet \times \autoStates$ a \emph{transition relation} and $\acceptanceCondition \subseteq 2^\autoStates \times 2^\autoStates$ an \emph{acceptance condition}.
A \emph{run} on an \emph{$\omega$-word} $\word = \word_0 \word_1 \ldots \in \alphabet^\omega$ in $\automaton$ is an infinite sequence $\run = \autoState_0 \autoState_1 \ldots \in \autoStates^\omega$ such that $\autoState_0 = \autoInit$ and $(\autoState_i, \word_i, \autoState_{i+1}) \in \autoTransition$ for all $i \geq 0$. Let $\infinite{\run}$ be the states appearing infinitely often in $\run$. A \emph{Rabin} automaton accepts a run $\run$ iff $\infinite{\run} \intersection F \neq \emptyset$ and $\infinite{\run} \intersection \autoStates = \emptyset$ for some $(F, E) \in \acceptanceCondition$. A \emph{Streett} automaton accepts a run $\run$ iff $\infinite{\run} \intersection F = \emptyset$ or $\infinite{\run} \intersection \autoStates \neq \emptyset$ for all $(F, E) \in \acceptanceCondition$.
An $\omega$-word $\word$ is \emph{accepted} by $\automaton$ iff there is an accepting run on $\word$. The \emph{language} $\lang{\automaton}$ recognized by $\automaton$ consists of all accepted words. 
The \emph{complement language} $\alphabet^\omega \setminus \lang{\automaton}$ of a Rabin automaton $\automaton$ is recognized by the structurally equivalent Streett automaton, and vice versa.
A \emph{property} $\prop \subseteq \alphabet^\omega$ is \emph{$\omega$-regular} if it is recognized by a Rabin or Streett automaton. Its complement is denoted by $\bar{\prop}$.
An automaton $\automaton$ is \emph{deterministic} if for every $\autoState \in \autoStates$ and $\letter \in \alphabet$ there is exactly one $\autoState' \in \autoStates$ with $(\autoState, \letter, \autoState') \in \autoTransition$ and \emph{unambiguous} if for every $\omega$-word there is at most one accepting run. A \emph{Büchi} automaton is a Rabin automaton with $\acceptanceCondition = \{(F, \emptyset) \}$.

\medskip

\noindent\textbf{Markov decision process.} An \emph{MDP} $\mdp$ is a tuple $(\states, \actions, \init, \transMat)$ where $\states$ is a finite set of \emph{states}, $\actions$ a finite set of \emph{actions}, $\init \in \states$ the initial state and $\transMat \colon \states \times \actions \to \distr{\states}$ a \emph{partial transition function}.
We often write $\transMat(\state, \action, \state')$ instead of $\transMat(\state, \action)(\state')$. 
A \emph{state-action pair}~$(\state, \action) \in \states \times \actions$ is \emph{enabled} if $\transMat(\state, \action)$ is defined. We denote the set of enabled pairs by $\SA_\mdp$ and the set of enabled actions in $\state$ by $\actions(\state)$.
A state $\state$ is \emph{absorbing} if $\transMat(\state, \action, \state) = 1$ for all $\action \in \actions(\state)$. A \emph{path} $\Path$ in $\mdp$ is a sequence $\Path = \state_0 \action_0 \state_1 \action_1 \ldots$ where $\transMat(\state_i, \action_i, \state_{i+1}) > 0$ for all $i \geq 0$. The last state of a finite path $\Path$ is denoted by $\last{\Path}$. The set of all infinite (resp. finite) paths starting in $\init$ is denoted by $\paths(\mdp)$ (resp. $\pathsFin(\mdp)$). 

The \emph{sub-MDP} $\mdp[D]$ induced by $D \subseteq \states$ consists of states $D$ and all state-action pairs $(\state, \action) \in \SA$ with $\state \in D$ and $\transMat(\state, \action, D) = 1$. %

A \emph{discrete-time Markov chain} (DTMC) $\dtmc$ is an MDP with a single action enabled in all states. Thus, we omit the actions from DTMCs and write $(\states, \init, \transMat)$.%

A \emph{scheduler} is a function $\scheduler \colon \pathsFin(\mdp) \to \distr{\actions}$ with $\supp{\scheduler(\Path)} \subseteq \actions(\last{\Path})$ for all $\Path$. The set of schedulers for $\mdp$ is denoted by $\schedulers^\mdp$. We consider the standard probability measure $\prob_{\mdp, \init}^\scheduler$ on infinite paths (\cite[Ch. 10]{baier_principles_2008-1}). For $\targetSet \subseteq \states$, let $\eventually \targetSet$, $\eventually \globally \targetSet$ and $\globally \eventually \targetSet$ denote the events of eventually visiting $G$, eventually staying in $G$ and visiting $G$ infinitely often, respectively. \emph{Rabin} (resp. \emph{Streett}) properties are of the form $\Disj_{(F, E) \in \pairs} (\globally \eventually F \land \eventually \globally E)$ (resp. $\Conj_{(F, E) \in \pairs} (\eventually \globally F \lor \globally \eventually E)$) where $\pairs \subseteq 2^\states \times 2^\states$. We often identify Rabin properties with $\pairs$. The \emph{dual} (or negation) of a Rabin property is a Streett property, and vice versa.

A set $\emptyset \subset \ec \subseteq \SA_\mdp$ is an \emph{end component} (EC) of $\mdp$ if the induced sub-MDP is \emph{strongly connected}. We often identify $\ec$ with the induced sub-MDP and denote its states by $\states(\ec)$. An EC $\mec$ is \emph{maximal} (MEC) if there is no EC $\ec'$ with $\mec \subset \ec'$. The set of ECs (resp. MECs) of $\mdp$ is denoted by $\ECS(\mdp)$ (resp. $\MECS(\mdp)$). Rabin properties can be satisfied \emph{almost surely} in an EC $\ec$, i.e. $\sup_{\scheduler}\prob_{\ec}^\scheduler(\Disj_{(F, E) \in \pairs} (\globally \eventually F \land \eventually \globally E)) = 1$, if $\states(\ec) \intersection F \neq \emptyset$ and $\states(\ec) \subseteq E$ for some $(F, E) \in \pairs$. We write $\ec \modelsRabin \pairs$ to denote the satisfaction.

The \emph{product} of $\mdp$ and a deterministic automaton $\automaton = (\autoStates, \states, \delta, \autoInit, \acceptanceCondition)$ is given by the MDP
$
\mdp \times \automaton = (\states \times \autoStates, \actions, \tuple{\init, \delta(\autoInit, \init)}, \transMat')
$
where for all $(\state, \action) \in \SA$, $\state \in \states$ and $\autoState, \autoState' \in \autoStates$ we have $\transMat'\bigl(\tuple{\state, \autoState}, \action, \tuple{\state', \autoState'} \bigr) = \transMat(\state, \action, \state')$ if $\autoState' = \autoTransition(\autoState, \state')$ and zero otherwise. A Rabin condition $\acceptanceCondition$ can be lifted to the Rabin property $\Disj_{(F, E) \in \acceptanceCondition}\globally \eventually (\states \times F) \land \eventually \globally (\states \times (\autoStates \setminus E))$. Streett conditions are lifted analogously.

\medskip

\noindent\textbf{Multi-objective query.}
A \emph{multi-objective $\omega$-regular query} $\query$ is a \emph{syntactic expression} $\exists \scheduler \centerdot \Conj_{i=1}^k \queryProb_{\triangleright \lambda_i}(\prop_i)$ or $\forall \scheduler \centerdot \Disj_{i=1}^k \queryProb_{\triangleright \lambda_i}(\prop_i)$ where $\triangleright \in \{\geq, >\}$, $\prop_i \subseteq \states^\omega$ is an $\omega$-regular property and $\lambda_i \in [0,1]$ for all $i \in [k]$. The former is a $\existsCQ$- and the latter a $\universalDQ$-query.\todo{Is $\queryProb$ defined? Maybe consider emphasizing that it depends on $\scheduler$. Also, not sure if the question-mark helps in this notation?}
We call $\query$ a \emph{Rabin} (resp. \emph{Streett}) \emph{query} if it only contains Rabin (resp. Streett) properties.
Given an MDP $\mdp$, we define $\models$ by
{
\begin{center}
$\mdp \models \exists \scheduler \centerdot \Conj_{i=1}^k \queryProb_{\triangleright \lambda_i}(\prop_i) \quad \text{iff} \quad \exists \scheduler \in \schedulers^\mdp \centerdot \Conj_{i=1}^k \prob_{\mdp, \init}^\scheduler(\prop_i) \triangleright \lambda_i$
 $\mdp \models \forall \scheduler \centerdot \Disj_{i=1}^k \queryProb_{\triangleright \lambda_i}(\prop_i) \quad \text{iff} \quad \forall \scheduler \in \schedulers^\mdp \centerdot \Disj_{i=1}^k \prob_{\mdp, \init}^\scheduler(\prop_i) \triangleright \lambda_i$
\end{center}
}%
We emphasize the \emph{duality} between $\existsCQ$- and $\universalDQ$-queries, i.e. the negation of a $\existsCQ$-query is a $\universalDQ$-query and vice versa. We also refer to the negated query as \emph{dual query}. Hence, by definition, a query is satisfied iff its dual is not.
A \emph{subsystem} $\mdp'$ of an MDP $\mdp$ is a (sub-stochastic\footnote{In \cite{jantsch_certificates_2022,funke_farkas_2020,baier_certificates_2024}, subsystems are defined with an explicit trap state $\exit$. However, this is equivalent to the definition provided here. More details are given in \Cref{appendix:subsystems}.}) MDP $(\states', \actions, \init, \transMat')$ where $\init \in \states' \subseteq \states$ and $\transMat'(\state, \action, \state') = \transMat(\state, \action, \state')$ for all $(\state, \action) \in \SA_\mdp$ with $\state \in \states'$ and $\state' \in \states'$.
A subsystem is induced by $\states' \subseteq \states$ if its states are given by $\states'$. We denote such induced subsystems by $\mdp_{\states'}$.
A \emph{witnessing subsystem} for a query $\query$ is a subsystem $\mdp'$ with $\mdp' \models \query$. If $\mdp'$ satisfies $\query$ then $\mdp$ does as well \cite{jantsch_certificates_2022}.
\begin{remark}
It suffices to consider lower-bounded queries since $\omega$-regular properties are closed under complementation, i.e. $\prob_{\mdp, \init}^\scheduler(\prop) \leq \lambda$ iff $\prob_{\mdp, \init}^\scheduler(\bar{\prop}) \geq 1 - \lambda_i$. Further, $(\exists,\lor)$- and $(\forall,\land)$-queries can be reduced to the considered query types.
\end{remark}

\section{Overview}
\label{section:overview}
We provide an overview of our approach to certificates and witnesses for multi-objective $\omega$-regular queries. First, we revisit the procedure from \cite{forejt_quantitative_2011} for \emph{checking} whether a query is satisfied. 
Let $\mathcal{N}$ be an MDP and $\rho_1, \ldots, \rho_k$ be $\omega$-regular properties recognized by DRAs $\automaton_1, \ldots, \automaton_k$. The \emph{complement} properties are given by $\bar{\rho}_1, \ldots, \bar{\rho}_k$. Note that $\bar{\rho}_i$ is recognized by $\automaton_i$ when interpreting it as a Streett automaton. Let $\existsQuery_\mathcal{N} = \exists \scheduler \centerdot \Conj_{i=1}^k \queryProb_{\triangleright \lambda_i}(\rho_i)$ and $\forallQuery_\mathcal{N} = \forall \scheduler \centerdot \Disj_{i=1}^k \queryProb_{\triangleright \lambda_i}(\bar{\rho}_i)$.

\smallskip

\noindent\textbf{Step 1}: The product MDP $\mdp = \mathcal{N} \times \automaton_1 \times \cdots \times \automaton_k$ is constructed to reason about the probabilities of satisfying the properties. The acceptance conditions are lifted to Rabin (resp. Streett) properties $\prop_1, \ldots, \prop_k$, resulting in the corresponding \emph{Rabin} query $\existsQuery = \exists \scheduler \centerdot \Conj_{i=1}^k \queryProb_{\triangleright \lambda_i}(\prop_i)$ (resp. Streett query $\forallQuery = \forall \scheduler \centerdot \Disj_{i=1}^k \queryProb_{\triangleright \lambda_i}(\bar{\prop}_i)$). Note that $\mathcal{N} \models \existsQuery_{\mathcal{N}}$ iff $\mdp \models \existsQuery$ (resp. $\mathcal{N} \models \forallQuery_\mathcal{N}$ iff $\mdp \models \forallQuery$), i.e. the query is satisfied in $\mathcal{N}$ iff the corresponding query is satisfied in the product MDP. We denote the \emph{Rabin} pairs of property $\prop_i$ by $\pairs_i$.

\smallskip

\noindent\textbf{Step 2}: Now the satisfying MECs of the product MDP $\mdp$ are identified. For every MEC $\mec \in \MECS(\mdp)$ and $I \subseteq [k]$, it is checked if there is an EC $\ec' \subseteq \mec$ that satisfies all \emph{Rabin} properties in $\{\prop_i \mid i \in I\}$, i.e. for all $i \in I$ there is a Rabin pair $(F, E) \in \pairs_i$ with $\states(\ec') \intersection F \neq \emptyset$ and $\states(\ec') \subseteq E$. For a fixed $I$, this can be done in time polynomial in $\card{\mdp}$ and exponential in the number of properties \cite{forejt_quantitative_2011}. All pairs $(\states(\mec), I)$ of such satisfying MECs $\mec$ and $I$ are collected in the set $\mathcal{I}^*$.

\smallskip

\noindent\textbf{Step 3}: To analyse the probabilities of realizing the identified ECs, a \emph{quotient} of $\mdp$ is considered. For a state partition $\partition$ of $\mdp$, we write $\mecClass[\partition]{\state}$ to denote the class of $\state$ and let $\SA_{\partition} = \{(\state,\action) \in \SA \mid \transMat(\state, \action, \mecClass[\partition]{\state}) < 1\}$ be the set of pairs leaving their class. W.l.o.g. assume that actions of states in $\mdp$ are pairwise disjoint.
\begin{definition}
Let $\partition$ be a state partition of $\mdp$ and $\mathcal{I} \subseteq \partition \times 2^{[k]}$.
The \emph{quotient MDP} w.r.t. $\partition$ and $\mathcal{I}$ is $\mdp_{/ \partition}^{\mathcal{I}} = (\partition \union \{\bot_I \}_{I \subseteq [k]}, \actions \union \{\tau_I\}_{I \subseteq [k]}, \mecClass[\partition]{\init}, \transMat')$ where
\begin{itemize}[align=left, leftmargin=*, itemsep=0mm, topsep=0.8mm, parsep=0mm]
\item $\transMat'(\mecClass[\partition]{\state}, \action, \mecClass[\partition]{\state'}) = \transMat(\state, \action, \mecClass[\partition]{\state'})$ for all $(\state, \action) \in \SA_\partition$ and $\state' \in \states$,
\item $\transMat'(\mecClass[\partition]{\state}, \tau_I, \exit_I) = 1$ for all $(\mecClass[\partition]{\state}, I) \in \mathcal{I}$, and $\exit_I$ is absorbing for all $I \subseteq [k]$.
\end{itemize}
\end{definition}
In $\mdp_{/ \partition}^{\mathcal{I}}$ classes of $\partition$ are collapsed into a single state, class-internal actions are omitted and transitions to absorbing states $\bot_I$ are added according to $\mathcal{I}$.

Let $\partition_{\MECS}$ be the partition of the states of $\mdp$ according to the MEC they are contained in (states not part of any MEC are put in singletons). Together with the set $\mathcal{I}^*$ identified in Step 2, the \emph{MEC quotient} $\mecQuotient \coloneqq \mdp_{/ \partition_{\MECS}}^{\mathcal{I}^*}$ is constructed, where MECs $\mec$ of $\mdp$ are collapsed into a single state and transitions from such state to an absorbing state $\exit_I$ are added if it is possible to satisfy all properties $\{\prop_i \mid i \in I\}$ in $\mec$. Intuitively, reaching $\exit_I$ corresponds to staying in an EC in which the indexed properties are satisfied.
\begin{restatable}{theorem}{theoremReduction}
\label{lemma:index}
Let $\widehat{\query} = \exists \scheduler \centerdot \Conj_{i=1}^k \queryProb_{\triangleright \lambda_i}(\eventually G_i)$ and $\widehat{\Phi} = \forall \scheduler \centerdot \Disj_{i=1}^k \queryProb_{\triangleright \lambda_i}(\eventually \bar{G}_i)$ where $G_i = \{\exit_I \mid I \subseteq [k], i \in I\}$ and $\bar{G}_i = \{\exit_I \mid I \subseteq [k], i \notin I\}$ for all $i \in [k]$. Then:
{
\begin{center}
\begin{enumerate*}[1)]
\item $\mdp \models \query \iff \mdp_{/ \partition_{\MECS}}^{\mathcal{I}^*} \models \widehat{\query}$. \qquad\qquad\qquad
\item $\mdp \models \Phi \iff \mdp_{/ \partition_{\MECS}}^{\mathcal{I}^*} \models \widehat{\Phi}$.
\end{enumerate*}
\end{center}
}
\end{restatable}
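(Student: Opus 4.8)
The plan is to reduce both equivalences to the classical theory of end components (see e.g.~\cite{baier_principles_2008-1,de_alfaro_formal_1997}), via two scheduler transformations --- one from $\mecQuotient$ into $\mdp$, one back --- together with the observation that Rabin acceptance of a path depends only on the end component its tail converges to. I would first record the facts I need. \emph{(i)}~Under any scheduler of a finite MDP, almost every path $\Path$ converges to an EC $\mathbf{E}(\Path)$, in the sense that the state--action pairs taken infinitely often form an EC with state set $\infinite{\Path}$; moreover $\mathbf{E}(\Path)$ is contained in a unique MEC. \emph{(ii)}~For every EC $\ec$ and $\state \in \states(\ec)$ there is a scheduler that, started in $\state$, stays in $\ec$ and visits every pair of $\ec$ infinitely often almost surely. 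Combining \emph{(i)} with the definition of Rabin properties yields, for each $i$, that almost surely $\Path \models \prop_i$ iff $\mathbf{E}(\Path) \modelsRabin \pairs_i$; writing $I_{\max}(\ec) \coloneqq \{ i \in [k] \mid \ec \modelsRabin \pairs_i \}$, the scheduler of~\emph{(ii)} then makes $\prop_i$ hold with probability $1$ for all $i \in I_{\max}(\ec)$ and $\bar{\prop}_i$ hold with probability $1$ for all $i \notin I_{\max}(\ec)$, since in both cases the limit EC equals $\ec$ almost surely. Finally, the only ECs of $\mecQuotient$ are the absorbing singletons $\{\exit_I\}$ --- any EC confined to the collapsed states would lift to an EC of $\mdp$ spanning several MECs, contradicting maximality --- so by~\emph{(i)} almost every run of $\mecQuotient$ reaches some $\exit_I$.

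\emph{From $\mecQuotient$ into $\mdp$ (the ``commit'' construction).} Given a scheduler $\widehat{\scheduler}$ of $\mecQuotient$, I would build $\scheduler$ on $\mdp$ that tracks the corresponding coarsened history and mimics $\widehat{\scheduler}$: a move along a pair in $\SA_{\partition_{\MECS}}$ is replayed in $\mdp$, first steering inside the current MEC to the required state (which succeeds with probability $1$ by strong connectivity); and when $\widehat{\scheduler}$ plays $\tau_I$ from a MEC-state $\mec$ --- so $(\states(\mec), I) \in \mathcal{I}^*$, witnessed by some EC $\ec' \subseteq \mec$ with $I \subseteq I_{\max}(\ec')$ --- the scheduler $\scheduler$ navigates to $\ec'$ and switches to the EC-realizing scheduler of~\emph{(ii)}. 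Since almost every run of $\mecQuotient$ reaches some $\exit_I$, almost every $\scheduler$-path converges to one of these witnessing ECs, whence for all $i$ we get $\prob_{\mdp}^{\scheduler}(\prop_i) \geq \prob_{\mecQuotient}^{\widehat{\scheduler}}(\eventually G_i)$ and $\prob_{\mdp}^{\scheduler}(\bar{\prop}_i) \leq \prob_{\mecQuotient}^{\widehat{\scheduler}}(\eventually \bar{G}_i)$, using $I \subseteq I_{\max}(\ec')$.

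\emph{From $\mdp$ into $\mecQuotient$ (the ``limit'' construction).} Given $\scheduler$ on $\mdp$, I would build $\widehat{\scheduler}$ that plays $\scheduler$ and, once the path has committed to its limiting EC $\mathbf{E}$ inside a MEC $\mec$, takes $\tau_{I_{\max}(\mathbf{E})}$ (note $(\states(\mec), I_{\max}(\mathbf{E})) \in \mathcal{I}^*$). Routing the mass of $\{\mathbf{E} = \ec'\}$ to $\exit_{I_{\max}(\ec')}$ then matches, index-wise and using~\emph{(i)}, the event $\{\Path \models \prop_i\}$ with $\{\exit_{I_{\max}(\ec')} \in G_i\} = \{i \in I_{\max}(\ec')\}$, giving $\prob_{\mecQuotient}^{\widehat{\scheduler}}(\eventually G_i) = \prob_{\mdp}^{\scheduler}(\prop_i)$ and $\prob_{\mecQuotient}^{\widehat{\scheduler}}(\eventually \bar{G}_i) = \prob_{\mdp}^{\scheduler}(\bar{\prop}_i)$ for all $i$. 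Combining the two constructions with the query shapes then finishes both parts: for part~1, a witness $\scheduler$ of $\mdp \models \query$ maps under the limit construction to a witness $\widehat{\scheduler}$ of $\mecQuotient \models \widehat{\query}$ via equalities, and a witness $\widehat{\scheduler}$ of $\mecQuotient \models \widehat{\query}$ maps under the commit construction to $\scheduler$ with $\prob_{\mdp}^{\scheduler}(\prop_i) \geq \prob_{\mecQuotient}^{\widehat{\scheduler}}(\eventually G_i) \triangleright \lambda_i$; for part~2, from $\mdp \models \Phi$ the commit construction applied to an arbitrary $\widehat{\scheduler}$ together with $\prob_{\mecQuotient}^{\widehat{\scheduler}}(\eventually \bar{G}_i) \geq \prob_{\mdp}^{\scheduler}(\bar{\prop}_i)$ gives $\mecQuotient \models \widehat{\Phi}$, and conversely the limit construction with the matching equalities gives $\mdp \models \Phi$ --- throughout using that $\leq$ and $=$ compose with $\triangleright \in \{\geq, >\}$ in the required direction.

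I expect the main obstacle to be making the limit construction rigorous: a scheduler of $\mecQuotient$ cannot literally observe the limiting EC of the simulated run, so one first has to show --- via~\emph{(i)}, \emph{(ii)} and a reachability argument --- that the achievable probability vectors $(\prob_{\mdp}^{\scheduler}(\prop_1), \dots, \prob_{\mdp}^{\scheduler}(\prop_k))$ of $\mdp$ are already attained by schedulers that, upon entering a MEC, irrevocably commit (with history-dependent probabilities) to staying in some EC inside it; for such schedulers the quotient scheduler merely copies the commitment. The commit direction is comparatively routine, its only delicate point being that replaying an $\SA_{\partition_{\MECS}}$-move requires steering through the current MEC first, which is possible with probability one.
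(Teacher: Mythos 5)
Your argument is sound, but it takes a noticeably different route from the paper. The paper disposes of part 1) entirely by citation (to \cite{forejt_quantitative_2011} and \cite[Lemma~2]{randour_percentile_2015}) and then obtains part 2) from part 1) by pure duality: negating the $\universalDQ$-query, complementing the properties and thresholds, applying part 1) with strict bounds, and finally using the observation that in $\mecQuotient$ the absorbing states $\{\exit_I\}_I$ are reached almost surely under every scheduler. You instead prove both parts symmetrically from first principles via the two scheduler transformations, which makes the statement self-contained and makes explicit where each inequality direction comes from (the ``commit'' direction only gives $\prob_{\mdp}^{\scheduler}(\prop_i) \geq \prob_{\mecQuotient}^{\widehat{\scheduler}}(\eventually G_i)$, which is enough for $\existsCQ$ one way and for $\universalDQ$ the other way, while the ``limit'' direction gives equality). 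Your justification that the only ECs of $\mecQuotient$ are the $\exit_I$-singletons is essentially a re-derivation of the paper's a.s.-absorption observation, and is correct as stated (note only that a candidate EC of the quotient cannot be a single class either, since every action in $\SA_{\partition_{\MECS}}$ leaves its class with positive probability). The paper's route is shorter and reuses part 1) verbatim; yours buys independence from the duality bookkeeping at the cost of carrying both constructions through.

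The one step you do not actually prove is the one you flag yourself: that the achievable vectors $(\prob_{\mdp}^{\scheduler}(\prop_1),\dots,\prob_{\mdp}^{\scheduler}(\prop_k))$ are attained by schedulers that irrevocably commit to an EC upon entering a MEC, so that the quotient scheduler can replay the commitment as a choice of $\tau_{I}$. This is precisely the content of the results the paper cites ([Lemma~2/4]{randour\_percentile\_2015} and the analysis in \cite{forejt_quantitative_2011}), so your proposal is at the same level of completeness as the paper's proof only if you either cite that lemma or carry out the commitment argument (decompose $\prob_{\mdp}^{\scheduler}$ over the events ``the limiting EC is $\ec'$'', and realize the induced sub-distribution on pairs $(\mec, I_{\max}(\ec'))$ by a quotient scheduler that mimics $\scheduler$ on leaving actions and randomizes over the $\tau_I$). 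Everything else --- the a.s. convergence to ECs, the identification of $\Path \models \prop_i$ with $\mathbf{E}(\Path) \modelsRabin \pairs_i$, and the downward closure of $\mathcal{I}^*$ guaranteeing that $\tau_{I_{\max}(\ec')}$ is available --- is correct.
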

\noindent For $\forallQuery = \forall \scheduler \centerdot \Disj_{i=1}^k \queryProb_{\triangleright \lambda_i}(\bar{\prop}_i)$, each $\bar{\targetSet}_i$ contains the absorbing states corresponding to \emph{not} satisfying $\prop_i$. In total, \Cref{lemma:index} reduces the problem of checking a multi-objective Rabin (or Streett) query to checking a \emph{reachability} query $\widehat{\existsQuery}$ (resp. $\widehat{\forallQuery}$), which can be done via LP \cite{etessami_multi-objective_2008} or value iteration \cite{forejt_pareto_2012}.

\begin{figure}[t]
\centering
\begin{subfigure}{0.2\textwidth}
\centering
\scalebox{0.75}{
\begin{tikzpicture}[x=18mm,y=15mm,font=\small]
          \node[state] (s0) {$s_0$};
          \node[state] (s3) [below = 0.82cm of s0]  {$s_3$};
          \node[state] (s1) [right = 0.4cm  of s0] {$s_1$};
          \node[state] (s2) [right = 0.4cm of s1]  {$s_2$};
          \node[state] (s4) [below = 0.82cm of s2]  {$s_4$};

          \node[bullet] (s0s3s1) [below = .2cm of s0] {};
          
                    \node (init) [above = 0.3cm of s0] {};
          \draw (init) edge[ptran] (s0);
          
            {\color{green!50!black}
				\draw (s0) -- node[left,pos=.8]{$a$} (s0s3s1);
				\draw (s0s3s1) edge[ptran] coordinate[pos=.3] (bs0s3) node[right,pos=.5]{\scriptsize{$0.5$}} (s3);
				\draw (s0s3s1) edge[ptran, bend right, out=0.2] coordinate[pos=.08] (bs0s1) node[above,pos=.5, yshift=-0.3mm]{\scriptsize{$0.5$}} (s1);
				\draw (bs0s3) to[bend right] (bs0s1);
			}
			
			{\color{orange!90}
			
				\draw (s1) edge[ptran, bend right=18] node[below,pos=.5]{$b$} (s2);
			}
			
			{\color{red!100}
			
				\draw (s2) edge[ptran, bend right=18] node[above,pos=.5]{$c$} (s1);
			}
			
			{\color{blue!90}
			
				\draw (s2) edge[ptran, loop below, looseness=6] node[below,pos=.5]{$d$} (s2);
			}
			
			{\color{darkgray}
			
				\draw (s3) edge[ptran, bend right=14] node[above,pos=.5]{$e$} (s4);
			}
			
			{\color{purple}
			
				\draw (s4) edge[ptran, bend right=14] node[above,pos=.5]{$f$} (s3);
			}
			
\end{tikzpicture}
}
\caption{\scriptsize{MDP $\mdp$}}
\label{subfig:product-mdp}
\end{subfigure}
\hfill
\begin{subfigure}{0.26\textwidth}
\centering
\scalebox{0.75}{
\begin{tikzpicture}[x=18mm,y=15mm,font=\scriptsize]
          \node[state] (s0) {$\{s_0\}$};
          \node[state] (s12) [right = 0.5cm  of s0] {$\{s_1, s_2\}$};
          \node[state] (s34) [right = 0.2cm of s12]  {$\{s_3, s_4\}$};

          \node[state] (exit1) [below = 0.4cm  of s0, fill=gray!25] {$\exit_{\{1\}}$};
          \node[state] (exit2) [below = 0.4cm  of s12, fill=gray!25] {$\exit_{\emptyset}$};
          \node[state] (exit3) [below = 0.4cm  of s34, fill=gray!25] {$\exit_{\{2\}}$};

          \node[bullet] (s0s3s1) [above right = .2cm of s0, xshift=0.2cm, yshift=-2mm] {};
          
                    \node (init) [above = 0.3cm of s0] {};
          \draw (init) edge[ptran] (s0);
          
            {\color{green!50!black}
				\draw (s0) -- node[above,pos=1]{$a$} (s0s3s1);
				
				\draw (s0s3s1) edge[ptran, bend left=20] coordinate[pos=.06] (bs0s3) node[below,pos=.5, xshift=0.5cm, yshift=0mm]{\scriptsize{$0.5$}} (s34.north);
				
				\draw (s0s3s1) edge[ptran, bend right] coordinate[pos=.3] (bs0s1) node[below,pos=.25, yshift=-1mm]{\scriptsize{$0.5$}} (s12.west);
				
				\draw (bs0s3) to[bend left] (bs0s1);
			}
			
			{\color{black}
			
				\draw (s12) edge[ptran] (exit1);
			}
			{\color{black}
			
				\draw (s12) edge[ptran] (exit2);
			}
			{\color{black}
			
				\draw (s12) edge[ptran] (exit3);
			}

			{\color{black}
			
				\draw (s34) edge[ptran] (exit3);
			}
			{\color{black}
			
				\draw (s34) edge[ptran] (exit2);
			}
			
\end{tikzpicture}
}
\caption{\scriptsize{MEC quotient $\mecQuotient$}}
\label{subfig:mec-quotient}
\end{subfigure}
\hfill
\begin{subfigure}{0.2\textwidth}
\centering
\scalebox{0.75}{
\begin{tikzpicture}[x=18mm,y=15mm,font=\scriptsize]
          \node[state] (s0) {$\{s_0\}$};
          \node[state] (s12) [right = 0.5cm  of s0] {$\{s_1, s_2\}$};

          \node[state] (exit1) [below = 0.4cm  of s0, fill=gray!25] {$\exit_{\{1\}}$};
          \node[state] (exit2) [right = 0.3cm  of exit1, fill=gray!25] {$\exit_{\emptyset}$};
          \node[state] (exit3) [right = 0.3cm  of exit2, fill=gray!25] {$\exit_{\{2\}}$};

                    \node (init) [above = 0.3cm of s0] {};
          \draw (init) edge[ptran] (s0);
          
            {\color{green!50!black}
				\draw (s0) edge[ptran] node[above,pos=0.5]{$a$} node[below,pos=0.5]{$0.5$} (s12.west);
				
			}
			
			{\color{black}
			
				\draw (s12) edge[ptran] (exit1);
			}
			{\color{black}
			
				\draw (s12) edge[ptran] (exit2);
			}
			{\color{black}
			
				\draw (s12) edge[ptran] (exit3);
			}
			
\end{tikzpicture}
}
\caption{\scriptsize{Subsystem $\mecQuotient'$}}
\label{subfig:mec-quotient-subsystem}
\end{subfigure}
\hfill
\begin{subfigure}{0.2\textwidth}
\centering
\scalebox{0.75}{		
\begin{tikzpicture}[x=18mm,y=15mm,font=\small]
          \node[state] (s0) {$s_0$};
          \node[state] (s1) [right = 0.4cm  of s0] {$s_1$};
          \node[state] (s2) [right = 0.4cm of s1]  {$s_2$};

                    \node (init) [above = 0.3cm of s0] {};
          \draw (init) edge[ptran] (s0);
          
            {\color{green!50!black}
				\draw (s0) edge[ptran] coordinate[pos=.08] (bs0s1) node[above,pos=.5, yshift=-0.3mm]{\scriptsize{$a$}} node[below,pos=.5, yshift=-0.3mm]{\scriptsize{$0.5$}} (s1);
			}
			
			{\color{orange!90}
			
				\draw (s1) edge[ptran, bend right=18] node[below,pos=.5]{$b$} (s2);
			}
			
			{\color{red!100}
			
				\draw (s2) edge[ptran, bend right=18] node[above,pos=.5]{$c$} (s1);
			}
			
			{\color{blue!90}
			
				\draw (s2) edge[ptran, loop above, looseness=6] node[above,pos=.5]{$d$} (s2);
			}

\end{tikzpicture}
}

\caption{\scriptsize{Subsystem $\mdp'$}}
\label{subfig:witness}
\end{subfigure}
\caption{MDP, MEC quotient and subsystems. We omit probability $1$ labels and sometimes self-loops and action names for readability.}
\label{fig:mdps}
\end{figure}
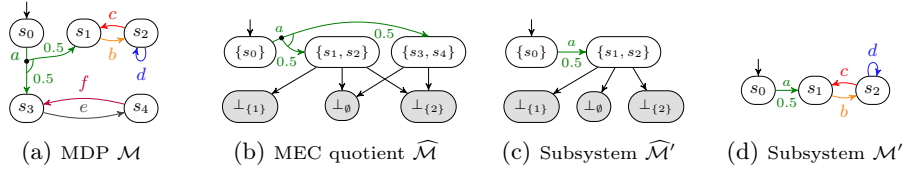

\begin{example}
Let $\query = \exists \scheduler \centerdot \queryProb_{\geq 0.25}(\varphi_1) \land  \queryProb_{\geq 0.75}(\varphi_2) $ where $\varphi_1 = \globally \eventually \state_1 \lor \eventually \globally \state_4$ and $\varphi_2 = \eventually \globally \{\state_2, \state_3, \state_4\}$ and consider the MDP $\mdp$ in \cref{subfig:product-mdp}.
The Rabin properties $\varphi_1$ and $\varphi_2$ can be satisfied \emph{almost surely} in the MEC formed by $\state_1$ and $\state_2$, but \emph{not simultaneously}. While $\varphi_1$ requires that $\state_1$ is visited infinitely often, $\varphi_2$ demands it to be visited finitely often. We have $\partition_{\MECS} = \{\{\state_0\}, D_1, D_2\}$ where $D_1 = \{\state_1, \state_2\}$ and $D_2 = \{ \state_3, \state_4\}$. Further, $\mathcal{I}^*$ contains the pairs $(D_1, \{1\})$, $(D_1, \{2\})$ and $(D_2, \{2\})$ resulting in the MEC quotient $\mecQuotient$ in \cref{subfig:mec-quotient} and reachability query $\widehat{\query} = \exists \scheduler \centerdot \queryProb_{\geq 0.25}(\eventually \bot_{\{1\}}) \land  \queryProb_{\geq 0.75}(\eventually \bot_{\{2\}})$. We have $\mecQuotient \models \widehat{\query}$ and thus $\mdp \models \query$.
\label{example:reduction}
\end{example}

\noindent\textbf{Certifying the procedure.} Our goal is to provide certificates with which it can be validated that the returned result of this procedure is correct. If a query is determined to be satisfied, we provide certificates that prove the satisfaction. Otherwise, we return certificates for the satisfaction of the \emph{dual query}.

Our certificates will be composed of certificates demonstrating the correctness of the individual steps. No explicit certificate is needed to validate that the product MDP $\mdp$ has been constructed correctly in Step 1, since $\mdp$ can be returned as a by-product of the procedure and for each of its states it can be checked whether the transition function is defined correctly. The \emph{crux} is to provide certificates for the multi-objective Rabin (resp. Streett) queries, which will be based on \Cref{lemma:index}. The idea is to return the sets $\partition$ and $\mathcal{I}$ computed by the procedure and certify that $\partition = \partition_{\MECS}$ and $\mathcal{I} \subseteq \mathcal{I}^*$ (resp. $\mathcal{I} \supseteq \mathcal{I}^*$) hold. For this we adapt and improve the certificates from \cite{jantsch_certificates_2022} for the MEC decomposition in \Cref{subsection:satisfying-ecs}. For the reachability queries in \Cref{lemma:index}, we use the certificates from \cite{baier_certificates_2024} and discuss means to compute them in \Cref{subsection:certificate-rabin-streett}. Finally, we combine the introduced certificates in \Cref{subsection:certificate-rabin-streett} to certify Rabin and Streett queries.

\medskip

\noindent\textbf{Witnesses.} To increase the understandability of the returned result, we additionally use \Cref{lemma:index} in \Cref{section:witnesses-mdps} to obtain minimal witnesses. \Cref{section:dtmc} shows that for DTMCs we can also find witnesses via UBAs instead of DRAs.

\section{Certificates for End Components}
\label{subsection:satisfying-ecs}
In this section we fix an MDP $\mdp = (\states, \actions, \init, \transMat)$ and Rabin properties $P_1, \ldots, P_k$. We start with a certificate for the existence of an EC $\ec$ in $\mdp$ in which all properties are satisfied simultaneously, i.e. for all $i \in [k]$ we have $\ec \modelsRabin P_i$.
\begin{lemma}[EC certificate {\cite[Lemma 3.27]{jantsch_certificates_2022}}]
Let $D \subseteq \states$. An \emph{EC certificate} for $D$ is a tuple $(\mathtt{f}, \allowbreak \mathtt{b}) \in \naturals^D \times \naturals^D$ such that there is a unique $\state^* \in D$ with $\mathtt{f}(\state^*) = \mathtt{b}(\state^*) = 0$ and for each $\state \in D \setminus \{\state^*\}$:
\begin{enumerate}[1), align=left, leftmargin=*, itemsep=0mm, topsep=0.8mm, parsep=0mm]
\item $\exists \action \in \actions(\state) \centerdot \exists \state' \in D \centerdot \mathtt{f}(\state) > \mathtt{f}(\state')\land \transMat(\state, \action, \state') > 0 \land \transMat(\state, \action, D) = 1$.
\item $\exists \state' \in D \centerdot \exists \action' \in \actions(\state) \centerdot \mathtt{b}(\state) > \mathtt{b}(\state') \land \transMat(\state', \action', \state) > 0 \land \transMat(\state', \action', D) = 1$.
\end{enumerate}
\label{lemma:ec-certificate}
The sub-MDP $\mdp[D]$ is strongly connected iff there is an EC certificate for $D$.
\end{lemma}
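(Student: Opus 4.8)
The plan is to read an EC certificate $(\mathtt{f},\mathtt{b})$ as a pair of ranking functions pointing at a common root $\state^*$: the function $\mathtt{f}$ should witness that every state of $D$ reaches $\state^*$ inside $\mdp[D]$, and $\mathtt{b}$ that $\state^*$ reaches every state of $D$ inside $\mdp[D]$. The one bookkeeping fact I would record first is that, by the definition of the induced sub-MDP, the side condition $\transMat(\state,\action,D)=1$ says exactly that $(\state,\action)$ is a state-action pair of $\mdp[D]$; hence a finite sequence $\state_0\,\action_0\,\state_1\,\action_1\cdots\state_n$ with all $\state_i\in D$, all $\transMat(\state_i,\action_i,D)=1$, and all $\transMat(\state_i,\action_i,\state_{i+1})>0$ is precisely a path of $\mdp[D]$. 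With that observation, the claim is the familiar graph fact that a nonempty digraph is strongly connected iff it has a vertex reachable from, and reaching, every other vertex.

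For the direction from a certificate to strong connectedness, I would first note that the first condition forces $\mathtt{f}(\state)\geq 1$ for every $\state\neq\state^*$ (it asks for a strictly smaller nonnegative value), so $\state^*$ is the unique zero of $\mathtt{f}$; symmetrically for $\mathtt{b}$. Then, by strong induction on $\mathtt{f}(\state)$, every $\state\in D$ has a path to $\state^*$ in $\mdp[D]$: if $\mathtt{f}(\state)=0$ then $\state=\state^*$, and otherwise the first condition hands us an edge $\state\,\action\,\state'$ of $\mdp[D]$ with $\mathtt{f}(\state')<\mathtt{f}(\state)$, to which the induction hypothesis applies. A mirror-image strong induction on $\mathtt{b}(\state)$, using the second condition to produce a final edge $\state'\,\action'\,\state$ of $\mdp[D]$ with $\mathtt{b}(\state')<\mathtt{b}(\state)$, shows $\state^*$ reaches every state of $D$. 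Concatenating a path from $u$ to $\state^*$ with a path from $\state^*$ to $v$ then connects any $u,v\in D$ inside $\mdp[D]$, giving strong connectedness.

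For the converse, assuming $\mdp[D]$ is strongly connected (so $D\neq\emptyset$), I would pick an arbitrary $\state^*\in D$ and let $\mathtt{f}(\state)$ (resp. $\mathtt{b}(\state)$) be the length of a shortest path in $\mdp[D]$ from $\state$ to $\state^*$ (resp. from $\state^*$ to $\state$); these lengths are finite by strong connectedness, lie in $\naturals$, and vanish only at $\state^*$, so $\state^*$ is the required unique joint zero. For $\state\neq\state^*$, the first step of a shortest path from $\state$ to $\state^*$ is an edge $\state\,\action\,\state'$ of $\mdp[D]$ with $\mathtt{f}(\state')\leq\mathtt{f}(\state)-1<\mathtt{f}(\state)$, which is exactly the first condition, and the last step of a shortest path from $\state^*$ to $\state$ symmetrically yields the second condition; hence $(\mathtt{f},\mathtt{b})$ is an EC certificate. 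I do not anticipate a genuine obstacle: the content is the standard strong-connectedness characterization, and the only care needed is to keep all witnessing edges inside $\mdp[D]$ via the $\transMat(\cdot,\cdot,D)=1$ clauses and to notice that the strict-decrease requirements already pin $\state^*$ down as the unique zero of both functions, so the uniqueness clause of the definition comes for free; the degenerate case $D=\{\state^*\}$ is immediate in both directions.
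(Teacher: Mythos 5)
Your proof is correct and is exactly the argument the paper intends: the lemma is cited from prior work without a reproduced proof, but the paper's stated intuition --- $\mathtt{f}$ and $\mathtt{b}$ witness that every state reaches $\state^*$ and that $\state^*$ reaches every state, both computable by BFS --- is precisely your ranking-function induction for one direction and your shortest-path-distance construction for the other. You also (rightly) read condition 2) as requiring $\action' \in \actions(\state')$ rather than $\actions(\state)$, which is the intended meaning of the transition $\transMat(\state',\action',\state)$.
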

Intuitively, $\mathtt{f}$ shows that all states in $D$ can reach $\state^*$, and $\mathtt{b}$ that $\state^*$ can reach all states in $D$. They can both be computed via breadth-first search (BFS).
We call $D$ \emph{trivial} if it is a singleton $\{\state\}$ and there is no action $\action \in \actions(\state)$ with $\transMat(\state, \action, \state) = 1$.
To certify a satisfying EC $\ec$, we return an EC certificate $(\mathtt{f}, \mathtt{b}) \in \naturals^{\states(\ec)} \times \naturals^{\states(\ec)}$ for $\states(\ec)$. It is validated by checking that it is non-trivial and for all $i \in [k]$ there is $(F, E) \in \pairs_i$ such that $\states(\ec) \cap F \neq \emptyset$ and $\states(\ec) \subseteq E$.
We can also certify the \emph{MEC decomposition}, i.e. given a partition $\mathcal{D} \subseteq 2^\states$, we define a certificate that proves that $\mathcal{D} = \mathcal{D}_{\MECS}$. To this end, we combine the EC certificates with the certificates for positive reachability from \cite{chatterjee_fixed_2025}.
\begin{restatable}[MEC certificate]{proposition}{mecCertificate}
An \emph{MEC certificate} for a partition $\partition$ is a tuple $(\mathtt{ECS}, \mathtt{r})$ where $\mathtt{ECS}$ contains \emph{EC certificates} for all $D \in \partition$ and $\mathtt{r} \in \naturals^\partition$ such that $\mathtt{r}(\mecClass[\partition]{\state}) \geq 1 + \min_{\state' \in \supp{\transMat(\state, \action)}} \mathtt{r}(\mecClass[\partition]{\state'})$ for all $(\state, \action) \in \SA_\mathcal{D}$.
We have $\partition = \partition_{\MECS}$ if and only if there exists an MEC certificate for $\partition$.
\label{lemma:mec-certificate}
\end{restatable}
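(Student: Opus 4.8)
The plan is to establish the two directions of the equivalence separately, reusing \Cref{lemma:ec-certificate} as a black box for the ``each $D \in \partition$ is an EC'' part and combining it with the positive-reachability certificate $\mathtt{r}$ to rule out both possible ways in which $\partition$ could fail to equal $\partition_{\MECS}$. The key observation is that $\partition = \partition_{\MECS}$ fails in exactly one of two ways: either some class $D \in \partition$ is not itself an end component (it is too big, or not strongly connected internally, or not closed), or some class $D$ \emph{is} an EC but is not \emph{maximal}, i.e. it can be extended. The first failure mode is exactly what the EC certificates in $\mathtt{ECS}$ preclude, via \Cref{lemma:ec-certificate} applied to each $D$ (together with checking closedness $\transMat(\state,\action,D)=1$, which is built into the EC-certificate conditions). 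The second failure mode is what $\mathtt{r}$ precludes, and this is the part that uses the ideas from \cite{chatterjee_fixed_2025}.

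For the ``only if'' direction, assume $\partition = \partition_{\MECS}$. Each $D \in \partition$ is then a MEC, hence its induced sub-MDP is strongly connected, so by \Cref{lemma:ec-certificate} an EC certificate for $D$ exists; collect these into $\mathtt{ECS}$. For $\mathtt{r}$, note that every $(\state,\action) \in \SA_\partition$ is a state-action pair leaving the MEC of $\state$; I would define $\mathtt{r}(\mecClass[\partition]{\state})$ to be (one plus) the length of the longest path in the MEC-quotient DAG from $\mecClass[\partition]{\state}$ that only uses such ``leaving'' edges, or more simply the rank in a reverse-topological order of the quotient. Because the MEC quotient is acyclic when restricted to inter-class transitions, such a finite rank function exists, and by construction for each leaving pair $(\state,\action)$ there is a successor $\state' \in \supp{\transMat(\state,\action)}$ with $\mecClass[\partition]{\state'}$ strictly lower rank; hence $\mathtt{r}(\mecClass[\partition]{\state}) \geq 1 + \min_{\state' \in \supp{\transMat(\state,\action)}}\mathtt{r}(\mecClass[\partition]{\state'})$ holds. (One must be mildly careful that $\min$, not $\max$, appears: the certificate only demands \emph{some} successor decreasing, which is the natural ``can leave towards a smaller class'' condition.)

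For the ``if'' direction, suppose an MEC certificate $(\mathtt{ECS}, \mathtt{r})$ for $\partition$ exists. From the EC certificates and \Cref{lemma:ec-certificate}, each $D \in \partition$ is a (closed, strongly connected) end component, so every class of $\partition$ is contained in \emph{some} MEC; equivalently $\partition$ refines $\partition_{\MECS}$. It remains to show no genuine coarsening is possible, i.e. $\partition$ does not split a MEC into several classes. Suppose for contradiction that some MEC $\mec$ of $\mdp$ is split, so there are at least two classes $D, D' \subseteq \states(\mec)$. Since $\mec$ is strongly connected, there is a path inside $\mec$ from $D$ back to $D$ that visits $D'$; following this path, each step that changes class is a pair in $\SA_\partition$ (it leaves the current class but stays inside $\mec$, hence $\transMat(\state,\action,\mecClass[\partition]{\state}) < 1$). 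The $\mathtt{r}$-inequality then forces $\mathtt{r}$ to be non-increasing along class changes and to \emph{strictly} decrease along at least one of them in every cycle of class changes --- but the path returns to the class $D$ it started from, giving $\mathtt{r}(D) \geq 1 + \dots \geq 1 + \mathtt{r}(D)$ after going around, a contradiction. (Making the ``strictly decrease somewhere in the cycle'' step precise is where one has to argue carefully, since the inequality uses $\min$: the point is that along the chosen internal cyclic path through $\mec$ one can always pick, at each class-leaving step, the successor realizing the $\min$ and remaining inside $\mec$, so $\mathtt{r}$ is non-increasing; and since $D'\neq D$ lies on the cycle while a strongly connected graph with $\geq 2$ nodes has no constant-level cut, some step is strict.) Hence no MEC is split, so $\partition = \partition_{\MECS}$.

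\textbf{Main obstacle.} The routine parts are the bookkeeping around closedness and the reuse of \Cref{lemma:ec-certificate}. The delicate point is the ``if'' direction's contradiction argument: one must show that the single global rank function $\mathtt{r}$ --- which only asserts a decrease towards \emph{one} successor per leaving pair --- is strong enough to forbid a cycle of class-transitions \emph{that stays within one MEC}. This requires arguing that within a MEC one can always route a returning path so that at every class change the $\min$-successor (the one witnessing the inequality) is itself inside the MEC; this uses strong connectivity of the MEC together with the fact that any state-action pair enabled at a state of $\mec$ whose support meets $\states(\mec)$ and which is ``leaving its $\partition$-class'' still has a within-$\mec$ successor in a lower class along \emph{some} internal route. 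I expect this to be the technical heart of the proof, mirroring the correctness argument for positive-reachability certificates in \cite{chatterjee_fixed_2025}.
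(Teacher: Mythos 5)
Your ``if'' direction is essentially sound, and the delicate point you flag resolves more easily than you suggest: a state-action pair that \emph{belongs to} a MEC $\mec$ satisfies $\transMat(\state,\action,\states(\mec))=1$, so \emph{every} successor class of such a pair lies inside $\states(\mec)$ --- no routing of cyclic paths is needed. Concretely, if $\mec$ were split, every class meeting $\states(\mec)$ is contained in it (a non-trivial strongly connected class is an EC, and MECs are disjoint; trivial classes are singletons), so one may take $D^*$ of minimal $\mathtt{r}$-value among the at least two classes inside $\states(\mec)$; strong connectivity of $\mec$ gives a pair $(\state,\action)\in\mec\cap\SA_\partition$ with $\state\in D^*$, and then $\mathtt{r}(D^*)\geq 1+\min_{\state'\in\supp{\transMat(\state,\action)}}\mathtt{r}(\mecClass[\partition]{\state'})\geq 1+\mathtt{r}(D^*)$, a contradiction. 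This is cleaner than the ``no constant-level cut'' argument you gesture at.

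The genuine gap is in your ``only if'' direction: the construction of $\mathtt{r}$ rests on the claim that the quotient restricted to inter-class transitions is acyclic, and that claim is false. Take states $s,t$ lying in no MEC with $\transMat(s,a,t)=\transMat(s,a,g)=\tfrac12$ and $\transMat(t,b,s)=\transMat(t,b,g)=\tfrac12$ for an absorbing state $g$; then $\{s\}$ and $\{t\}$ are singleton classes of $\partition_{\MECS}$ with quotient edges in both directions, so neither a reverse-topological rank nor a longest-path rank exists (a valid certificate does, e.g. $\mathtt{r}(\{s\})=\mathtt{r}(\{t\})=1$, $\mathtt{r}(\{g\})=0$, but your recipe does not produce it). The correct construction is the attractor-style ranking for \emph{positive minimal reachability}: form the quotient with class-internal actions removed and a fresh trap $\exit$ reachable via a $\tau$-action from every non-trivial class; since $\partition=\partition_{\MECS}$ implies this quotient has no end component other than $\{\exit\}$, the minimal probability of reaching $\exit$ is positive from every class, and the layered computation $A_0=\{\exit\}$, $A_{j+1}=A_j\cup\{D\mid \forall \action\centerdot \supp{\transMat'(D,\action)}\cap A_j\neq\emptyset\}$ exhausts all classes; the resulting levels (shifted by one) satisfy exactly the required inequalities. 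This is how the paper argues: both directions are obtained by translating $\mathtt{r}$ back and forth into the certificate for $\prob^{\min}(\eventually\exit)>0$ from \cite{chatterjee_fixed_2025} on this quotient, combined with the characterization of $\partition_{\MECS}$ as ``each class strongly connected and the quotient EC-free'' from \cite{jantsch_certificates_2022}.
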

The EC certificates show that each class in $\partition$ is strongly connected and the vector $\mathtt{r}$ from \cite[Proposition~1]{chatterjee_fixed_2025} proves that each class is \emph{maximal} in that regard. Intuitively, $\mathtt{r}$ shows that the quotient induced by $\partition$ has no ECs and therefore $\partition$ corresponds to the MEC decomposition. Similar certificates have been presented in \cite{jantsch_certificates_2022}, but instead of $\mathtt{r}$, use a certificate for \emph{expected rewards} to show maximality. The latter needs to be computed via LPs or value iteration, while $\mathtt{r}$ can be determined via a BFS \cite{chatterjee_fixed_2025}, making the generation of our MEC certificates more efficient. From an MEC certificate we can directly obtain the corresponding partition $\partition$, since each EC certificate corresponds to one of its classes. Thus, we sometimes incorporate $\partition$ into the certificate and write $(\mathtt{EC}, \mathtt{r}, \partition)$.

Now we certify the \emph{absence} of an EC $\ec$ that satisfies all Rabin properties at the same time, i.e. there is no EC $\ec$ with $\ec \modelsRabin \pairs_i$ for all $i \in [k]$.
An EC $\ec$ that satisfies a combination $(F_1, E_1) \in \pairs_1, \ldots, (F_k, E_k) \in \pairs_k$ may only contain states from $\bigcap_{i=1}^k E_i$. Thus, we consider the induced sub-MDP and show that each of its MECs violates the conditions that states from $F_i$ need to be contained. %
\begin{definition}[Absence certificate]
An \emph{absence certificate} $\mathtt{Abs}$ for Rabin properties $\pairs_1, \ldots, \pairs_k$ is a set of \emph{MEC certificates}. For all combinations $(F_1, E_1) \in \pairs_1, \ldots, (F_k, E_k) \in \pairs_k$, there is $(\mathtt{EC}, \mathtt{r}, \partition) \in \mathtt{Abs}$ for the sub-MDP induced by $\bigcap_{i=1}^k E_i$, such that each $D \in \partition$ is trivial or $D \intersection F_i = \emptyset$ for some $i \in [k]$.
\end{definition}
\begin{restatable}{proposition}{absenceCertificate}
\label{proposition:absence-certificate}
There does not exist an EC $\ec$ in $\mdp$ with $\Conj_{i=1}^k \ec \modelsRabin \pairs_i$ if and only if there exists an \emph{absence certificate} $\mathtt{Abs}$ for $\pairs_1, \ldots, \pairs_k$.
\end{restatable}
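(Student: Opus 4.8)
The plan is to reduce the proposition to the MEC-certificate characterisation of \Cref{lemma:mec-certificate}, applied to suitable sub-MDPs. For a choice $\vect{c} = ((F_1, E_1), \dots, (F_k, E_k))$ of Rabin pairs with $(F_i, E_i) \in \pairs_i$ for all $i \in [k]$, write $E_{\vect{c}} = \bigcap_{i=1}^k E_i$ and consider the sub-MDP $\mdp[E_{\vect{c}}]$. The combinatorial core that I would prove first is the equivalence: there is an EC $\ec$ of $\mdp$ with $\ec \modelsRabin \pairs_i$ for all $i$ if and only if there is a combination $\vect{c}$ and a \emph{non-trivial} MEC $\mec$ of $\mdp[E_{\vect{c}}]$ with $\states(\mec) \intersection F_i \neq \emptyset$ for all $i \in [k]$. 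For the forward direction, unfolding the definition of $\modelsRabin$ yields, for each $i$, a pair $(F_i, E_i) \in \pairs_i$ with $\states(\ec) \intersection F_i \neq \emptyset$ and $\states(\ec) \subseteq E_i$; hence $\states(\ec) \subseteq E_{\vect{c}}$, and since every transition taken in $\ec$ stays inside $\states(\ec)$, the set $\ec$ is already an EC of $\mdp[E_{\vect{c}}]$, so it lies in some (necessarily non-trivial) MEC $\mec$ of $\mdp[E_{\vect{c}}]$, which still meets each $F_i$. For the backward direction, such a non-trivial MEC $\mec$ of $\mdp[E_{\vect{c}}]$ is itself an EC of $\mdp$ with $\states(\mec) \subseteq E_{\vect{c}} \subseteq E_i$ and $\states(\mec) \intersection F_i \neq \emptyset$, i.e.\ $\mec \modelsRabin \pairs_i$ for all $i$.

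Next I would connect this to the MEC decomposition of $\mdp[E_{\vect{c}}]$. The key observation is that a class $D$ of $\partition_{\MECS}(\mdp[E_{\vect{c}}])$ is non-trivial exactly when it is the state set of a MEC of $\mdp[E_{\vect{c}}]$: a state that lies in no MEC of $\mdp[E_{\vect{c}}]$ — in particular, a state all of whose enabled actions leave $E_{\vect{c}}$ — is placed in a trivial singleton, and conversely the state set of any MEC is non-trivial. Combining this with the equivalence above gives the reformulation: no EC of $\mdp$ satisfies all of $\pairs_1, \dots, \pairs_k$ simultaneously if and only if, for every combination $\vect{c}$, every class $D \in \partition_{\MECS}(\mdp[E_{\vect{c}}])$ is trivial or $D \intersection F_i = \emptyset$ for some $i$. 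By \Cref{lemma:mec-certificate} applied to the MDP $\mdp[E_{\vect{c}}]$ (which is itself an MDP, possibly with deadlock states, each forming a trivial singleton), a partition $\partition$ admits an MEC certificate $(\mathtt{EC}, \mathtt{r}, \partition)$ for $\mdp[E_{\vect{c}}]$ iff $\partition = \partition_{\MECS}(\mdp[E_{\vect{c}}])$. Hence, for the ``only if'' direction of the proposition, I would take for each of the finitely many combinations $\vect{c}$ such an MEC certificate (which exists since $\partition_{\MECS}(\mdp[E_{\vect{c}}])$ does admit one) and collect them into a set $\mathtt{Abs}$; the reformulation then guarantees the class-wise property required of an absence certificate. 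Conversely, given an absence certificate $\mathtt{Abs}$, for each $\vect{c}$ the contained MEC certificate for $\mdp[E_{\vect{c}}]$ forces its partition to equal $\partition_{\MECS}(\mdp[E_{\vect{c}}])$, so by the class-wise property no non-trivial MEC of $\mdp[E_{\vect{c}}]$ meets every $F_i$, and by the equivalence no EC of $\mdp$ satisfies all $\pairs_i$.

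I expect the main obstacle to be the careful bookkeeping around the two layers of restriction: passing from the satisfaction relation $\modelsRabin$ (a disjunction over Rabin pairs) to the intersection state set $E_{\vect{c}}$, so that ``$\ec \modelsRabin \pairs_i$ for all $i$'' becomes ``some non-trivial MEC of $\mdp[E_{\vect{c}}]$ meets every $F_i$''; and then relating the triviality of a class of $\partition_{\MECS}(\mdp[E_{\vect{c}}])$ to it being a genuine MEC — all while checking that \Cref{lemma:mec-certificate} (and the EC certificates of \Cref{lemma:ec-certificate} it contains) is legitimately applied to $\mdp[E_{\vect{c}}]$, which need not have an enabled action in every state. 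The remaining ingredients — unfolding $\modelsRabin$, the standard fact that every EC of a finite MDP is contained in a unique MEC, and that ECs of $\mdp[D]$ are ECs of $\mdp$ — are routine and I would invoke them without detailed argument.
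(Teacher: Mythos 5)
Your proposal is correct and follows essentially the same route as the paper's proof: for each combination of Rabin pairs, pass to the sub-MDP induced by $\bigcap_i E_i$, invoke the soundness and completeness of MEC certificates (\Cref{lemma:mec-certificate}) on that sub-MDP, and observe that an EC of $\mdp$ satisfying all $\pairs_i$ would have to sit inside a non-trivial class of its MEC partition meeting every $F_i$. Your write-up is in fact more careful than the paper's on the two points it glosses over — the identification of non-trivial classes with genuine MECs, and the applicability of the MEC certificate to sub-MDPs with deadlock states.
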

\begin{example}
Recall the MDP $\mdp$ and properties $\varphi_1, \varphi_2$ from \Cref{example:reduction}. To certify that the partition $\partition = \{\{\state_0\}, \{\state_1, \state_2\}, \{\state_3, \state_4\}\}$ corresponds to $\partition_{\MECS}$, we show that $\{\state_1, \state_2\}$ is an EC, by using the EC certificate $(\mathtt{f}, \mathtt{b})$ with $\mathtt{f}(\state_1) = \mathtt{b}(\state_1) = 0$ and $\mathtt{f}(\state_2) = \mathtt{b}(\state_2) = 1$ and similarly for $\{\state_3, \state_4\}$. Lastly, $\mathtt{r} = (2, 1, 1)$ demonstrates the maximality of $\partition$. To certify that $\varphi_1$ and $\varphi_2$ cannot be simultaneously satisfied in $\{\state_1, \state_2\}$, we consider all combinations of Rabin pairs of $\pairs_1 = \{(\{\state_1\}, \states), (\states, \{\state_4\})\}$ and $\pairs_2 = \{(\states, \{\state_2, \state_3, \state_4\})\}$. For instance, for $(\{\state_1\}, \states)$ and $(\states, \{\state_2, \state_3, \state_4\})$ we provide a MEC certificate for the sub-MDP induced by $\state_2$ which then certifies that such combination is not satisfied.
\end{example}

The size of the EC and MEC certificates are linear in the size of the MDP $\card{\mdp}$. In contrast, the size of the absence certificates is linear in $\card{\mdp}$, but exponential in the number of properties $k$. Below we state that deciding the existence of a satisfying EC is \textsf{NP}-complete. Thus, unless $\textsf{NP} = \textsf{coNP}$, small certificates for the absence of a satisfying EC can generally not be hoped for.

\begin{restatable}{lemma}{hardness} \todo{CC: I could not find any literature on that, so I proved it in the appendix.}
For an MDP $\mdp$ and Rabin properties $\pairs_1, \ldots, \pairs_k$, deciding the existence of an EC $\ec$ in $\mdp$ with $\ec \modelsRabin \pairs_i$ for all $i \in [k]$ is \textsf{NP}-complete.
\end{restatable}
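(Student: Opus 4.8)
\emph{Membership in \textsf{NP}.} The plan is to use a witnessing end component as the certificate. Given $\mdp$ and $\pairs_1,\dots,\pairs_k$, guess a set $\ec \subseteq \SA_\mdp$; note $\card{\ec} \le \card{\SA_\mdp}$, so this is polynomial. Then verify in polynomial time that (i) $\supp{\transMat(\state,\action)} \subseteq \states(\ec)$ for every $(\state,\action)\in\ec$ and that the directed graph on $\states(\ec)$ with an edge $\state \to \state'$ whenever some $(\state,\action)\in\ec$ has $\transMat(\state,\action,\state')>0$ is strongly connected (so $\ec$ is an EC); and (ii) for each $i\in[k]$ there is $(F,E)\in\pairs_i$ with $\states(\ec)\cap F\neq\emptyset$ and $\states(\ec)\subseteq E$ (so $\ec\modelsRabin\pairs_i$). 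Step (ii) is polynomial because each $\pairs_i$ is given explicitly. Hence the problem is in \textsf{NP}.

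\emph{\textsf{NP}-hardness.} I would reduce from the satisfiability problem for CNF formulas. Given $\phi = C_1\wedge\dots\wedge C_m$ over $x_1,\dots,x_n$, build a ``star'' MDP $\mdp$ with a hub state $h$ (take $\init=h$; the initial state is irrelevant for ECs) and, for each variable $x_i$, two petal states $u_i$ (``$x_i$ true'') and $v_i$ (``$x_i$ false''). From $h$ there are $2n$ actions, the $i$-th ``true'' action going to $u_i$ with probability $1$ and the $i$-th ``false'' action going to $v_i$ with probability $1$; from every petal state the unique action goes back to $h$ with probability $1$. The key structural fact, which I would prove first, is that the ECs of $\mdp$ are exactly the sets whose state set is $\{h\}\cup P$ for a nonempty $P\subseteq\{u_1,v_1,\dots,u_n,v_n\}$: any EC must contain $h$ (a petal's only action leaves any petal-only set), cannot have state set $\{h\}$ alone (every action of $h$ leaves $\{h\}$), and every $\{h\}\cup P$ is a strongly connected EC since $h$ reaches each petal in $P$ directly and each petal returns to $h$. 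So choosing an EC is exactly choosing a nonempty set $P$ of petals.

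It remains to install Rabin properties forcing $P$ to encode a satisfying assignment. For each clause $C_j$, let $\pairs_j^{\mathrm{cl}} = \{(F_j,\states)\}$ where $F_j$ is the set of petal states of the literals of $C_j$ ($u_p$ for $x_p\in C_j$, $v_q$ for $\neg x_q\in C_j$); then $\{h\}\cup P \modelsRabin \pairs_j^{\mathrm{cl}}$ iff $P$ meets $F_j$, i.e. $P$ contains a satisfying literal state of $C_j$. For each variable $x_i$, let $\pairs_i^{\mathrm{var}} = \{(\{h\},\states\setminus\{v_i\}),\ (\{h\},\states\setminus\{u_i\})\}$; since every EC contains $h$, we have $\{h\}\cup P\modelsRabin\pairs_i^{\mathrm{var}}$ iff $u_i\notin P$ or $v_i\notin P$, i.e. $P$ keeps at most one petal of $x_i$. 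The family of $k = m+n$ properties $\{\pairs_j^{\mathrm{cl}}\}_j\cup\{\pairs_i^{\mathrm{var}}\}_i$ is polynomial-time computable, and by the above $\mdp$ has an EC $\ec$ with $\ec\modelsRabin\pairs$ for all of them iff there is a nonempty petal set $P$ that contains a satisfying literal of every clause and at most one petal per variable; reading $P$ as a partial assignment (which extends to a total satisfying one) and conversely turning any satisfying assignment into the obvious petal set, this holds iff $\phi$ is satisfiable.

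\emph{Main obstacle.} Membership is routine; the work is in the hardness direction, and within it in ruling out the ``trivial'' EC that keeps every petal state (which would satisfy all clause properties irrespective of $\phi$). The construction overcomes this by combining two ingredients: the nondeterminism at the hub, which lets an EC freely choose which petals to retain, and the $E$-components of Rabin pairs used as forbidden-state constraints, which let the consistency properties $\pairs_i^{\mathrm{var}}$ outlaw keeping both $u_i$ and $v_i$. Verifying that these interact exactly as intended — in particular the structural characterization of the ECs of the star MDP — is the crux; everything else is bookkeeping.
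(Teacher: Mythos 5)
Your proof is correct and follows essentially the same route as the paper: \textsf{NP} membership by guessing and checking the EC, and hardness by a reduction from SAT via a star-shaped MDP whose ECs correspond exactly to the sets of ``literal'' states retained around a hub, with $F$-components forcing inclusion and $E$-components forcing exclusion. The only difference is the gadget: the paper uses a single state per variable (membership in the EC \emph{is} the truth value) and encodes each clause as a set of Rabin pairs $\{(\{x\},\states)\mid x\in C_i\}\cup\{(\states,\states\setminus\{x\})\mid \neg x\in C_i\}$, which makes your extra consistency properties $\pairs_i^{\mathrm{var}}$ unnecessary, whereas your two-petal encoding needs them but is equally valid.
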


\section{Certificates for Rabin and Streett Queries}
\label{subsection:certificate-rabin-streett}
Before we can define the certificates for Rabin and Streett queries, we revisit the certificates for multi-objective reachability queries from \cite{baier_certificates_2024}. In particular, we discuss how the algorithm from \cite{forejt_pareto_2012} can be modified to compute them, which has not been considered in \cite{baier_certificates_2024}. An MDP $\mdp = (\states \union \targets, \actions, \init, \transMat)$ is said to be in \emph{reachability form} \cite{funke_farkas_2020,jantsch_certificates_2022} if $F$ contains all absorbing states and all states can reach $F$. For such MDPs, we redefine $\SA$ to exclude state-action pairs of $F$.
The \emph{system matrix} \cite{funke_farkas_2020,jantsch_certificates_2022} $\SM \in \reals^{\SA \times \states}$ of $\mdp$ is defined by $\SM((\state, \action), t) = \vect{1}_{\state}(t) - \transMat(\state, \action, t)$ for all $(\state, \action) \in \SA$ and $t \in \states$. For $\targetSet_1, \ldots, \targetSet_k \subseteq F$, let $\TM \in \reals^{\SA \times [k]}$ be defined by $\TM((\state, \action), i) = \transMat(\state, \action, \targetSet_i)$ for all $(\state, \action) \in \SA$ and $i \in [k]$. %
\begin{lemma}[Farkas certificates \cite{baier_certificates_2024}] Let $\mdp = (\states \union \targets, \actions, \init, \transMat)$ be an MDP in reachability form. Further, let $\targetSet_1, \ldots, \targetSet_k \subseteq F$. Then, we have:
{
\begin{enumerate}[1), align=left, leftmargin=*, itemsep=0mm, topsep=0.8mm, parsep=0mm]
\item $\mdp \models \exists \scheduler \centerdot \Conj_{i = 1}^k \queryProb_{\triangleright \lambda_i}(\eventually \targetSet_i) \Leftrightarrow \exists \vect{y} \in \realsnn^\SA \centerdot \SM^\top \vect{y} \leq \initDistr \land \TM^\top \vect{y} \triangleright \boldLambda$.
\end{enumerate}
Further, if all ECs in $\mdp$ are formed by states in $F$:
\begin{enumerate}[1), align=left, leftmargin=*, itemsep=0mm, topsep=0.8mm, parsep=0mm]
\setcounter{enumi}{1}
\item $\mdp \models \forall \scheduler \centerdot \Disj_{i = 1}^k \queryProb_{\triangleright \lambda_i}(\eventually \targetSet_i) \Leftrightarrow \exists \vect{x} \in \reals^\states \centerdot \exists \vect{z} \in \realsnn^{[k]} \centerdot \SM \vect{x} \leq \TM \vect{z} \land \vect{x}(\init) \triangleright \pmb{\lambda}^\top \vect{z}$.
\end{enumerate}
}%
\label{lemma:reach-certs}
\end{lemma}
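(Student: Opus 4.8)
The plan is to derive statement~1 directly from the linear-programming characterization of (multi-objective) reachability probabilities in MDPs \cite{etessami_multi-objective_2008,forejt_quantitative_2011}, which already underlies the single-constraint Farkas certificates of \cite{funke_farkas_2020,jantsch_certificates_2022}, and to obtain statement~2 as the LP-dual (via Farkas' lemma) of statement~1 applied to the \emph{complemented} query. Throughout, recall that in reachability form $\initDistr = \vect{1}_\init$, that $(\SM^\top \vect y)(t)$ equals the flow out of $t$ minus the flow into $t$ from enabled non-target pairs, and that $(\TM^\top \vect y)(i)$ equals the total flow entering $\targetSet_i$.

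\textbf{Statement 1.}
For ``$\Leftarrow$'', given a feasible $\vect y \in \realsnn^\SA$ I would take the memoryless randomized scheduler that in state $\state$ picks action $\action$ with probability proportional to $\vect y(\state,\action)$ (and arbitrarily when all these are $0$); a standard transient-flow computation, using $\SM^\top \vect y \leq \initDistr$ as a flow-conservation inequality, shows that this scheduler reaches $\targetSet_i$ with probability at least $(\TM^\top \vect y)(i)$, so the bounds $\triangleright \lambda_i$ are met. For ``$\Rightarrow$'', given a scheduler meeting the bounds, I would first replace it by a memoryless randomized one meeting the same lower bounds (such schedulers exist for conjunctions of lower-bounded reachability objectives \cite{etessami_multi-objective_2008}), and then let $\vect y(\state,\action)$ be the expected number of times $(\state,\action)$ is used before $F$ is reached, restricting attention to the part of the state space from which $F$ is reached almost surely; the flow carried by any remaining end component is irrelevant, as it contributes $0$ both to $\TM^\top \vect y$ and to every target probability, which is exactly why statement~1 needs only reachability form. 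One then checks $\SM^\top \vect y \leq \initDistr$ and $(\TM^\top \vect y)(i) = \prob_{\mdp,\init}^\scheduler(\eventually \targetSet_i) \triangleright \lambda_i$.

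\textbf{Statement 2.}
By definition $\mdp \models \forall \scheduler \centerdot \Disj_i \queryProb_{\triangleright \lambda_i}(\eventually \targetSet_i)$ \emph{fails} iff some scheduler $\scheduler$ satisfies, for every $i$, that $\prob^\scheduler(\eventually \targetSet_i)$ lies below $\lambda_i$ (strictly if $\triangleright$ is ``$\geq$'', non-strictly if ``$>$''). Since every end component of $\mdp$ lies in $F$, every scheduler reaches $F$ almost surely, hence $\prob^\scheduler(\eventually \targetSet_i) = 1 - \prob^\scheduler(\eventually (F \setminus \targetSet_i))$ for $\targetSet_i \subseteq F$, which turns the failure condition into a conjunction of (strict or non-strict) \emph{lower}-bounded reachability constraints to the complementary targets $F \setminus \targetSet_i$ with thresholds $1 - \lambda_i$; by statement~1 (and its strict variant, obtained by using $\triangleright = {>}$) such a scheduler exists iff the system $\vect y \geq \vect 0$, $\SM^\top \vect y \leq \initDistr$, $\widetilde{\TM}^\top \vect y \mathrel{\compBowtie} (\vect 1 - \boldLambda)$ is feasible, where $\widetilde{\TM}((\state,\action),i) = \transMat(\state,\action,F\setminus\targetSet_i)$. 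Thus the $\universalDQ$-query holds iff this system is \emph{infeasible}; applying the appropriate Farkas/Motzkin transposition theorem, infeasibility is equivalent to the existence of multipliers $\vect u \geq \vect 0$ and $\vect z \geq \vect 0$ with $\SM \vect u \geq \widetilde{\TM}\vect z$ and a matching threshold inequality between $\vect u(\init)$ and $(\vect 1 - \boldLambda)^\top \vect z$ (together with a non-vanishing side condition in the non-strict case). Finally, using the reachability-form identities $\widetilde{\TM} = (\SM \vect 1)\,\vect 1^\top - \TM$, where $\vect 1$ denotes the all-ones vector of the appropriate dimension and $(\SM\vect 1)((\state,\action)) = \transMat(\state,\action,F)$, together with $\initDistr^\top \vect u = \vect u(\init)$, the substitution $\vect x \coloneqq (\vect 1^\top \vect z)\,\vect 1 - \vect u \in \reals^\states$ rewrites these conditions as $\SM \vect x \leq \TM \vect z$ and $\vect x(\init) \triangleright \boldLambda^\top \vect z$; note that $\vect x$ need not be non-negative, which explains the signature $\vect x \in \reals^\states$.

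\textbf{Main obstacle.}
I expect the real work to be bookkeeping with inequalities rather than any conceptual difficulty: handling the strict-versus-non-strict cases under complementation (which forces the use of Motzkin's transposition theorem and makes the separating vector $\vect z$ necessarily nonzero, i.e. w.l.o.g. a distribution, precisely in the ``$\geq$'' case), and carrying out the algebraic rewriting of the dual multipliers into the stated closed form via the structural identities of $\SM$ and $\TM$ in reachability form. The probabilistic ingredients—sufficiency of memoryless randomized schedulers for conjunctive lower-bounded reachability, and the flow/transient-equation dictionary—are standard, and the end-component hypothesis of statement~2 is invoked exactly where needed: to ensure that $F$ is reached almost surely (hence the complement identity $\prob^\scheduler(\eventually \targetSet_i) + \prob^\scheduler(\eventually (F \setminus \targetSet_i)) = 1$) and that the underlying flow polyhedron is bounded.
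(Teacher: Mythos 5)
The paper does not prove this lemma itself---it imports it verbatim from \cite{baier_certificates_2024}---so there is no in-paper proof to compare against; judged on its own terms, your reconstruction follows the standard route (LP characterization of multi-objective reachability for statement~1, then complementation plus Farkas/Motzkin transposition for statement~2) and the key steps are sound: the flow-to-scheduler and scheduler-to-expected-visiting-times correspondence for the $\existsCQ$ direction, the observation that mass trapped in end components contributes nothing and can be redirected so that the visiting times are finite, and the algebraic rewriting $\widetilde{\TM} = (\SM\vect{1})\vect{1}^\top - \TM$, $\vect{x} = (\vect{1}^\top\vect{z})\vect{1} - \vect{u}$ is correct. One point deserves emphasis: as you implicitly observe via the ``non-vanishing side condition,'' the closed form stated here for the $\universalDQ$ case is too weak as written when $\triangleright$ is $\geq$, since $(\vect{x},\vect{z}) = (\vect{0},\vect{0})$ always satisfies $\SM\vect{x} \leq \TM\vect{z}$ and $\vect{x}(\init) \geq \boldLambda^\top\vect{z}$; the Motzkin dual of the strict primal system genuinely forces $\vect{z} \neq \vect{0}$ (w.l.o.g.\ $\vect{1}^\top\vect{z} = 1$), and that normalization must be carried into the certificate for the $\Leftarrow$ direction to hold. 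So your proof establishes the corrected statement rather than the literal one, and you should state the normalization on $\vect{z}$ explicitly rather than leaving it as a remark. The only other place needing more care is the identity $\prob^{\scheduler}(\eventually \targetSet_i) = 1 - \prob^{\scheduler}(\eventually(F\setminus\targetSet_i))$, which relies on the reachability-form convention that states in $F$ are terminal (their state-action pairs are excluded from $\SA$), so that the two events are disjoint; this is worth saying explicitly since it is exactly where the hypothesis that all ECs lie in $F$ is consumed.
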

The vector $\vect{y}$ can be interpreted as the expected number of times a state-action pair is visited under a witness scheduler and $\vect{x}$ and $\vect{z}$ are solutions to a \emph{weighted reachability problem}.\todo{So $\vect{z}$ is a means to distribute the probability on the different targets?} See \cite{baier_certificates_2024} for more details. Given such solutions, we can easily validate whether a query is indeed satisfied, by checking if the inequalities are satisfied. Let $\mathcal{F}_\triangleright(\mdp, \{\targetSet_i\}_{i \in [k]}, \boldLambda)$ and $\mathcal{H}_\triangleright(\mdp, \{\targetSet_i\}_{i \in [k]}, \boldLambda)$ denote the polyhedra formed by the conditions for $\existsCQ$- and $\universalDQ$-queries in \Cref{lemma:reach-certs}, respectively.

Multi-objective reachability queries can be verified via LP \cite{etessami_multi-objective_2008}, but the value (or policy) iteration algorithm from \cite{forejt_pareto_2012} has been observed to be more efficient in practice \cite{quatmann_verification_2023}. In \cite{baier_certificates_2024} only the computation of the certificates via LP was considered. We briefly outline how the algorithm from \cite{forejt_pareto_2012} can be modified, so that Farkas certificates are produced. Given an MDP $\mdp$ and query $\query = \exists \scheduler \centerdot \Conj_{i = 1}^k \queryProb_{\triangleright \lambda_i}(\eventually \targetSet_i)$, in each iteration, the algorithm from \cite{forejt_pareto_2012} maintains an \emph{under-approximation} $U$ of the set of achievable bounds $\textsf{Ach}$, i.e. the set of all bounds for which $\query$ can be satisfied in $\mdp$. It is checked whether $\boldLambda \notin \mathsf{Ach}$ holds, by solving a weighted reachability problem via value (or policy) iteration. If the result of the check is $\boldLambda \notin \mathsf{Ach}$, then $\mdp \not\models \query$ and the certificate $(\vect{x}, \vect{z})$ essentially arises as by-product. Otherwise, either $U$ is enlarged or it can be determined that $\boldLambda \in U$. In the latter case, the algorithm terminates and concludes that $\mdp \models \query$ holds, but no certificate is generated. To obtain a certificate, we additionally compute a witness scheduler and its expected visiting times, corresponding to the certificate $\vect{y}$. The full modified algorithm is described in \Cref{appendix:certifying-verification-algorithm}.
\begin{remark}
The fixed-point certificates \cite{chatterjee_fixed_2025} for \emph{single-objective reachability} are, in certain cases, similar to Farkas certificates from \cite{funke_farkas_2020,jantsch_certificates_2022}. The former bound the optimal reachability probabilities in all states, while the latter only bound them in the initial state. We compare both certificates and discuss how Farkas certificates can be modified to also bound the probabilities in all states in \Cref{appendix:fixed-point-farkas-comparison}. The modified version is supported by our implementation.
\end{remark}
We are now ready to define the certificates for Rabin and Streett queries. For the remainder of this section, we fix an MDP $\mdp = (\states, \actions, \init, \transMat)$ and queries $\query = \exists \scheduler \centerdot \Conj_{i=1}^k \queryProb_{\triangleright \lambda_i}(\prop_i)$ and $\Phi = \forall \scheduler \centerdot \Disj_{i=1}^k \queryProb_{\triangleright \lambda_i}(\bar{\prop}_i)$ where each $\prop_i$ is a \emph{Rabin property} with pairs $P_i$ and $\bar{\prop}_i$ the \emph{complement Streett property}. We let $\targetSet_i = \{\exit_I \mid i \in I \subseteq [k]\}$ and $\bar{\targetSet}_i = \{\exit_I \mid i \notin I \subseteq [k]\}$ for all $i \in [k]$.
\begin{definition}[$\existsCQ$-certificate]
A \emph{$\existsCQ$-certificate} for $\mdp \models \query$ is a tuple $\left(\partition, \mathcal{I}, \mathtt{MEC}, \allowbreak \mathtt{ECS}, \vect{y}\right)$ where $\partition \subseteq 2^\states$ is a partition and $\mathcal{I} \subseteq \partition \times 2^{[k]}$ such that
\begin{enumerate}[1), itemsep=0mm, topsep=0.8mm, parsep=0mm]
\item $\mathtt{MEC}$ is a \emph{MEC certificate} for partition $\partition$.
\item for all $(D,I) \in \mathcal{I}$ there is $D' \subseteq D$ such that $D'$ is \emph{non-trivial}, $\mathtt{ECS}$ contains an \emph{EC certificate} for $D'$ and $\Conj_{i \in I} \Disj_{(F, E) \in \pairs_i} \left((D' \intersection F \neq \emptyset) \land (D' \subseteq E)\right)$.
\item $\vect{y} \in \mathcal{F}_\triangleright(\mdp_{/ \partition}^\mathcal{I}, \{\targetSet_i\}_{i \in [k]}, \{\lambda_i\}_{i \in [k]})$.
\end{enumerate}
\label{def:exists-certificate}
\end{definition}
\todo{Doesn't the $F,E$ in 2) have to relate to $I$ somehow? Can there be multiple pairs $(D,I_1), (D,I_2) \in \mathcal{I}$ with the same $D$. I think that would not make sense }
\begin{definition}[$\universalDQ$-certificate]
A \emph{$\universalDQ$-certificate} for $\mdp \models \Phi$ is a tuple $\left(\partition, \mathcal{I}, \mathtt{MEC}, \mathtt{ABS}, \vect{x}, \allowbreak \vect{z} \right)$ where $\partition \subseteq 2^\states$ is a partition and $\mathcal{I} \subseteq \partition \times 2^{[k]}$ such that
\begin{enumerate}[1), itemsep=0mm, topsep=0.8mm, parsep=0mm]
\item $\mathtt{MEC}$ is a \emph{MEC certificate} for partition $\partition$.
\item for all nontrivial $D \in \partition$ and $I \subseteq [k]$ with $(D,I) \notin \mathcal{I}$ there is an \emph{absence certificate} for the sub-MDP $\mdp[D]$ and $\{\pairs_i\}_{i \in I}$ in $\mathtt{ABS}$.
\item $(\vect{x}, \vect{z}) \in \mathcal{H}_\triangleright(\mdp_{/ \partition}^\mathcal{I}, \{\bar{\targetSet}_i\}_{i \in [k]}, \{\lambda_i\}_{i \in [k]})$.
\end{enumerate}
\label{def:forall-certificate}
\end{definition}
Recall from \Cref{section:overview} that $\mathcal{I}^* \subseteq \partition_\MECS \times 2^{[k]}$ contains a pair $(\states(\mec), I)$ iff it is possible in an MEC $\mec$ to satisfy the Rabin properties $\{\prop_i\}_{i \in I}$. Due to \Cref{lemma:index}, we need to certify that $\partition = \partition_{\MECS}$ and $\mathcal{I} \subseteq \mathcal{I}^*$ (resp. $\mathcal{I} \supseteq \mathcal{I}^*$) hold. For the former, the MEC certificate $\mathtt{MEC}$ is included in both certificates. For $\existsCQ$-queries, we use the EC certificates in $\mathtt{ECS}$ to show $\mathcal{I} \subseteq \mathcal{I}^*$, by including an EC certificate for each satisfying EC. To certify $\mathcal{I} \supseteq \mathcal{I}^*$ for $\universalDQ$-queries, we show that for each $(D, I) \notin \mathcal{I}$ we have $(D, I) \notin \mathcal{I}^*$. To this end, we include for each $(D, I) \notin \mathcal{I}$ an absence certificate in $\mathtt{ABS}$, showing that it is not possible to satisfy $\{\prop_i\}_{i \in I}$ in $D$ and thus $(D, I) \notin \mathcal{I}^*$. Finally, we use the certificates for multi-objective reachability to certify the satisfaction of the reachability query in the quotient induced by $\partition$ and $\mathcal{I}$. Soundness and completeness are stated below.
\begin{restatable}[Soundness and completeness]{theorem}{soundAndComplete}
\begin{enumerate}[1), itemsep=0mm, topsep=0.8mm, parsep=0mm]
\item $\mdp \models \query$ if and only if there exists a \emph{$\existsCQ$-certificate} for $\query$.
\item $\mdp \models \Phi$ if and only if there exists a \emph{$\universalDQ$-certificate} for $\Phi$.
\end{enumerate}
\end{restatable}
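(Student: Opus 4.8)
\begin{sproof}
The plan is to route both equivalences through \Cref{lemma:index}: a certificate should encode exactly the three facts $\partition = \partition_{\MECS}$, ``$\mathcal{I}$ sits on the right side of $\mathcal{I}^*$'', and ``$\mdp_{/\partition}^{\mathcal{I}} \models \widehat{\query}$ (resp. $\widehat{\Phi}$)'', and then \Cref{lemma:index} lifts the reachability statement about the quotient to the original query. For the first fact, the component $\mathtt{MEC}$ is an MEC certificate, so by \Cref{lemma:mec-certificate} it holds iff $\partition = \partition_{\MECS}$; throughout, the MEC quotient $\mdp_{/\partition}^{\mathcal{I}}$ is brought into reachability form (this is routine: every reachable state reaches some $\exit_I$) and the side condition of \Cref{lemma:reach-certs} for $\universalDQ$-queries holds because collapsing all MECs of $\mdp$ leaves a quotient whose only ECs are the absorbing $\exit_I$-states.

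The technical core is two \emph{monotonicity} observations about varying $\mathcal{I}$ for fixed $\partition = \partition_{\MECS}$. First, if $\mathcal{I} \subseteq \mathcal{I}'$, then $\mdp_{/\partition_{\MECS}}^{\mathcal{I}} \models \widehat{\query} \Rightarrow \mdp_{/\partition_{\MECS}}^{\mathcal{I}'} \models \widehat{\query}$; second, if $\mathcal{I} \supseteq \mathcal{I}'$, then $\mdp_{/\partition_{\MECS}}^{\mathcal{I}} \models \widehat{\Phi} \Rightarrow \mdp_{/\partition_{\MECS}}^{\mathcal{I}'} \models \widehat{\Phi}$. Both follow from the fact that $\mdp_{/\partition_{\MECS}}^{\mathcal{I}}$ and $\mdp_{/\partition_{\MECS}}^{\mathcal{I}'}$ have the same state set and differ only by enabling/disabling some $\tau_I$-actions (and thus by the reachability of the corresponding absorbing $\exit_I$): a scheduler of the variant with fewer $\tau_I$-actions is in particular a scheduler of the larger one that never picks the extra actions, and it induces identical reachability probabilities for every $\targetSet_i$ and $\bar{\targetSet}_i$ in both. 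For $\existsCQ$ this propagates an achieving scheduler upward along $\mathcal{I} \subseteq \mathcal{I}^*$; for $\universalDQ$ it propagates a universally-quantified bound downward along $\mathcal{I}^* \subseteq \mathcal{I}$.

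For soundness ($\Leftarrow$), take the given certificate. By item~2 of \Cref{def:exists-certificate}, each $(D,I) \in \mathcal{I}$ comes with a non-trivial $D' \subseteq D$ that, by \Cref{lemma:ec-certificate}, induces a genuine EC $\ec'$, and the accompanying conjunction says $\ec' \modelsRabin \pairs_i$ for all $i \in I$; since $D$ is a MEC and $D' \subseteq D$ this gives $(D,I) \in \mathcal{I}^*$, hence $\mathcal{I} \subseteq \mathcal{I}^*$. Dually, by item~2 of \Cref{def:forall-certificate}, every $(D,I) \notin \mathcal{I}$ (with $D$ nontrivial) has an absence certificate for $\mdp[D]$, whose ECs are exactly the ECs of $\mdp$ contained in the MEC $D$, so by \Cref{proposition:absence-certificate} no such EC realizes $\{\prop_i\}_{i \in I}$, i.e.\ $(D,I) \notin \mathcal{I}^*$, hence $\mathcal{I}^* \subseteq \mathcal{I}$. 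Item~3 together with \Cref{lemma:reach-certs} gives $\mdp_{/\partition}^{\mathcal{I}} \models \widehat{\query}$ (resp.\ $\widehat{\Phi}$). Combining with the appropriate monotonicity observation yields $\mdp_{/\partition_{\MECS}}^{\mathcal{I}^*} \models \widehat{\query}$ (resp.\ $\widehat{\Phi}$), and \Cref{lemma:index} concludes $\mdp \models \query$ (resp.\ $\mdp \models \Phi$).

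For completeness ($\Rightarrow$), assume $\mdp \models \query$ (resp.\ $\Phi$); by \Cref{lemma:index}, $\mdp_{/\partition_{\MECS}}^{\mathcal{I}^*} \models \widehat{\query}$ (resp.\ $\widehat{\Phi}$). Instantiate $\partition = \partition_{\MECS}$ and $\mathcal{I} = \mathcal{I}^*$: then $\mathtt{MEC}$ exists by \Cref{lemma:mec-certificate}; for every $(D,I) \in \mathcal{I}^*$ the defining witnessing EC $\ec' \subseteq D$ provides (via \Cref{lemma:ec-certificate}) the required non-trivial $D' = \states(\ec')$ with its EC certificate and conjunction; for every $(D,I) \notin \mathcal{I}^*$ the required absence certificate exists by \Cref{proposition:absence-certificate} applied to $\mdp[D]$; and $\vect{y}$ (resp.\ $(\vect{x},\vect{z})$) exists by \Cref{lemma:reach-certs}. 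I expect the main obstacle to be exactly the monotonicity step — nailing down the correspondence between $\mdp_{/\partition}^{\mathcal{I}}$ and $\mdp_{/\partition_{\MECS}}^{\mathcal{I}^*}$ at the level of schedulers and path measures, and discharging the reachability-form and side-condition hypotheses of \Cref{lemma:reach-certs} for the quotient; the EC/absence-certificate $\leftrightarrow$ realizability translations and the $\mathcal{I}$-vs-$\mathcal{I}^*$ bookkeeping are then routine given the cited results.
\end{sproof}
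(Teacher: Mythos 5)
Your proof is correct and follows essentially the same route as the paper: MEC certificate for $\partition = \partition_{\MECS}$, EC/absence certificates for $\mathcal{I} \subseteq \mathcal{I}^*$ (resp.\ $\mathcal{I} \supseteq \mathcal{I}^*$), Farkas certificates for the quotient reachability query, the monotonicity-in-$\mathcal{I}$ lemma (which the paper isolates as a separate appendix lemma), and finally \Cref{lemma:index}. The only cosmetic difference is that you argue the $\universalDQ$-monotonicity directly via scheduler inclusion, while the paper goes by contraposition through the dual $\existsCQ$-query; both are sound.
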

Note that $\partition$ and $\mathcal{I}$ need not be \emph{explicitly} incorporated into the certificates, since they can be directly derived from $\texttt{MEC}$, $\mathtt{ECS}$ and $\mathtt{ABS}$. The size of both certificates is linear in $\card{\mdp}$ and exponential in the number of objectives $k$.

\section{Witnesses for Multi-objective $\omega$-regular Queries}
\label{section:witnesses-mdps}
In this section, we tackle the \emph{witness problem}. Given an MDP $\mathcal{N}$, multi-objective $\omega$-regular query $\query$, a finite \emph{set of labels} $\labels$ and \emph{labeling} $\labeling \colon \states \to 2^\labels$, it asks to find a witnessing subsystem $\mathcal{N}'$ (with states $\states'$) for $\query$, such that $\card{\labeling(\states')} = \card{\Union_{\state \in \states'}\labeling(\state)}$ is \emph{minimal} \cite{funke_farkas_2020,jantsch_certificates_2022}.
The labels and labeling can be used to encode various notions of minimality, e.g. \emph{state-minimality} ($L = \states$ and $\labeling_\mathsf{state}(\state) = \{\state\}$ for all $\state \in \states$).
\todo{Is this a common notion of action-minimality? For existential queries I would expect that to count the actions actually used by the scheduler.}

We can reduce the problem to the \emph{witness problem} for Rabin (resp. Streett) queries, by lifting the labeling $\labeling$ to the product MDP $\mdp = \mathcal{N} \times \automaton_1 \times \cdots \times \automaton_k$ as $\labeling_\mdp(\tuple{\state, \autoState_1, \ldots, \autoState_k}) = \labeling(\state)$ for all product states $\tuple{\state, \autoState_1, \ldots, \autoState_k}$. We obtain a minimal witness for $\mathcal{N}$ by computing a minimal one for $\mdp$ w.r.t. $\labeling_\mdp$ and ignoring the automata components\todo{this is not technically precise, because actually one needs to check which states are present in the subsystem for the product and then take the corresponding ones for $\mdp$}. Thus, it suffices to consider witnesses for multi-objective Rabin (resp. Streett) queries. Our approach is based on the MEC quotient and a reduction to the witness problem for reachability \cite{baier_certificates_2024}, yielding minimal witnesses for $\universalDQ$-queries and specific labelings. For $\existsCQ$-queries, a more fine-grained approach is necessary and is discussed in \Cref{appendix:milp-exists-queries}.
\todo{We should also say here that $\existsCQ$-queries are also considered, but require a more fine grained approach for minimality.}

\medskip

\noindent\textbf{MEC-quotient approach.}
We have used the certificates for reachability queries from \cite{baier_certificates_2024} for our $\existsCQ$- and $\universalDQ$-certificates (\Cref{def:exists-certificate,def:forall-certificate}). Besides that, they can also be used to find witnesses. Let us fix $\query = \exists \scheduler \centerdot \Conj_{i=1}^k \queryProb_{\triangleright \lambda_i}(\prop_i)$ where each $\prop_i$ is a Rabin property and $\Phi = \forall \scheduler \centerdot \Disj_{i=1}^k \queryProb_{\triangleright \lambda_i}(\bar{\prop}_i)$. Let $\mecQuotient = \mdp_{/ \partition_{\MECS}}^{\mathcal{I}^*}$ and queries $\widehat{\query}$ and $\widehat{\Phi}$ be defined as in \Cref{lemma:index}. It was shown in \cite{baier_certificates_2024} that the vectors
\begin{center}
$\vect{y} \in \mathcal{F}_\triangleright(\mecQuotient, \{\targetSet_i\}_{i \in [k]}, \{\lambda_i\}_{i \in [k]}) \qquad (\vect{x}, \vect{z}) \in \mathcal{H}_\triangleright(\mecQuotient, \{\bar{\targetSet}_i\}_{i \in [k]}, \{\lambda_i\}_{i \in [k]})$
\end{center}
induce witnessing subsystems for $\mecQuotient$ and the reachability queries $\widehat{\existsQuery}$ and $\widehat{\forallQuery}$, respectively. Specifically, the subsystems are induced by the support of the vectors $\vect{y}$ and $\vect{x}$, given by $\stateSupp{\vect{y}} = \{\state \in \states \mid \exists \action \centerdot \vect{y}(\state, \action) > 0 \}$ and $\supp{\vect{x}}$, respectively. For a labeling $\labeling$ of $\mdp$, we define its lifting to $\mecQuotient$ by $\widehat{\labeling}(D) = \Union_{\state \in D} \labeling(\state)$ for all $D \in \partition_\MECS$. Then, the MILPs in \Cref{fig:milps-reachability} can be used to find certificates with a \emph{minimal support} w.r.t. $\widehat{\labeling}$, thereby yielding minimal witnessing subsystems for $\mecQuotient$. In both MILPs, the \emph{binary variable} $\boldBeta$ determines whether a label is contained in the subsystem. In case $\boldBeta(l) = 0$ holds, the second constraint enforces that no state labeled with $l$ may be contained in the subsystem.

\begin{figure}[t]
\begin{minipage}{0.49\textwidth}
\centering
\begin{myframe}{{MILP for $\existsCQ$-queries}}

\smallskip \footnotesize

$\min_{\boldBeta \in \{0, 1\}^\labels} \sum\nolimits_{l \in \labels} \boldBeta(l)$ s.t.
{
\scriptsize
\begin{enumerate}[(1), itemsep=0mm, topsep=0.8mm, parsep=0mm]
\item $\vect{y} \in \mathcal{F}_\triangleright(\mecQuotient, \{\targetSet_i\}_{i \in [k]}, \{\lambda_i\}_{i \in [k]})$
\item $\forall (\state, \action) \in \SA, l \in \widehat{\labeling}(\state)$: $\vect{y}(\state, \action) \leq \boldBeta(l) \cdot M$
\end{enumerate}
}
\end{myframe}
\end{minipage}
\begin{minipage}{0.5\textwidth}
\centering
\begin{myframe}{{MILP for $\universalDQ$-queries}}

\smallskip \footnotesize

$\min_{\boldBeta \in \{0, 1\}^\labels} \sum\nolimits_{l \in \labels} \boldBeta(l)$ s.t.
{
\scriptsize
\begin{enumerate}[(1), itemsep=0mm, topsep=0.8mm, parsep=0mm]
\item $(\vect{x}, \vect{z}) \in \mathcal{H}_\triangleright(\mecQuotient, \{\bar{\targetSet}_i\}_{i \in [k]}, \{\lambda_i\}_{i \in [k]})$
\item $\forall \state \in \states, l \in \widehat{\labeling}(\state)$: $\vect{x}(\state) \leq \boldBeta(l) \cdot M$
\end{enumerate}
}
\end{myframe}
\end{minipage}
\caption{MILPs for minimal witnesses. Here, $M$ is a sufficiently large constant.}
\label{fig:milps-reachability}
\end{figure}

Given a subsystem of $\mecQuotient$ with states $\partition \subseteq \partition_{\MECS}$, the corresponding subsystem $\mdp'$ of $\mdp$ is induced by $\Union_{D \in \partition} D$. For $\universalDQ$-query $\forallQuery$, this corresponding subsystem is minimal w.r.t. $\labeling$. This relation is formally captured below.
\begin{restatable}{theorem}{forallQueriesMinimality}
If $\mecQuotient'$ is a witnessing subsystem for $\widehat{\forallQuery}$ and minimal w.r.t. $\widehat{\labeling}$, then the corresponding subsystem $\mdp'$ is witnessing for $\forallQuery$ and minimal w.r.t. $\labeling$.
\label{theorem:forall-minimal}
\end{restatable}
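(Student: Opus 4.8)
The plan is to prove the two assertions of the theorem---that $\mdp'$ witnesses $\forallQuery$ and that it is minimal w.r.t.\ $\labeling$---by reducing both to a \emph{relativized} form of \Cref{lemma:index}. First I would establish: for every $\partition \subseteq \partition_{\MECS}$, writing $\states' = \bigcup_{D \in \partition} D$, the subsystem of $\mecQuotient$ whose MEC-class states are $\partition$ (keeping all sinks $\exit_I$), which I denote $\mecQuotient_{\partition}$, is up to isomorphism the MEC quotient of the subsystem $\mdp_{\states'}$ of $\mdp$. The three points to check are: (i) the MECs of $\mdp_{\states'}$ are exactly the $D \in \partition$, because any end component of $\mdp_{\states'}$ uses only state--action pairs whose full probability mass stays inside $\states'$ and is hence already an end component of $\mdp$, and since $\states'$ is a union of MECs of $\mdp$ no end component can straddle two classes; (ii) the set $\mathcal{I}^*$ induced by $\mdp_{\states'}$ is $\{(\states(D), I) : D \in \partition,\ (\states(D), I) \in \mathcal{I}^*\}$, since the end components contained in a class and the Rabin pairs they satisfy are intrinsic to that class; (iii) the leaving transitions and $\tau_I$-edges are copied verbatim, the probability mass lost to removed classes being handled by the trap-state reformulation of \Cref{appendix:subsystems}, where the trap is identified with $\exit_{\emptyset}$---consistent with the fact that a run absorbed in the trap meets no Rabin set and therefore lies in every $\bar{\prop}_i$. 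Applying \Cref{lemma:index} to $\mdp_{\states'}$ then yields $\mdp_{\states'} \models \forallQuery$ iff $\mecQuotient_{\partition} \models \widehat{\forallQuery}$, and the first assertion is the ``$\Leftarrow$'' direction instantiated at the $\partition$ belonging to $\mecQuotient'$ (together with the known monotonicity $\mdp' \models \forallQuery \Rightarrow \mdp \models \forallQuery$).

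For minimality I would isolate a monotonicity lemma for $\universalDQ$-witnesses: if $\mdp_{\states_1} \models \forallQuery$ and $\init \in \states_2 \subseteq \states_1$, then $\mdp_{\states_2} \models \forallQuery$. The proof is a coupling argument---given a scheduler $\scheduler$ of the trap-augmented $\mdp_{\states_2}$, extend it arbitrarily to a scheduler $\hat{\scheduler}$ of $\mdp_{\states_1}$ and couple the two induced path measures along their common prefix up to the first step that would leave $\states_2$; from that point the $\mdp_{\states_2}$-run is trapped and thus satisfies every $\bar{\prop}_i$, so pathwise the indicator of $\bar{\prop}_i$ under $\mdp_{\states_2}$ dominates the one under $\mdp_{\states_1}$, giving $\prob^{\scheduler}_{\mdp_{\states_2}}(\bar{\prop}_i) \geq \prob^{\hat{\scheduler}}_{\mdp_{\states_1}}(\bar{\prop}_i)$ for all $i$; as $\hat{\scheduler}$ satisfies some disjunct of $\forallQuery$, so does $\scheduler$. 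Now let $\mdp_{\states''} \models \forallQuery$ be an arbitrary witnessing subsystem, set $\partition''' = \{D \in \partition_{\MECS} : D \subseteq \states''\}$ and $\states''' = \bigcup \partition'''$ (a union of MEC classes, which contains $\init$ under the standing assumption discussed below); monotonicity gives $\mdp_{\states'''} \models \forallQuery$, and the relativized lemma makes $\mecQuotient_{\partition'''}$ witnessing for $\widehat{\forallQuery}$. Using the label identity $\widehat{\labeling}(\partition) = \bigcup_{D \in \partition}\bigcup_{\state \in D}\labeling(\state) = \labeling(\bigcup \partition)$ and the minimality of $\mecQuotient'$, I would chain $\card{\labeling(\states(\mdp'))} = \card{\widehat{\labeling}(\partition)} \leq \card{\widehat{\labeling}(\partition''')} = \card{\labeling(\states''')} \leq \card{\labeling(\states'')}$, so no witnessing subsystem of $\mdp$ for $\forallQuery$ uses fewer labels than $\mdp'$.

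The main obstacle is making the relativized \Cref{lemma:index} rigorous: one has to push the sub-stochasticity of $\mdp_{\states'}$ through the entire MEC-quotient / \Cref{lemma:index} pipeline, in particular justifying the identification of the fresh trap with $\exit_{\emptyset}$ and the invariance of the MEC decomposition when passing to a union of MEC classes. A second, subtler point---and the reason the surrounding text only claims this for ``specific labelings''---is a non-trivial MEC containing $\init$: such a class is always collapsed into the single initial state of $\mecQuotient$, so every subsystem of $\mecQuotient$ retains all of it, whereas a witnessing subsystem of $\mdp$ might retain only part of it and hence fewer labels. I would therefore prove the minimality claim under the (standard, w.l.o.g.) assumption that $\init$ lies in no non-trivial MEC, or restrict to labelings that are constant on each MEC class, which is exactly the regime in which $\widehat{\labeling}$ faithfully mirrors $\labeling$.
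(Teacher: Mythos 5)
Your overall architecture for the minimality half is the same as the paper's (reduce an arbitrary competitor subsystem to one that is a union of MEC classes, push it through the quotient, and compare label counts via $\labeling(\bigcup\partition)=\widehat{\labeling}(\partition)$), and your ``relativized'' version of \Cref{lemma:index} is essentially what the paper also uses implicitly. However, the step you use to perform the reduction --- the general downward-monotonicity lemma ``$\mdp_{\states_1}\models\forallQuery$ and $\init\in\states_2\subseteq\states_1$ implies $\mdp_{\states_2}\models\forallQuery$'' --- is \emph{false}, and its proof rests on a convention that contradicts the paper's. Per \Cref{appendix:subsystems}, properties are subsets of $\states^\omega$, so a run absorbed in the trap $\exit$ \emph{violates} every $\bar{\prop}_i$ (this is exactly what makes witnessing upward-monotone in \Cref{lemma:monotonicity-subsystems}). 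Your coupling argument asserts the opposite (``the $\mdp_{\states_2}$-run is trapped and thus satisfies every $\bar{\prop}_i$''), so the pathwise domination goes the wrong way: one only gets $\prob^{\scheduler}_{\mdp_{\states_2}}(\bar{\prop}_i)\leq\prob^{\hat{\scheduler}}_{\mdp_{\states_1}}(\bar{\prop}_i)$, which does not transfer the lower bounds $\geq\lambda_i$ to the smaller system. A concrete counterexample: a DTMC $\init\to t$ with $t$ absorbing, $\bar\prop=\eventually\globally\{t\}$ and $\lambda=1$; the full chain satisfies $\forall\scheduler\centerdot\queryProb_{\geq 1}(\bar\prop)$ but the subsystem $\{\init\}$ has probability $0$.

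What is actually true --- and what the paper proves as a standalone claim --- is the much weaker statement that one may remove the states of a \emph{partially included} MEC $\mec$ from a witnessing $\states'$ without losing the witness. The reason is not pathwise domination but \emph{equality}: since $\mec$ is only partially present, the adversary can reach $\exit$ almost surely from inside it, so given any scheduler of $\mdp_{\states'\setminus\states(\mec)}$ refuting $\forallQuery$ one builds a scheduler of $\mdp_{\states'}$ that mimics it outside $\mec$ and steers to $\exit$ inside $\mec$, achieving exactly the same values $\prob(\bar{\prop}_i)$ and hence refuting $\forallQuery$ in $\mdp_{\states'}$ as well. Replacing your monotonicity lemma by this claim (applied once per partially included MEC) repairs the argument and recovers the paper's proof. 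Your side remark about the initial state lying in a non-trivial MEC points at a real edge case of this claim (one cannot delete $\init$), which the paper also glosses over; but note the theorem as stated carries no labeling restriction, and if all $\lambda_i>0$ a witnessing subsystem for $\forallQuery$ cannot partially include the MEC of $\init$ anyway, since the adversary could then reach $\exit$ almost surely from $\init$.
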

Notably, \Cref{theorem:forall-minimal} also allows us to obtain minimal witnesses for DTMCs. For $\existsCQ$-query $\existsQuery$, the minimality of the subsystem is generally not preserved. However, for labelings $\labeling$ that are \emph{MEC-equivalent} w.r.t. $\mdp$, i.e. all states in the same MEC are assigned the same set of labels, we can state a similar result.
\begin{restatable}{theorem}{mecQuotientWitness}
Suppose $\labeling$ is an MEC-equivalent labeling w.r.t. $\mdp$. If $\mecQuotient'$ is a witnessing subsystem for $\widehat{\existsQuery}$ and minimal w.r.t. $\widehat{\labeling}$, then the corresponding subsystem $\mdp'$ is witnessing for $\existsQuery$ and minimal w.r.t. $\labeling$.
\label{theorem:mec-equivalent-labelings}
\end{restatable}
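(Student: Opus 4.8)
The plan is to deduce both claims from \Cref{lemma:index} together with the monotonicity of $\models$ under super-systems (if $\mdp_1$ is a subsystem of $\mdp_2$ and $\mdp_1 \models \query$, then $\mdp_2 \models \query$), after first establishing a structural lemma about subsystems whose state set is a union of MEC-classes. Throughout, for a subsystem $\mecQuotient'$ of $\mecQuotient$ I write $\partition'$ for the set of $\partition_{\MECS}$-classes among its states, so that the corresponding subsystem $\mdp'$ of $\mdp$ is the one induced by $\states' = \Union_{D \in \partition'} D$.

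\textbf{Step 1 (commutation lemma).} Let $\states^\circ = \Union_{D \in \partition^\circ} D$ for some $\partition^\circ \subseteq \partition_{\MECS}$ and let $\mdp_{\states^\circ}$ be the induced subsystem, viewed as an MDP with an explicit trap $\bot$. I would show: (i) the MEC decomposition of $\mdp_{\states^\circ}$ is $\partition^\circ$ together with $\{\bot\}$ — each $D \in \partition^\circ$ stays strongly connected because the sub-MDP of $\mdp_{\states^\circ}$ induced by $D$ equals $\mdp[D]$ (all mass of $D$-internal actions stays in $D \subseteq \states^\circ$), it stays maximal since it was maximal in the larger $\mdp$, and deleting transitions can neither merge ECs nor create new ones; (ii) the satisfying index set of $\mdp_{\states^\circ}$ equals $\mathcal{I}^* \intersection (\partition^\circ \times 2^{[k]})$, because the ECs contained in a surviving class $D$ and the Rabin pairs they hit are the same in $\mdp_{\states^\circ}$ and in $\mdp$; and therefore (iii) the MEC quotient of $\mdp_{\states^\circ}$ coincides, up to the (target-free, hence immaterial) trap component, with the subsystem of $\mecQuotient$ obtained by keeping the states $\partition^\circ \union \{\bot_I\}_{I \subseteq [k]}$ — here one checks that no action of $\mdp$ changes its ``class-leaving'' status when passing to $\mdp_{\states^\circ}$, so the class-to-class transitions match, and the $\tau_I$-edges match by (ii).

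\textbf{Step 2 (witness transfer).} Applying Step 1 with $\partition^\circ = \partition'$ shows that $\mecQuotient'$ is a subsystem of the MEC quotient of $\mdp'$; by monotonicity the latter also satisfies $\widehat{\existsQuery}$, and \Cref{lemma:index} applied to $\mdp'$ then yields $\mdp' \models \existsQuery$. This half uses only Step 1 and not the MEC-equivalence hypothesis, matching the $\universalDQ$-case in \Cref{theorem:forall-minimal}.

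\textbf{Step 3 (minimality).} Suppose, for contradiction, that some subsystem $\mdp''$ of $\mdp$ with state set $\states''$ satisfies $\existsQuery$ and $\card{\labeling(\states'')} < \card{\labeling(\states')}$. Round $\mdp''$ up to full MECs: let $\partition^+ = \{D \in \partition_{\MECS} \mid D \intersection \states'' \neq \emptyset\}$, $\states^+ = \Union \partition^+$, and $\mdp^+ = \mdp_{\states^+}$. Then $\states'' \subseteq \states^+$ and transition probabilities agree within $\states^+$, so $\mdp''$ is a subsystem of $\mdp^+$ and hence $\mdp^+ \models \existsQuery$ by monotonicity. This is the one place MEC-equivalence is used: every $\state \in \states^+$ lies in some $D \in \partition^+$ with a witness $\state^\star \in D \intersection \states''$, and MEC-equivalence gives $\labeling(\state) = \labeling(\state^\star) \subseteq \labeling(\states'')$, whence $\labeling(\states^+) = \labeling(\states'')$. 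Since $\states^+$ is a union of MEC-classes, Step 1 makes the MEC quotient of $\mdp^+$ a subsystem of $\mecQuotient$, and by \Cref{lemma:index} it satisfies $\widehat{\existsQuery}$; its $\widehat{\labeling}$-label set is $\Union_{D \in \partition^+}\Union_{\state \in D}\labeling(\state) = \labeling(\states^+) = \labeling(\states'')$, which is strictly smaller than the $\widehat{\labeling}$-label set of $\mecQuotient'$, namely $\Union_{D \in \partition'}\Union_{\state \in D}\labeling(\state) = \labeling(\states')$. This contradicts minimality of $\mecQuotient'$, so $\mdp'$ is minimal w.r.t. $\labeling$.

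\textbf{Main obstacle.} The crux is Step 1: rigorously identifying the MEC quotient of a union-of-MECs subsystem with the corresponding restriction of $\mecQuotient$, in particular the invariance of the ECs contained in a MEC (hence of the satisfying index set restricted to surviving classes) under deletion of other MECs, plus the bookkeeping that no action changes its class-leaving status. Everything else is monotonicity of $\models$ and the identity $\widehat{\labeling}(\partition^\circ) = \labeling(\Union \partition^\circ)$; the lone appeal to MEC-equivalence — equating $\labeling(\states^+)$ with $\labeling(\states'')$ in Step 3 — is exactly why the statement fails for arbitrary labelings.
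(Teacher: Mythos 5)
Your proposal is correct and follows essentially the same route as the paper: the witnessing half is a direct transfer through \Cref{lemma:index}, and minimality is proved by contradiction, mapping a smaller witness $\mdp''$ to the set $Y$ (your $\partition^+$) of MEC-classes it touches and using MEC-equivalence exactly where you do, to identify $\labeling(\states'')$ with $\widehat{\labeling}(Y)$. Your Step 1 merely makes explicit the commutation of quotienting with union-of-MEC subsystems that the paper leaves implicit in its appeal to \Cref{lemma:index}.
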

\begin{example}
Consider the query $\forallQuery = \forall \scheduler \centerdot \queryProb_{\geq 0.25} (\eventually \globally \neg \state_1) \lor \queryProb_{\geq 0.25}(\globally \eventually \neg \state_2)$ for $\mdp$ (\cref{fig:mdps}). The corresponding reachability query for the quotient $\mecQuotient$ (\cref{subfig:mec-quotient}) is given by $\widehat{\forallQuery} = \forall \scheduler \centerdot \queryProb_{\geq 0.25} (\eventually \{\exit_\emptyset, \exit_{\{2\}}\}) \lor \queryProb_{\geq 0.25}(\eventually \{\exit_\emptyset, \exit_{\{1\}}\})$. Let $\labeling$ be defined by $\labeling(\state_i) = \{\state_i\}$ for all $0 \leq i \leq 4$. The subsystem $\mecQuotient'$ in \Cref{subfig:mec-quotient-subsystem} is a witness for $\widehat{\forallQuery}$ and minimal w.r.t. $\widehat{\labeling}$, yielding a state-minimal subsystem $\mdp'$ for $\mdp$ (\cref{subfig:witness}). 
\end{example}
\noindent\textbf{Minimal witnesses for $\existsCQ$-queries.} Unfortunately, the approach based on the MEC quotient does not always yield minimal witnesses for $\existsCQ$-queries, because witnessing subsystems computed via this approach either \emph{completely} include an MEC or omit it. However, for $\existsCQ$-queries it might be beneficial to \emph{partially} include an MEC \cite{jantsch_certificates_2022}. We provide a MILP encoding for $\existsCQ$-queries in \Cref{appendix:milp-exists-queries} and note that it is currently not supported by our implementation.

\section{Witnesses for Markov Chains via UBAs}
\label{section:dtmc}
For the special case of DTMCs, we now show that \emph{unambiguous Büchi automata} (UBAs) can be used to find minimal witnessing subsystems. This can enable a more practical approach to computing witnesses for LTL properties, since the translation from LTL to UBAs only incurs a single- instead of a double-exponential blowup as for \emph{deterministic} $\omega$-automata. We recall work on DTMCs and UBAs from \cite{baier_markov_2023-1} and, based on this, devise a MILP characterization for finding minimal witnesses.
For the remainder, we fix a (potentially sub-stochastic) DTMC $\dtmc = (\states,\init,\transMat)$ and UBA $\automaton = (\autoStates, \states, \autoInit, \autoTransition, \{(F, \emptyset)\})$. Let $\prop$ denote the property recognized by $\automaton$. Since DTMCs do not contain any nondeterminism, we drop the quantification from queries, e.g. we write $\queryProb_{\geq \lambda}(\prop)$ instead of $\exists \scheduler \centerdot \queryProb_{\geq \lambda}(\prop)$.

\medskip

\noindent\textbf{DTMCs and UBAs.}
The \emph{value vector} of $\dtmc$ and $\automaton$ is $\valueVector_{\dtmc,\automaton} \in [0,1]^{\states \times \autoStates}$, defined by $\valueVector_{\dtmc,\automaton}(\state, \autoState) = \prob_{\dtmc,\state}(\lang{\automaton, \autoState})$ for all $\state \in \states$ and $\autoState \in \autoStates$.
It was shown in \cite{baier_markov_2023-1} that the value vector $\valueVector_{\dtmc \times \automaton}$ can be computed in time polynomial in $\card{\dtmc}$ and $\card{\automaton}$, by analysing the \emph{product of} $\dtmc$ \emph{and} $\automaton$. The product $\vect{B}_{\dtmc \times \automaton} \in [0,1]^{(\states \times \autoStates) \times (\states \times \autoStates)}$ is defined by $\vect{B}_{\dtmc \times \automaton}\left((\state,\autoState),(\state',\autoState')\right) =
  \transMat(\state,\state')$ if $\autoState' \in \autoTransition(\autoState,\state)$ and zero otherwise, for all $\state \in \states$ and $\autoState \in \autoStates$.
Note that $\vect{B}_{\dtmc \times \automaton}$ is not guaranteed to be stochastic if $\automaton$ is unambiguous. In the following, we drop the subscript $\dtmc \times \automaton$.

The \emph{directed graph induced} by $\vect{B}$ consists of vertices $\states \times \autoStates$ and contains an edge $(d, d')$ iff $\vect{B}(d, d') > 0$. Then, $\vect{B}$ is \emph{strongly connected} if its induced graph is strongly connected. For $D \subseteq \states \times \autoStates$, let $\vect{B}_{D}$ be the restriction of $\vect{B}$ to the rows and columns in $D$. Then, $D$ is a \emph{strongly connected component} (SCC) of $\vect{B}$ if $\vect{B}_{D}$ is strongly connected and $D$ inclusion-maximal.
An SCC $D$ of $\vect{B}$ is \emph{recurrent} if the dominant eigenvalue of $\vect{B}_{D}$ is equal to one, and \emph{accepting}, if there is $(\state,\autoState) \in D$ such that $\autoState \in F$. Recurrent SCCs are the counterparts to bottom SCCs in the deterministic setting \cite{baier_markov_2023-1}.
Let $\mathcal{D}^+$ denote the accepting recurrent SCCs of $\vect{B}$, and $\mathcal{D}^-$ denote the non-accepting recurrent SCCs of $\vect{B}$.

The value vector $\valueVector$ satisfies $\vect{B} \valueVector = \valueVector$ \cite[Lemma 4]{baier_markov_2023-1}, but is not the only solution of the system.
To uniquely characterize $\valueVector$, an additional normalizing equation $\boldMu_D$ (called a \emph{$D$-normalizer}) is added for each accepting recurrent SCC $D$ of $\vect{B}$. These normalizers can be computed in polynomial time \cite{baier_markov_2023-1}.
The following set of equations then uniquely characterizes $\valueVector$ \cite[Lemma 12]{baier_markov_2023-1}:
\begin{enumerate*}[label=(\roman*)]
\item $\vect{B} \vect{v} = \vect{v}$,
\item $\boldMu_D^{\top} \vect{v}_D = 1$ for all $D \in \mathcal{D}^+$, and
\item $\vect{v}_D = \vect{0}$ for all $D \in \mathcal{D}^-$
\end{enumerate*}.
\begin{remark}
In principle, we could certify queries in DTMCs via solutions of the system from \cite{baier_markov_2023-1}. This would require certificates for the correctness of the normalizers, which we leave open for future work.
\end{remark}

\medskip

\noindent\textbf{Minimal witnesses via UBAs.}
Let $\prop \subseteq \states^\omega$ denote the property recognized by $\automaton$. W.l.o.g. assume that $\vect{B}$ has no non-accepting recurrent SCCs, since we can remove them beforehand.
For a vector $\vect{x} \in [0,1]^{\states \times \autoStates}$ we define $\stateSupp{\vect{x}} = \{\state \in \states \mid \exists \autoState \in \autoStates \centerdot (\state,\autoState) \in \supp{\vect{x}}\}$.
We claim that any nonnegative solution $\vect{x}$ of 
{
\begin{center}
$\vect{B}  \vect{x} \geq \vect{x}, \qquad \boldMu_D^{\top} \ \vect{x}_D \leq 1 \; \text{ for all } D \in \mathcal{D}^+, \quad \text{ and } \quad \vect{x}(\init,\autoInit) \geq \lambda$
\end{center}
}
\noindent induces a witnessing subsystem for $\queryProb_{\geq \lambda}(\prop)$. Conversely, for every witnessing subsystem for $\queryProb_{\geq \lambda}(\prop)$ with states $\states'$ there is a corresponding solution $\vect{x}$ with $\stateSupp{\vect{x}} \subseteq \states'$. For convenience, we now treat a subsystem $\dtmc'$ with states $\states'$ as a DTMC with states $\states$ where transitions of states in $\states \setminus \states'$ are set to zero. More precisely, the transition matrix is given by $\transMat' \in [0, 1]^{\states \times \states}$ where $\transMat'(\state, \state') = \transMat(\state, \state')$ if $\state, \state' \in \states'$ and $0$ otherwise for all $\state, \state' \in \states$.
\begin{restatable}{proposition}{solToSubsystem}
\label{prop:soltosubsys}
Let $\vect{x} \in [0,1]^{\states \times \autoStates}$ such that $\vect{B} \vect{x} \geq \vect{x}$, $\boldMu_D^{\top} \vect{x}_D \leq 1$ for each $D \in \mathcal{D}^+$, and $\vect{x}(\init,\autoInit) \geq \lambda$, where $\lambda \in [0, 1]$. Let $\states' \subseteq \states$ with $\stateSupp{\vect{x}} \subseteq \states'$. Then, we have $\dtmc_{\states'} \models \queryProb_{\geq \lambda}(\prop)$.
\end{restatable}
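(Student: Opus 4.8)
The plan is to relate the vector $\vect{x}$ to the value vector $\valueVector_{\dtmc' \times \automaton}$ of the subsystem $\dtmc'$ (treated as a sub-stochastic DTMC on state space $\states$), and then invoke the characterization of the value vector from \cite[Lemma 12]{baier_markov_2023-1}. Concretely, I would first observe that since $\stateSupp{\vect{x}} \subseteq \states'$, the entries of $\vect{x}$ outside $\states' \times \autoStates$ are zero, so $\vect{x}$ can be viewed as living in $[0,1]^{\states' \times \autoStates}$; and because $\transMat'$ agrees with $\transMat$ on $\states' \times \states'$ and is zero elsewhere, the product matrix $\vect{B}' \coloneqq \vect{B}_{\dtmc' \times \automaton}$ is exactly the restriction of $\vect{B}$ to rows and columns indexed by $\states' \times \autoStates$ (with the remaining entries zeroed out). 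Hence the inequality $\vect{B}\vect{x} \geq \vect{x}$, when restricted to the coordinates in $\states' \times \autoStates$, gives $\vect{B}'\vect{x} \geq \vect{x}$, i.e. $\vect{x}$ is a sub-fixed-point of $\vect{B}'$.

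The key step is then a monotonicity/iteration argument: since $\vect{B}'$ is a nonnegative matrix, iterating $\vect{B}'(\cdot)$ on the inequality $\vect{B}'\vect{x} \geq \vect{x}$ yields $\vect{B}'^{\,n}\vect{x} \geq \vect{x}$ for all $n$, and one shows (using that $\vect{B}'^{\,n}\vect{x}$ converges — its limit is governed by the recurrent SCCs of $\vect{B}'$) that the limit $\vect{x}^\infty \coloneqq \lim_n \vect{B}'^{\,n}\vect{x}$ satisfies $\vect{x}^\infty \geq \vect{x}$, is a genuine fixed-point $\vect{B}'\vect{x}^\infty = \vect{x}^\infty$, vanishes on all non-accepting recurrent SCCs of $\vect{B}'$, and satisfies the normalizer inequalities $\boldMu_D^\top \vect{x}^\infty_D \leq 1$ for all $D \in \mathcal{D}^+$ (inherited from $\vect{x}$, since the coordinates on a recurrent SCC are not decreased but the normalizer value is preserved under the fixed-point equation — this needs a small argument that on a recurrent accepting SCC $D$ of $\vect{B}'$, the quantity $\boldMu_D^\top(\cdot)_D$ is invariant under $\vect{B}'$). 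Here I would use the standing assumption that $\vect{B}$ (hence $\vect{B}'$) has no non-accepting recurrent SCCs to simplify: every recurrent SCC of $\vect{B}'$ is accepting, so condition (iii) in the characterization is vacuous for $\dtmc'$, and the normalizer inequalities together with $\vect{B}'\vect{x}^\infty = \vect{x}^\infty$ force $\vect{x}^\infty \leq \valueVector_{\dtmc' \times \automaton}$ — because the value vector is the \emph{componentwise largest} solution of $\vect{B}'\vect{v}=\vect{v}$ with $\boldMu_D^\top\vect{v}_D \leq 1$ (equivalently, the unique one with equality, and any sub-fixed-point with the $\leq 1$ normalizer bound is dominated by it). Combining, $\valueVector_{\dtmc' \times \automaton}(\init,\autoInit) \geq \vect{x}^\infty(\init,\autoInit) \geq \vect{x}(\init,\autoInit) \geq \lambda$, which by definition of the value vector means $\prob_{\dtmc',\init}(\lang{\automaton,\autoInit}) = \prob_{\dtmc',\init}(\prop) \geq \lambda$, i.e. $\dtmc_{\states'} \models \queryProb_{\geq \lambda}(\prop)$.

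The main obstacle I anticipate is making rigorous the claim that the value vector dominates every nonnegative sub-fixed-point $\vect{x}$ of $\vect{B}'$ satisfying the normalizer inequality — in particular handling the interaction between the inequality $\vect{B}'\vect{x} \geq \vect{x}$ and the normalizers on the recurrent SCCs, where one must argue that pushing $\vect{x}$ forward along $\vect{B}'$ cannot inflate the $\boldMu_D$-weighted mass beyond $1$. One clean way is: on a recurrent accepting SCC $D$, the restriction $\vect{B}'_D$ has Perron eigenvalue $1$ with left eigenvector $\boldMu_D$ (this is essentially how the normalizer is defined in \cite{baier_markov_2023-1}), so $\boldMu_D^\top (\vect{B}' \vect{y})_D = \boldMu_D^\top \vect{y}_D$ for any $\vect{y}$ with support contained in the "downward closure" feeding into $D$, plus contributions that are nonnegative; iterating keeps the $D$-mass bounded by its initial value $\leq 1$, and in the limit the mass flowing into $D$ settles exactly at $\vect{x}^\infty_D$ with $\boldMu_D^\top \vect{x}^\infty_D \leq 1$. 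Once this is in place, uniqueness of the value vector under conditions (i)–(iii) from \cite[Lemma 12]{baier_markov_2023-1} together with the maximality remark finishes the argument. The converse direction ("for every witnessing subsystem there is a corresponding solution") is not part of this proposition's statement — it is the $\geq$-direction claimed in the surrounding text — so I would not need to address it here; for the stated direction, the value-vector characterization of \cite{baier_markov_2023-1} does essentially all the heavy lifting once the restriction-to-$\states'$ bookkeeping is set up correctly.
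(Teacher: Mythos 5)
Your setup is the same as the paper's: you restrict to the product matrix $\vect{B}'$ of the subsystem, observe that $\vect{x}$ vanishes outside $\states'\times\autoStates$ so that $\vect{B}'\vect{x}\geq\vect{x}$, and then try to conclude $\vect{x}\leq\valueVector_{\dtmc'\times\automaton}$, which gives the claim at $(\init,\autoInit)$. The paper does exactly this bookkeeping (its ``Claim 2''), plus a second observation you largely gloss over: one must check that the accepting recurrent SCCs of $\vect{B}'$ are accepting recurrent SCCs of $\vect{B}$ (via $\vect{B}'_D\leq\vect{B}_D$, hence $\rho(\vect{B}'_D)\leq\rho(\vect{B}_D)$), so that the normalizers $\boldMu_D$ appearing in your hypothesis are in fact $D$-normalizers for $\vect{B}'$; your appeal to the standing ``no non-accepting recurrent SCCs'' assumption implicitly uses this transfer.

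The genuine gap is in your justification of the domination step. The paper proves a standalone monotonicity lemma (\Cref{lemm:monotonicity}): any $\vect{x}$ with $\vect{B}\vect{x}\geq\vect{x}$ and $\boldMu_D^{\top}\vect{x}_D\leq 1$ satisfies $\vect{x}\leq\valueVector$, established by induction over the DAG of SCCs using Perron--Frobenius facts (on a recurrent SCC the inflow from below vanishes by the induction hypothesis, forcing $\vect{B}_D\vect{x}_D=\vect{x}_D$ and hence $\vect{x}_D=\gamma\valueVector_D$ with $\gamma\leq 1$; on a non-recurrent SCC one uses $(\vect{I}-\vect{B}_D)^{-1}\geq 0$). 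Your iteration argument does not substitute for this. First, the assertion that ``iterating keeps the $D$-mass bounded by its initial value $\leq 1$'' is backwards: since $\boldMu_D^{\top}(\vect{B}'\vect{y})_D=\boldMu_D^{\top}\vect{y}_D+\boldMu_D^{\top}\sum_{C}\vect{B}'_{D,C}\vect{y}_C$, the $\boldMu_D$-mass is \emph{non-decreasing} under iteration, and bounding it (equivalently, proving convergence of the monotone sequence $\vect{B}'^{\,n}\vect{x}$) requires showing the inflow from SCCs below $D$ vanishes --- which is exactly what the SCC-DAG induction delivers and does not follow from the hypotheses alone. Second, your final appeal to the value vector being ``the componentwise largest solution\ldots any sub-fixed-point with the $\leq 1$ normalizer bound is dominated by it'' is not part of the cited characterization of \cite{baier_markov_2023-1} (which is a uniqueness statement under equality constraints); it is precisely the monotonicity lemma you are trying to avoid proving, so as written the argument is circular at its key step. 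To close the gap, prove the domination property by induction over the SCC DAG as the paper does, rather than by iteration.
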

\begin{restatable}{proposition}{subsystemToSol}
  \label{prop:subsystosol}
Let $\dtmc'$ be a subsystem of $\dtmc$. Then its value vector $\valueVector' \in [0,1]^{S \times Q}$ satisfies $\vect{B} \valueVector' \geq \valueVector'$ and $\boldMu_D^\top \valueVector'_D \leq 1$ for each $D \in \mathcal{D}^+$.
\end{restatable}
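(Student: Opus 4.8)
The plan is to route the argument through the product matrix $\vect{B}'$ of the subsystem $\dtmc'$ with the UBA $\automaton$, and then to compare $\vect{B}'$ and the value vector $\valueVector'$ of $\dtmc'$ with their counterparts $\vect{B}$ and $\valueVector$ for the full chain $\dtmc$. Here $\vect{B}'$ is given by $\vect{B}'((\state,\autoState),(\state',\autoState')) = \transMat'(\state,\state')$ if $\autoState' \in \autoTransition(\autoState,\state)$ and $0$ otherwise, where $\transMat'$ is the sub-stochastic transition matrix of $\dtmc'$ over $\states$ introduced above. Since $\dtmc'$ is obtained from $\dtmc$ by zeroing out some transition probabilities, we have $\transMat' \leq \transMat$ entrywise; as both products use the same automaton transition relation $\autoTransition$, this gives $\vect{B}' \leq \vect{B}$ entrywise.

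For the first inequality I would apply \cite[Lemma 4]{baier_markov_2023-1} to the (sub-stochastic) DTMC $\dtmc'$ and $\automaton$, obtaining $\vect{B}'\valueVector' = \valueVector'$. Since $\valueVector' \geq \vect{0}$ and $\vect{B} - \vect{B}' \geq \vect{0}$ entrywise, it follows that $\vect{B}\valueVector' = \vect{B}'\valueVector' + (\vect{B}-\vect{B}')\valueVector' \geq \vect{B}'\valueVector' = \valueVector'$, which is the claim $\vect{B}\valueVector' \geq \valueVector'$.

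For the normalizer inequalities, the key intermediate step is the entrywise monotonicity $\valueVector' \leq \valueVector$, i.e. $\prob_{\dtmc',\state}(\lang{\automaton,\autoState}) \leq \prob_{\dtmc,\state}(\lang{\automaton,\autoState})$ for all $\state \in \states$, $\autoState \in \autoStates$. I would prove this by a coupling: draw a path of $\dtmc$ from $\state$, let the corresponding path of $\dtmc'$ (equipped with an absorbing sink carrying the missing probability mass) copy it as long as it remains in $\states'$, and send the $\dtmc'$-path to the sink the first time the $\dtmc$-path leaves $\states'$; this is a faithful realization of $\dtmc'$, and its path is an element of $\states^\omega$ accepted by $\automaton$ from $\autoState$ only if the $\dtmc$-path never leaves $\states'$, in which case the two paths coincide. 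Hence the event ``the $\dtmc'$-path is accepted from $\autoState$'' is contained in ``the $\dtmc$-path is accepted from $\autoState$'', yielding $\valueVector' \leq \valueVector$. Combining this with the two defining properties of the $D$-normalizer $\boldMu_D$ from \cite{baier_markov_2023-1}, namely nonnegativity of $\boldMu_D$ and the normalizing equation $\boldMu_D^{\top}\valueVector_D = 1$ (\cite[Lemma 12]{baier_markov_2023-1}), we conclude $\boldMu_D^{\top}\valueVector'_D \leq \boldMu_D^{\top}\valueVector_D = 1$ for every $D \in \mathcal{D}^+$.

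I expect the routine parts (invoking \cite[Lemma 4]{baier_markov_2023-1}, the entrywise bound $\vect{B}' \leq \vect{B}$, and the final arithmetic) to be immediate. The main point requiring care is the monotonicity $\valueVector' \leq \valueVector$: mere domination of cylinder probabilities does not in general imply domination on all Borel (in particular $\omega$-regular) events, which is why the explicit coupling is needed rather than a one-line appeal. A secondary point is to make sure $\boldMu_D \geq \vect{0}$ is indeed available from \cite{baier_markov_2023-1}; this holds because the normalizer is assembled from nonnegative data.
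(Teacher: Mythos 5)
Your proposal is correct and follows essentially the same route as the paper: both derive $\vect{B}\valueVector' \geq \vect{B}'\valueVector' = \valueVector'$ from $\transMat' \leq \transMat$ together with the defining fixed-point equation for $\valueVector'$, and both reduce the normalizer inequalities to the monotonicity $\valueVector' \leq \valueVector$ combined with $\boldMu_D \geq \vect{0}$ and $\boldMu_D^\top \valueVector_D = 1$. The only difference is that the paper dispatches $\valueVector' \leq \valueVector$ with the one-line remark that probabilities can only decrease in a subsystem, whereas you justify it via an explicit coupling (equivalently, the $\dtmc'$-measure is the restriction of the $\dtmc$-measure to the event of staying in $\states'$) — a reasonable extra precaution, since cylinder-wise domination alone would not suffice for arbitrary $\omega$-regular events.
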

Using \Cref{prop:soltosubsys,prop:subsystosol} we can prove the following relation between witnesses and solutions of the corresponding set of linear inequalities.

\begin{restatable}{theorem}{mcAndUba}
\label{theorem:markov-chain-uba-subsystems}
Let $\states' \subseteq \states$. The following are equivalent:
  \begin{enumerate}[align=left, leftmargin=*, itemsep=0mm, topsep=0.8mm, parsep=0mm]
  \item The induced subsystem $\dtmc_{\states'}$ is a witnessing subsystem for $\queryProb_{\geq \lambda}(\prop)$.
  \item There exists a vector $\vect{x} \in [0,1]^{\states \times \autoStates}$ satisfying $\vect{B} \vect{x} \geq \vect{x}$, $\boldMu_D^{\top} \vect{x}_D \leq 1$ for each $D \in \mathcal{D}^+$ and $\vect{x}(\init,\autoInit) \geq \lambda$ with $\stateSupp{\vect{x}} \subseteq \states'$.
  \end{enumerate}
\end{restatable}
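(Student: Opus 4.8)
The plan is to prove Theorem~\ref{theorem:markov-chain-uba-subsystems} by establishing the two implications separately, using \Cref{prop:soltosubsys} and \Cref{prop:subsystosol} as the two halves of a round trip, and being careful about the precise relationship between the support of a witnessing vector and the state set $\states'$.

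For the direction $2) \Rightarrow 1)$, I would invoke \Cref{prop:soltosubsys} directly. Given a vector $\vect{x} \in [0,1]^{\states \times \autoStates}$ satisfying $\vect{B}\vect{x} \geq \vect{x}$, $\boldMu_D^\top \vect{x}_D \leq 1$ for each $D \in \mathcal{D}^+$, and $\vect{x}(\init,\autoInit) \geq \lambda$, with $\stateSupp{\vect{x}} \subseteq \states'$, \Cref{prop:soltosubsys} immediately yields $\dtmc_{\states'} \models \queryProb_{\geq \lambda}(\prop)$, which is exactly the assertion that $\dtmc_{\states'}$ is a witnessing subsystem. So this direction is essentially a restatement of the earlier proposition and requires no further work.

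For the direction $1) \Rightarrow 2)$, suppose $\dtmc_{\states'}$ is a witnessing subsystem, i.e. $\prob_{\dtmc_{\states'},\init}(\prop) \geq \lambda$. I would set $\vect{x} \coloneqq \valueVector_{\dtmc_{\states'} \times \automaton}$, the value vector of the subsystem (viewed, per the convention fixed just before the proposition, as a DTMC on the full state space $\states$ with transitions out of $\states \setminus \states'$ zeroed out). By \Cref{prop:subsystosol}, this $\vect{x}$ satisfies $\vect{B}\vect{x} \geq \vect{x}$ and $\boldMu_D^\top \vect{x}_D \leq 1$ for each $D \in \mathcal{D}^+$. Moreover $\vect{x}(\init,\autoInit) = \prob_{\dtmc_{\states'},\init}(\lang{\automaton,\autoInit}) = \prob_{\dtmc_{\states'},\init}(\prop) \geq \lambda$ by the witnessing assumption. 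It remains to check $\stateSupp{\vect{x}} \subseteq \states'$: if $\state \notin \states'$, then under $\transMat'$ the state $\state$ has no outgoing transitions (all set to zero), so in $\dtmc_{\states'}$ no infinite path from $\state$ exists with positive probability mass, hence $\prob_{\dtmc_{\states'},\state}(\lang{\automaton,\autoState}) = 0$ for every $\autoState$, giving $\vect{x}(\state,\autoState) = 0$; thus $\stateSupp{\vect{x}} \subseteq \states'$ as required.

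The main obstacle is the last point — namely handling the sub-stochastic convention cleanly and making sure that "value vector of the subsystem" is well-defined and that its support is confined to $\states'$. One subtlety is that the product $\vect{B}$ used in the equations is the product with the \emph{original} $\dtmc$, while $\vect{x}$ is built from the subsystem; the inequality $\vect{B}\vect{x} \geq \vect{x}$ rather than equality is precisely what absorbs this mismatch (removing transitions can only decrease the probability of the UBA language, so the Bellman-style fixed-point equation relaxes to an inequality), and this is exactly what \Cref{prop:subsystosol} asserts, so I would lean on it rather than re-deriving it. A second minor care point is that the theorem statement pins down $\vect{x} \in [0,1]^{\states \times \autoStates}$, and value vectors are by definition in $[0,1]^{\states \times \autoStates}$, so no clipping is needed. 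With these observations the argument is short; the bulk of the technical content has already been discharged in the two preceding propositions.
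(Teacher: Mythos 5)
Your proposal is correct and follows exactly the route the paper takes: direction $2)\Rightarrow 1)$ is a direct application of \Cref{prop:soltosubsys}, and direction $1)\Rightarrow 2)$ instantiates $\vect{x}$ as the value vector of the subsystem and applies \Cref{prop:subsystosol}, with the remaining checks ($\vect{x}(\init,\autoInit)\geq\lambda$ and $\stateSupp{\vect{x}}\subseteq\states'$) following from the witnessing assumption and the sub-stochastic convention just as you describe. The paper's own proof is a one-line citation of the two propositions; your write-up simply makes the implicit bookkeeping explicit.
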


\begin{figure}[t]
\begin{myframe}{{$\mathsf{DtmcUBA}\bigl(\dtmc, \labeling, \{\automaton_i\}_{i \in [k]}, \{\lambda_i\}_{i \in [k]}\bigr)$}}

\smallskip \footnotesize{

$\min \sum\nolimits_{l \in \labels} \boldBeta(\state)$ s.t. $\boldBeta \in \{0, 1\}^\labels$ and $\{\vect{x}_i \in \mathcal{S}_{\dtmc \times \automaton_i}\}_{i \in [k]}$ and
\scriptsize{
\begin{enumerate}[(1), itemsep=0mm, topsep=0.8mm, parsep=0mm]
\item for all $i \in [k]$, $(\state, \autoState) \in \states \times \autoStates_i$ and $l \in \labeling(\state)$: $\vect{x}_i(\state, \autoState) \leq \boldBeta(l)$, \label{milp:dtmc-indicator}
\item and for all $i \in [k]$: $\vect{x}_i(\init,\autoInit) \geq \lambda$.
\end{enumerate}
}
}
\end{myframe}
\caption{MILP for finding minimal witnesses for DTMCs and UBAs.}
\label{milp:dtmc-uba}
\end{figure}
\Cref{theorem:markov-chain-uba-subsystems} enables us to find minimal witnesses for single-objective queries via a MILP-based approach. In fact, we can even handle multi-objective queries $\query =\Conj_{i=1}^k \queryProb_{\geq \lambda_i}(\prop_i)$, by observing that a subsystem $\dtmc'$ is a witnessing subsystem for $\query$ if and only if $\dtmc' \models \queryProb_{\geq \lambda_i}(\prop_i)$ for all $i \in [k]$. For MDPs this generally does not hold.
Using this observation, we now devise a corresponding MILP. Let $\prop_i$ be recognized by UBA $\automaton_i$ for all $i \in [k]$.
For all UBA $\automaton_i$, we define $\mathcal{S}_{\dtmc \times \automaton_i}$ to be the set of vectors $\vect{x} \in [0,1]^\states$ that satisfy $\vect{B}_{\dtmc \times \automaton_i} \vect{x} \geq \vect{x}$ and $\boldMu_D^{\top} \vect{x}_D \leq 1$ for all $D \in \mathcal{D}_{\dtmc \times \automaton_i}^+$. Given a query $\Conj_{i=1}^k \queryProb_{\geq \lambda_i}(\prop_i)$, we can use the MILP in \Cref{milp:dtmc-uba} to determine minimal witnessing subsystems. Intuitively, for each $\vect{x}_i$ we have that $\states' \subseteq \states$ with $\stateSupp{\vect{x}_i} \subseteq \states'$ induces a witnessing subsystem for $\queryProb_{\geq \lambda_i}(\prop_i)$. Hence, all $\states' \subseteq \states$ that satisfy $\Union_{i \in [k]} \stateSupp{\vect{x}_i} \subseteq \states'$, induce a witnessing subsystem for $\query$. For disjunctive queries $\Disj_{i=1}^k \queryProb_{\geq \lambda_i}(\prop_i)$, we can compute the minimal subsystem for each $\queryProb_{\geq \lambda_i}(\prop_i)$ and then return the smallest among them.

For LTL properties $\varphi_i$, the number of states in the corresponding UBAs $\automaton_i$ might be exponential in $\card{\varphi_i}$, but the number of labels is linear in our input problem. Thus, the resulting MILP can be solved in single-exponential time.

\section{Experimental Evaluation}
\label{section:experimental}
We have implemented the computation of certificates and witnesses for multi-objective $\omega$-regular queries in a C\texttt{++} tool based on \textsc{Storm} \cite{hensel_probabilistic_2022}, using \textsc{Gurobi} \cite{gurobi_optimization_llc_gurobi_2024} to optimize MILPs and \textsc{Spot} \cite{duret.22.cav} for translating LTL to $\omega$-automata. Our tool represents the MDPs \emph{exactly}, i.e. transition probabilities are represented as rationals. The certificates for multi-objective reachability underlying our certificates are computed using the approach in \Cref{subsection:certificate-rabin-streett} (via \emph{exact policy iteration}) and the validity of the computed certificates is checked w.r.t. \emph{exact} arithmetic.
We are interested in addressing the following research questions:
{
\setlist[description]{font=\normalfont}
\begin{description}
\item[\textsf{RQ1}:] What is the overhead of certifying the MEC decomposition?
\item[\textsf{RQ2}:] What is the overhead of certifying multi-objective $\omega$-regular queries?
\item[\textsf{RQ3}:] Does the UBA approach offer an advantage?
\end{description}
}
\noindent\textbf{Setup.} We study the overhead of certifying the MEC decomposition (\textsf{RQ1}) by comparing the runtime of the \emph{uncertifying} MEC decomposition in \textsc{Storm} with the \emph{certifying} variant in our tool on $167$ MDP instances from the \emph{Quantitative Verification Benchmark Set} (QVBS) \cite{hartmanns_quantitative_2019}.
We evaluate the overhead of certifying multi-objective $\omega$-regular queries (\textsf{RQ2}) on models and queries from \cite{baier_markov_2023-1,forejt_quantitative_2011,sickert_mochiba_2016}, in total $165$ instances.
\textsc{Prism} supports the verification of multi-objective $\omega$-regular queries, while \textsc{Storm} does not \cite{quatmann_verification_2023}. We first compare the runtime of our tool when computing certificates and when \emph{deactivating} the computation of certificates, allowing us to study the \emph{overhead} of certification, instead of differences in tool performances. The comparison of the uncertifying variant of our tool with \textsc{Prism} can be found in \Cref{appendix:experiments}. For the special case of \emph{single-objective reachability}, we consider $72$ reachability instances from QVBS and compare our tool against the fixed-point certificate implementation from \cite{chatterjee_fixed_2025} with \emph{exact policy iteration} as certification method, which we refer to as \textsc{StormCert}. Lastly, we evaluate the effectiveness of the UBA approach for DTMCs from \Cref{section:dtmc} (\textsf{RQ3}), by comparing it with the approach for MDPs from \Cref{section:witnesses-mdps}.

In the following, runtimes are wallclock times and include the time for parsing and building the model, but exclude the time for exporting the certificates or witnesses. All experiments have been run on a machine with Ubuntu 22.04, an Intel i7-1165G7 CPU and 32GB RAM. Every single execution is given a timeout of 10 minutes. We refer to \cite{baier_2025_15680332} and \Cref{appendix:experiments} for details on the considered instances. The implementation and experimental data are available at \cite{baier_2025_15680332}.
\begin{figure*}[t!]
    \centering
    \begin{subfigure}[t]{0.33\textwidth}
        \centering
        \includegraphics[scale=0.24]{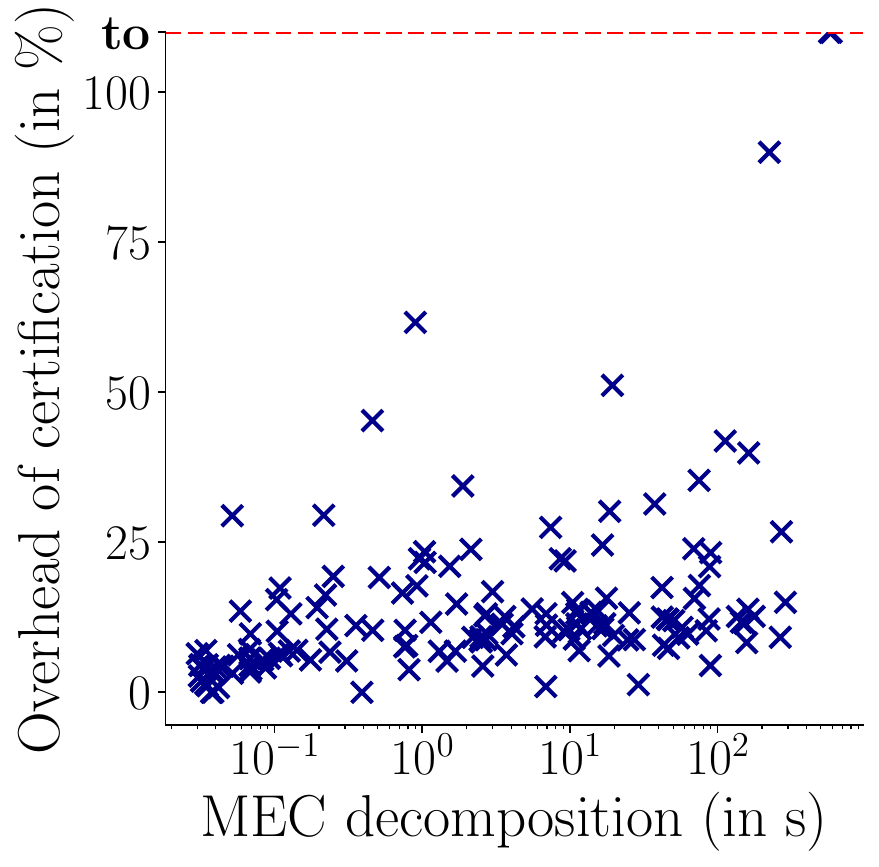}
        \caption{\scriptsize MEC decomposition.}
        \label{subfig:mec-overhead}
    \end{subfigure}%
        \begin{subfigure}[t]{0.33\textwidth}
        \centering
        \includegraphics[scale=0.24]{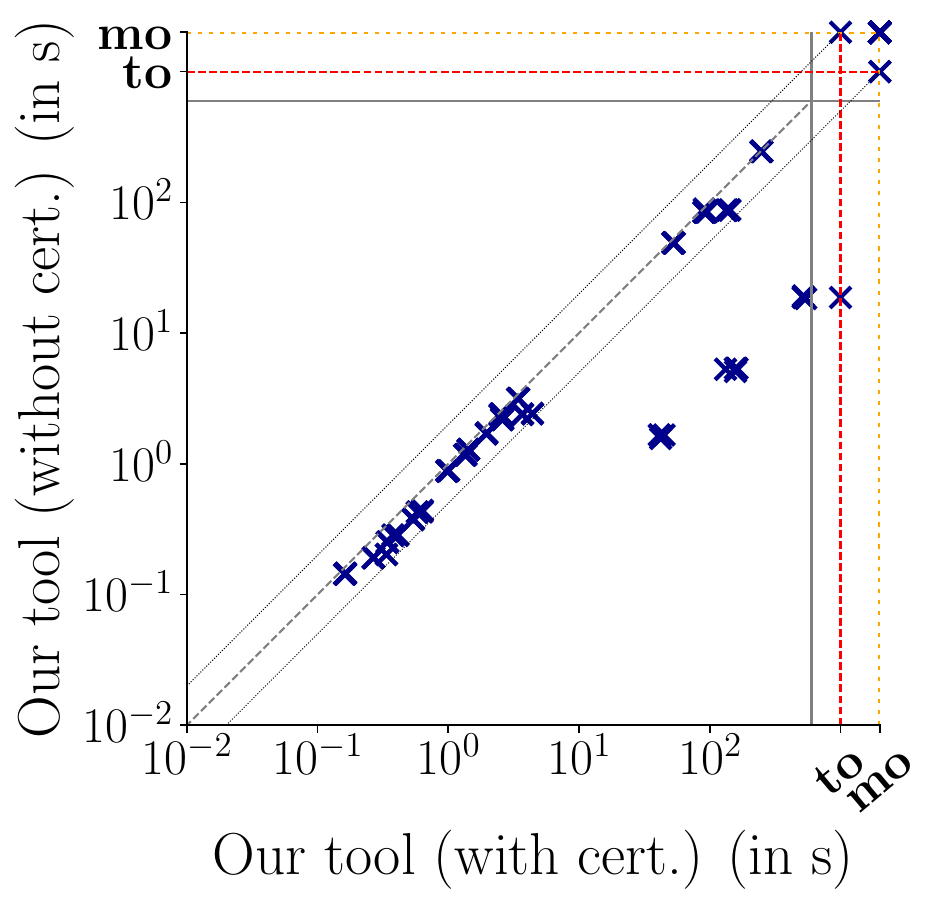}
        \caption{\scriptsize Multi-objective $\omega$-regular.}
        \label{subfig:omega-overhead}
    \end{subfigure}
    \begin{subfigure}[t]{0.33\textwidth}
        \centering
        \includegraphics[scale=0.24]{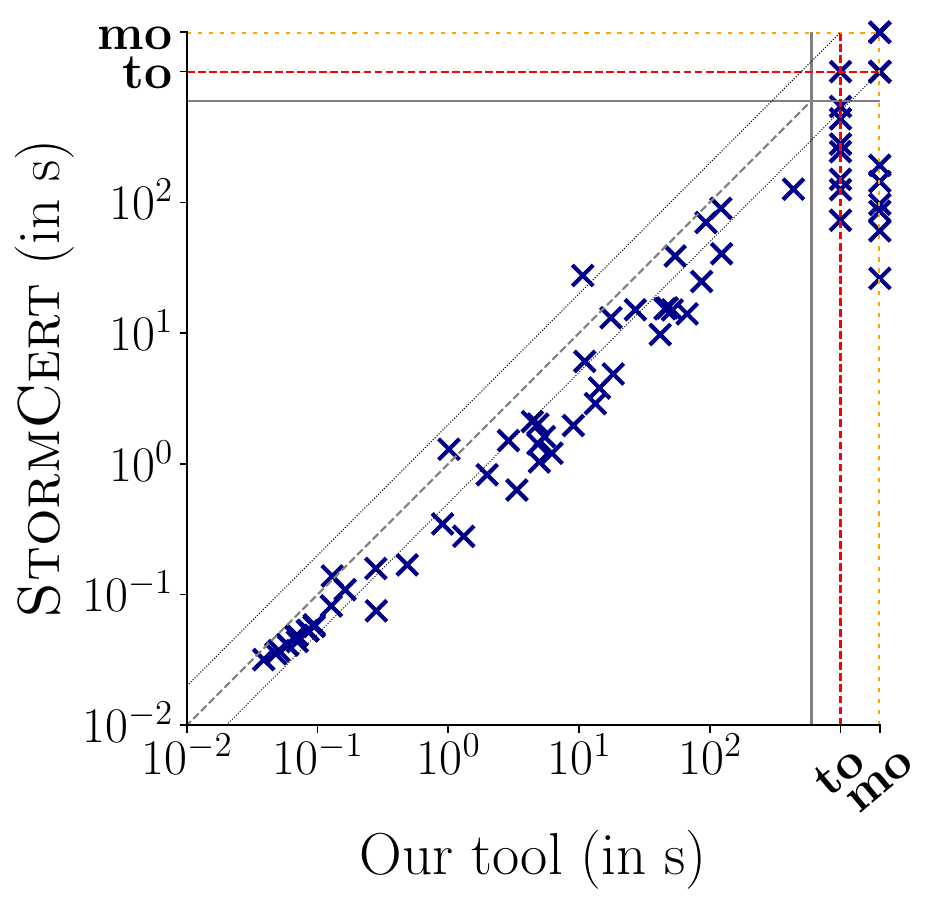}
        \caption{\scriptsize Single-objective reachability.}
        \label{subfig:reach-scatter}
    \end{subfigure}
    \caption{Results for \textsf{RQ1} \& \textsf{RQ2}. In \Cref{subfig:omega-overhead,subfig:reach-scatter} the diagonal corresponds to equal runtimes and the parallel lines to runtimes differing by a factor of $2$.}
\end{figure*}

\medskip

\noindent\textbf{Overhead of MEC certification \textsf{(RQ1)}.} The overhead of certifying the MEC decomposition is shown in \Cref{subfig:mec-overhead}. The horizontal axis shows the runtime of the MEC decomposition (of \textsc{Storm}) and the vertical the relative overhead when providing certificates, i.e. the percentage of the runtime for additionally computing the certificate. Instances for which the MEC decomposition already timed-out are omitted. For roughly $88\%$ of the considered models the overhead is \emph{less than $25\%$}. However, there are outliers where the additional certification almost requires double the amount of time or times out. Upon inspecting these instances, we could observe that they have many states ($\card{\states} > 15 \cdot 10^6$) or many MECs. While further research on difficult instances is needed, we can conclude that for most models from QVBS \emph{the cost of providing certificates is small}.

\medskip

\noindent\textbf{Overhead of query certification \textsf{(RQ2)}.} The runtime of our tool with (resp. without) computation of certificates is shown in \Cref{subfig:omega-overhead}, where points below the diagonal indicate that the certification is slower. For many queries the certification is at most two times slower (points within parallel lines), but there are a couple of outliers for which additionally computing certificates incurs a significant overhead. Recall that the size of our certificates directly depends on the MEC structure in the product MDP and thus also the automaton representing the property. Thus, it seems that certification incurs a reasonable overhead, but can be quite sensitive to the considered model and query. For the specific case of \emph{single-objective} reachability, we compare our approach with the implementation of the fixed-point certificates from \cite{chatterjee_fixed_2025}. The comparison is shown in \Cref{subfig:reach-scatter}. Unsurprisingly, the fixed-point certificates can be computed faster, since they only need to perform policy iteration once. In contrast, we need to perform policy iteration and then additionally compute the $\vect{y}$ certificate. However, fixed-point certificates do not certify \emph{multi-objective reachability} queries. As expected, both approaches always produced valid certificates (due to the exact methods used).

\medskip

\noindent\textbf{Effectiveness of UBA approach \textsf{(RQ3)}.} We consider the DTMC model of \emph{Herman's self stabilising algorithm} from \cite{HERMAN199063,kwiatkowsa_prism_2012}. The protocol involves $N$ processes that are connected via a ring in which tokens are passed around. A stable configuration is reached once there is exactly one token. We consider the model with $N=7$ processes, resulting in a DTMC with $128$ states, and LTL property that ``a stable configuration is eventually reached and $k$ steps before that there are $7$ tokens in the ring'' where $k \in \{3,\ldots,14\}$. It is known that deterministic automata require at least $2^k$ states for this property, while the size of UBAs is linear in $k$ \cite{baier_markov_2023-1}. For each query, we compute state-minimal witnesses via the DRA (\Cref{section:witnesses-mdps}) and UBA approach (\Cref{section:dtmc}). The results are shown in \Cref{table:uba-dra}. For the studied properties, the UBA approach is \emph{always faster}. The UBAs and their products are, for the most part, significantly smaller than their DRA counterparts. \textsc{Spot} needs more time for translating the LTL formula to UBAs and fails to find small UBAs for $k \geq 13$. Overall, this case study demonstrates the usefulness of the UBA approach for finding minimal witnesses.
\begin{table}[!t]
\centering

\caption{Comparison of UBA (\Cref{section:dtmc}) and DRA (\Cref{section:witnesses-mdps}) approach.}
\setlength{\tabcolsep}{3pt}
\def\arraystretch{0.8}%
\scriptsize{

\begin{tabular}{r|cc|cc|cc|cc}
  & \multicolumn{2}{c|}{Total runtime (in s)} & \multicolumn{2}{c|}{\textsc{Spot} runtime (in s)} & \multicolumn{2}{c|}{Automata} & \multicolumn{2}{c}{Product} \\
 $k$ & UBA & DRA & \text{UBA} & \text{DRA} & $\card{\automaton_{\text{UBA}}}$ & $\card{\automaton_\text{DRA}}$ & $\card{\mathcal{C} \times \automaton_\text{UBA}}$ & $\card{\mathcal{C} \times \automaton_\text{DRA}}$ \\
\hline
3 & \textbf{0.241} & 0.365 & 0.026 & 0.008 & 5 & 10 & 358 & 481 \\
4 & \textbf{39.668} & 311.41 & 0.032 & 0.008 & 6 & 18 & 472 & 819 \\
5 & \textbf{39.866} & to & 0.036 & 0.009 & 7 & 34 & 586 & 1269 \\
6 & \textbf{113.597} & 485.13 & 0.039 & 0.011 & 8 & 66 & 700 & 1831 \\
7 & \textbf{328.487} & to & 0.045 & 0.012 & 9 & 130 & 814 & 2505 \\
8 & \textbf{366.616} & to & 0.048 & 0.019 & 10 & 258 & 928 & 3291 \\
9 & to & to & 0.055 & 0.018 & 11 & 514 & 1042 & 4189 \\
10 & \textbf{118.723} & to & 0.067 & 0.022 & 12 & 1026 & 1156 & 5199 \\
11 & \textbf{104.276} & to & 0.095 & 0.037 & 13 & 2050 & 1270 & 6321 \\
12 & \textbf{110.402} & to & 0.178 & 0.063 & 14 & 4098 & 1384 & 7555 \\
13 & to & to & 45.283 & 0.113 & 6146 & 8194 & 14060 & 8901 \\
14 & to & to & 91.649 & 0.219 & 12290 & 16386 & 16662 & 10359
\end{tabular}

}

\label{table:uba-dra}

\end{table}

\section{Conclusion}
We have presented certificates for multi-objective $\omega$-regular queries in MDPs, consisting of improved certificates for the MEC decomposition, new certificates for satisfaction (resp. violation) of Rabin properties in ECs and certificates for reachability queries from \cite{baier_certificates_2024}. Towards explicating verification results, we have shown how to obtain minimal witnesses. For DTMCs and LTL, our approach to witnesses is based on UBAs, resulting in MILPs of single- instead of double-exponential size.
The experimental results for the MEC certificates are promising, showing that certification often only incurs a small overhead. The certification of $\omega$-regular queries seems to be quite sensitive to the properties under consideration, but was often reasonable in our setup. Lastly, we have showcased the benefit of the UBA approach on a case study.

We plan on extending our certificates to more general models, e.g stochastic games. For the implementation, an even tighter integration with \textsc{Storm} would be beneficial. Finally, we want to apply our techniques to more case studies.

\bibliographystyle{splncs04}
\bibliography{ref}

\appendix
\newpage
\section{General Definitions and Concepts}
\label{appendix:subsystems}
\noindent\textbf{Subsystems.} In this work we define subsystems to be sub-stochastic MDPs. This is equivalent to definitions provided in previous works \cite{jantsch_certificates_2022,funke_farkas_2020,baier_certificates_2024}. Let $\mdp = (\states, \actions, \init, \transMat)$ be an MDP and $\mdp' = (\states', \actions, \init, \transMat')$ a sub-stochastic subsystem of $\mdp$. We can obtain a corresponding \emph{stochastic} subsystem as in previous works, by considering the MDP $\mdp_{\bot}' = (\states' \union \{\exit\}, \actions \union \{\tau\}, \init, \transMat_{\exit}')$ where only action $\tau$ is available in $\exit$ and $\transMat_{\exit}'$ is defined by
\[
\transMat_{\exit}'((\state, \action), \state') =
\begin{cases}
\sum_{t \in \states \setminus \states'}\transMat(\state, \action, t) & \text{if } \state \neq \exit \land \state' = \exit\\
\transMat(\state, \action, \state') &  \text{if } \state \neq \exit \land \state' \neq \exit\\
1 & \text{if } \state = \state' = \exit\\
0 & \text{otherwise }
\end{cases}
\]
for all $\state \in \states' \union \{\exit\}$, $\action \in \actions(\state)$ and $\state' \in \states' \union \{ \exit \}$. Throughout this work, we consider properties $\prop \subseteq \states^\omega$. Hence, whenever $\exit$ is reached the property is immediately violated. Due to the one-to-one correspondence of schedulers between $\mdp_{\bot}'$ and $\mdp'$, we see that both definitions coincide.

In the following, we state the monotonicity property of witnessing subsystems for $\omega$-regular queries.

\begin{lemma}[Monotonicity]
\label{lemma:monotonicity-subsystems}
Let $\prop_1, \ldots, \prop_k \subseteq \states^\omega$ be $\omega$-regular properties and $\mdp'$ be a subsystem of $\mdp$. Then the following hold:
\begin{enumerate}[itemsep=0mm, topsep=0.8mm, parsep=0mm]
\item $\mdp' \models\exists \scheduler \centerdot \Conj_{i=1}^k \queryProb_{\geq \lambda_i}(\prop_i) \implies \mdp \models\exists \scheduler \centerdot \Conj_{i=1}^k \queryProb_{\geq \lambda_i}(\prop_i)$
\item $\mdp' \models \forall \scheduler \centerdot \Disj_{i=1}^k \queryProb_{\geq \lambda_i}(\prop_i) \implies \mdp \models \forall \scheduler \centerdot \Disj_{i=1}^k \queryProb_{\geq \lambda_i}(\prop_i)$
\end{enumerate}
\end{lemma}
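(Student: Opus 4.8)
The plan is to reduce both implications to a single auxiliary fact relating schedulers of $\mdp$ to schedulers of the subsystem $\mdp'$. Working with the sub-stochastic formulation of subsystems (equivalently, the trap-state formulation discussed above), note that $\pathsFin(\mdp') \subseteq \pathsFin(\mdp)$ and that for every $\state \in \states'$ the actions enabled in $\mdp'$ coincide with those enabled in $\mdp$. First I would establish the claim: if $\scheduler \in \schedulers^{\mdp}$ and $\scheduler' \coloneqq \scheduler \restriction \pathsFin(\mdp')$, then $\scheduler'$ is a scheduler of $\mdp'$ and $\prob_{\mdp', \init}^{\scheduler'}(\prop) \leq \prob_{\mdp, \init}^{\scheduler}(\prop)$ for every measurable $\prop \subseteq \states^\omega$; and conversely, every scheduler of $\mdp'$ arises as the restriction of some scheduler of $\mdp$ (extend arbitrarily outside $\pathsFin(\mdp')$).

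To prove the claim I would argue as follows. Validity of $\scheduler'$ is immediate, since the support condition $\supp{\scheduler'(\Path)} \subseteq \actions(\last{\Path})$ is inherited from $\scheduler$ because the enabled actions agree on $\states'$. For the inequality, a routine induction on path length shows $\prob_{\mdp'}^{\scheduler'}(\cyl(\Path)) = \prob_{\mdp}^{\scheduler}(\cyl(\Path))$ for every $\Path \in \pathsFin(\mdp')$, since along such a path all scheduler choices agree and all transition probabilities agree ($\transMat' = \transMat$ on $\states' \times \actions \times \states'$). Because the cylinders of finite paths within $\states'$ generate the $\sigma$-algebra on which $\prob_{\mdp'}^{\scheduler'}$ lives and $\prop$ is measurable, $\prob_{\mdp'}^{\scheduler'}(\prop)$ equals the $\prob_{\mdp}^{\scheduler}$-measure of the set of infinite paths of $\mdp$ that stay in $\states'$ and have trace in $\prop$; this set is contained in the set of all infinite paths of $\mdp$ with trace in $\prop$, so monotonicity of measures gives the desired inequality.

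Given the claim, the two implications are short. For part~1, take a scheduler $\scheduler'$ of $\mdp'$ witnessing $\exists \scheduler \centerdot \Conj_{i=1}^k \queryProb_{\geq \lambda_i}(\prop_i)$ and an extension $\scheduler$ of it to $\mdp$ as in the claim; then $\prob_{\mdp}^{\scheduler}(\prop_i) \geq \prob_{\mdp'}^{\scheduler'}(\prop_i) \geq \lambda_i$ for all $i$, so $\scheduler$ witnesses the query in $\mdp$. For part~2, fix an arbitrary scheduler $\scheduler$ of $\mdp$ and let $\scheduler' \coloneqq \scheduler \restriction \pathsFin(\mdp')$; by hypothesis some $i_0$ satisfies $\prob_{\mdp'}^{\scheduler'}(\prop_{i_0}) \geq \lambda_{i_0}$, and by the claim $\prob_{\mdp}^{\scheduler}(\prop_{i_0}) \geq \lambda_{i_0}$, hence $\Disj_{i=1}^k \prob_{\mdp}^{\scheduler}(\prop_i) \geq \lambda_i$; since $\scheduler$ was arbitrary, $\mdp \models \forall \scheduler \centerdot \Disj_{i=1}^k \queryProb_{\geq \lambda_i}(\prop_i)$.

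The argument is elementary; the one point requiring care — and the closest thing to an obstacle — is the measure-theoretic bookkeeping in the claim: the measures $\prob_{\mdp'}^{\scheduler'}$ and $\prob_{\mdp}^{\scheduler}$ formally live on different path spaces, and one must check that they agree on cylinders of finite paths within $\states'$ and that an $\omega$-regular property (indeed any Borel set of state sequences) is pinned down by that information. I would also note that $\omega$-regularity is used only through measurability, and that the stated duality between $\existsCQ$- and $\universalDQ$-queries does not by itself let part~2 be deduced from part~1, since it reverses the direction of the subsystem inclusion; it is the restriction construction in the claim that supplies part~2.
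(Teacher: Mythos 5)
Your proposal is correct and follows essentially the same route as the paper: extend a scheduler of $\mdp'$ arbitrarily to $\mdp$ for the existential direction, and restrict a scheduler of $\mdp$ to $\pathsFin(\mdp')$ for the universal direction, using that finite paths of the subsystem carry the same probabilities in both models and that paths leaving $\states'$ violate every $\prop_i \subseteq \states^\omega$. The only cosmetic difference is that you argue part~2 directly over an arbitrary scheduler of $\mdp$, whereas the paper phrases the same argument as a contraposition.
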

\begin{proof}
We use the observation that $\pathsFin(\mdp') \subseteq \pathsFin(\mdp)$ from the proof of \cite[Proposition 4.4]{jantsch_certificates_2022} and the fact that paths in $\pathsFin(\mdp')$ have the same probabilities attached to them in $\mdp'$ and $\mdp$. We begin with the first statement. For a given scheduler $\scheduler'$ for $\mdp'$, we can construct a corresponding scheduler $\scheduler$ for $\mdp$ as follows. For all $\Path \in \pathsFin(\mdp)$, we define $\scheduler(\Path) \coloneqq \scheduler(\Path)$ if $\Path \in \pathsFin(\mdp')$ and $\scheduler(\Path) \coloneqq \boldMu$ for some $\boldMu \in \distr{\actions(\last{\Path})}$ otherwise. Together with the observation above, we can conclude that $\prob_{\mdp'}^{\scheduler'}(\prop_i) \leq \prob_{\mdp}^\scheduler(\prop_i)$ for all $i \in [k]$. The second statement can be shown via contraposition and the arguments are analogous. \qed
\end{proof}
Next, we show that witnessing subsystems of the product can be used to obtain witnessing subsystems for the original MDP.
\begin{lemma}[Subsystems of product]
Let $\mdp = (\states, \actions, \init, \transMat)$ be an MDP, $\query$ be a multi-objective query and $\mathcal{N}$ be the product of $\mdp$ and deterministic $\omega$-automata $\automaton_1, \ldots, \automaton_\ell$. Further, let $N'$ be a subset of states of $\mathcal{N}$ and $\states' = \{\state \in \states \mid \exists \autoState_1, \ldots, \autoState_\ell \centerdot \tuple{\state, \autoState_1, \ldots, \autoState_\ell} \in N'\}$. Then, $\mathcal{N}_{N'} \models \query$ implies $\mdp_{\states'} \models \query$.
\end{lemma}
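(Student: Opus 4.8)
The goal is to lift a witnessing subsystem $\mathcal{N}_{N'}$ of the product $\mathcal{N} = \mdp \times \automaton_1 \times \cdots \times \automaton_\ell$ to a witnessing subsystem $\mdp_{\states'}$ of the original MDP, where $\states'$ is the projection of $N'$ onto the $\mdp$-component. The key structural fact I would exploit is that, since $\automaton_1, \ldots, \automaton_\ell$ are \emph{deterministic}, the product $\mathcal{N} = \mdp \times \automaton_1 \times \cdots \times \automaton_\ell$ is itself just the product of $\mdp$ with the single deterministic automaton $\automaton = \automaton_1 \times \cdots \times \automaton_\ell$ (whose acceptance condition combines the $\acceptanceCondition_i$ appropriately so that the lifted $\omega$-regular properties $\prop_i$ on $\mathcal{N}$ correspond to the $\rho_i$ on $\mdp$). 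So without loss of generality I reduce to the case $\ell = 1$, a single deterministic automaton $\automaton$, and write $\mathcal{N} = \mdp \times \automaton$.

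\textbf{Key steps.} First I would show that $\mathcal{N}_{N'}$, as a sub-stochastic MDP, is itself (isomorphic to) the product of the sub-stochastic MDP $\mdp_{\states'}$ with $\automaton$, restricted to the reachable-from-$N'$ part. More precisely: the product construction and the subsystem construction "commute" in the sense that $\mathcal{N}_{N'}$ is a subsystem of $\mdp_{\states'} \times \automaton$. This is where determinism is essential — in $\mdp_{\states'} \times \automaton$ each state $\tuple{\state,\autoState}$ with $\state \in \states'$ has its automaton component uniquely determined by the history, exactly as in $\mathcal{N}$, and the transition probabilities of $\mdp_{\states'}$ agree with those of $\mdp$ on pairs with both endpoints in $\states'$. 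Next, since the multi-objective query $\query$ is evaluated over the lifted properties, and the standard product-MDP correspondence (stated in the excerpt: $\mathcal{N} \models \query_\mathcal{N} \iff \mdp \times \automaton \models \query$) is scheduler-preserving and applies equally to sub-stochastic MDPs, I would invoke it to get $\mdp_{\states'} \times \automaton \models \query$ (via the lifted property) implies $\mdp_{\states'} \models \query_\mathcal{N}$. Combining: $\mathcal{N}_{N'} \models \query$, together with the fact that $\mathcal{N}_{N'}$ is a subsystem of $\mdp_{\states'} \times \automaton$, gives via monotonicity (\Cref{lemma:monotonicity-subsystems}, which applies to the sub-stochastic $\mdp_{\states'} \times \automaton$) that $\mdp_{\states'} \times \automaton \models \query$, and hence $\mdp_{\states'} \models \query_\mathcal{N}$, i.e. $\mdp_{\states'} \models \query$ in the original formulation.

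\textbf{The main obstacle.} The delicate point is step one: verifying that $\mathcal{N}_{N'}$ is genuinely a subsystem of $\mdp_{\states'} \times \automaton$ and not something larger. One has to check the definition of induced subsystem carefully: $\mathcal{N}_{N'}$ keeps a state-action pair $(\tuple{\state,\autoState},\action)$ only if all its successors stay in $N'$, whereas $\mdp_{\states'} \times \automaton$ (or rather its relevant induced part) keeps $(\state,\action)$ only if all $\mdp$-successors stay in $\states'$ — and a priori a successor $\state'$ of $\state$ could lie in $\states'$ while the corresponding product successor $\tuple{\state', \delta(\autoState,\state')}$ lies outside $N'$. So $\mdp_{\states'} \times \automaton$ could retain strictly more behaviour than $\mathcal{N}_{N'}$. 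This is fine, though: it only means $\mathcal{N}_{N'}$ is a subsystem of (a subsystem of) $\mdp_{\states'} \times \automaton$, and monotonicity still applies in the right direction. I would also need to handle the initial state: $\init \in \states'$ holds because $\tuple{\init, \delta(\autoInit,\init)} \in N'$ (a subsystem contains the initial state of $\mathcal{N}$), so $\mdp_{\states'}$ is well-defined with the correct initial state. A brief remark reconciling the sub-stochastic and trap-state formulations of subsystems (as in \Cref{appendix:subsystems}) closes any remaining gap. Apart from this bookkeeping about which transitions survive each construction, the argument is routine given the product correspondence and monotonicity already established.
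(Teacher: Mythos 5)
Your proof is correct, but it is organized differently from the paper's. The paper argues directly: it observes that the projections $\{\Path \restriction_{\mdp} \mid \Path \in \pathsFin(\mathcal{N}_{N'})\}$ are contained in $\pathsFin(\mdp_{\states'})$ with identical probabilities attached, and then repeats the scheduler-transfer argument of \Cref{lemma:monotonicity-subsystems} on these projected paths --- there is no intermediate object. You instead factor the claim through $\mdp_{\states'} \times \automaton$: you show $\mathcal{N}_{N'}$ is a subsystem of that sub-stochastic product, apply monotonicity there, and then invoke the product correspondence to descend to $\mdp_{\states'}$. Both routes rest on the same two ingredients (the path correspondence induced by determinism of the automata, and scheduler transfer between a subsystem and its parent), so neither is more elementary; your version makes the ``subsystem and product commute'' structure explicit and reusable, at the cost of having to check that the product correspondence and \Cref{lemma:monotonicity-subsystems} both extend to sub-stochastic MDPs (they do, via the trap-state translation of \Cref{appendix:subsystems}). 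One small correction: your ``main obstacle'' conflates the sub-MDP $\mdp[D]$ (which drops a state-action pair unless all its successors stay in $D$) with the induced subsystem $\mdp_{\states'}$ (which is sub-stochastic and keeps every enabled pair, truncating the mass that leaves $\states'$); under the paper's subsystem definition the containment of $\mathcal{N}_{N'}$ in $\mdp_{\states'} \times \automaton$ that you need holds exactly, so your fallback (``a subsystem of a subsystem'') is unnecessary, though it would also suffice.
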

\begin{proof}
We only need to observe that $\{\Path \restriction_{\mdp} \mid \Path \in \pathsFin(\mathcal{N}_{N'}) \} \subseteq \pathsFin(\mdp_{\states'})$ and that paths present in $\mathcal{N}_{N'}$ and $\mdp_{\states'}$ have the same probabilities attached to them. The remainder of the proof is then analogous to the proof of \Cref{lemma:monotonicity-subsystems}.
\end{proof}

\section{Supplementary Material for \Cref{subsection:satisfying-ecs}}
Our certificates for the MEC decomposition build on certificates for qualitative reachability from \cite{chatterjee_fixed_2025}.
\begin{lemma}
Let $\mdp = (\states, \actions, \init, \transMat)$ be an MDP and $\targets \subseteq \states$. We have $\prob_{\state}^{\min}(\eventually \targets) > 0$ for all $\state \in \states$ if and only if there exists $\mathtt{r} \in \naturals^{\states}$ such that $\mathtt{r}(f) = 0$ for all $f \in \targets$ and
\[
1 + \min_{\state' \in \supp{\transMat(\state, \action)}} \mathtt{r}(\state') \leq \mathtt{r}(\state)
\]
for all $\state \in \states \setminus \targets$ and $\action \in \actions(\state)$.
\label{lemma:fixed-point-qualitative}
\end{lemma}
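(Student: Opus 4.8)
The statement is the standard characterization of states from which the minimal probability of reaching $\targets$ is positive, phrased via an integer ranking function $\mathtt{r}$. Let me sketch both directions.

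For the ``if'' direction, suppose such an $\mathtt{r} \in \naturals^{\states}$ exists. Fix any $\state \in \states$ and any scheduler $\scheduler$; I want $\prob_{\state}^{\scheduler}(\eventually \targets) > 0$. The key observation is that for every $\state \notin \targets$ and every enabled action $\action$, the inequality $\mathtt{r}(\state) \geq 1 + \min_{\state' \in \supp{\transMat(\state,\action)}} \mathtt{r}(\state')$ forces the existence of a successor $\state'$ with $\transMat(\state,\action,\state') > 0$ and $\mathtt{r}(\state') < \mathtt{r}(\state)$. Hence from any state with $\mathtt{r}$-value $m$, under any scheduler, there is a path of length at most $m$ that strictly decreases $\mathtt{r}$ at every step until it hits $\targets$ (values in $\naturals$ cannot decrease forever, and the only way to get stuck is to reach $\targets$ where the decreasing requirement is dropped). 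Each step of this path has probability at least $p_{\min} > 0$, where $p_{\min}$ is the smallest positive transition probability in $\mdp$. Therefore $\prob_{\state}^{\scheduler}(\eventually \targets) \geq p_{\min}^{\mathtt{r}(\state)} > 0$, and since $\scheduler$ was arbitrary, $\prob_{\state}^{\min}(\eventually \targets) > 0$.

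For the ``only if'' direction, assume $\prob_{\state}^{\min}(\eventually \targets) > 0$ for all $\state \in \states$. Equivalently (by determinacy of reachability, or by a direct argument), for every $\state$ and every scheduler there is a finite path from $\state$ to $\targets$; in particular, from every $\state$ there is \emph{some} finite path to $\targets$ using only enabled actions, and moreover this holds no matter which action is chosen at the first step --- i.e. for every $(\state,\action) \in \SA$ with $\state \notin \targets$ there is $\state' \in \supp{\transMat(\state,\action)}$ from which $\targets$ is still reachable under all schedulers. I then define $\mathtt{r}(\state)$ to be the length of a shortest such ``robust'' path from $\state$ to $\targets$: formally, $\mathtt{r}(f) = 0$ for $f \in \targets$, and $\mathtt{r}(\state) = 1 + \max_{\action \in \actions(\state)} \min_{\state' \in \supp{\transMat(\state,\action)}} \mathtt{r}(\state')$ for $\state \notin \targets$, which is well-defined and finite by the reachability assumption (this is exactly a backward breadth-first layering of the MDP under the ``adversary picks the action, we pick the successor'' game). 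By construction $\mathtt{r}(\state) \geq 1 + \min_{\state' \in \supp{\transMat(\state,\action)}} \mathtt{r}(\state')$ for every enabled $(\state,\action)$ with $\state \notin \targets$, giving the desired vector.

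The main obstacle is making the ``only if'' direction rigorous: one must argue that the recursive definition of $\mathtt{r}$ terminates with finite values, which amounts to showing that the set $\{\state : \mathtt{r}(\state) \leq m\}$ is exactly the set of states from which $\targets$ is reachable within $m$ steps against every scheduler, and that these sets eventually exhaust $\states$. This is precisely the fixed-point / attractor computation underlying qualitative reachability, so the cleanest route is to cite the corresponding result from \cite{chatterjee_fixed_2025} (this lemma is stated there as Proposition~1) rather than re-deriving the attractor construction; the remaining content is the elementary probability-lower-bound argument in the ``if'' direction sketched above.
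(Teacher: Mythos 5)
Your proposal is correct, but it takes a different route from the paper: the paper's entire proof of this lemma is a citation (``directly follows from Lemma~2 and Proposition~1 of \cite{chatterjee_fixed_2025}''), whereas you give a self-contained argument. Your two directions are both sound. For the ``if'' direction, the $p_{\min}^{\mathtt{r}(\state)}$ lower bound works for arbitrary (randomized, history-dependent) schedulers because, conditioned on any action the scheduler may choose at a non-target state, some successor with strictly smaller $\mathtt{r}$-value is hit with probability at least $p_{\min}$; one small point worth making explicit is that the inequality forces $\mathtt{r}(\state) \geq 1$ for every $\state \notin \targets$ (since $\mathtt{r} \in \naturals^\states$), so a state with value $0$ must lie in $\targets$ and the strictly decreasing sequence genuinely terminates inside $\targets$ rather than at a zero-valued non-target state. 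For the ``only if'' direction, your layered attractor $A_0 = \targets$, $A_{m+1} = A_m \cup \{\state \mid \forall \action \centerdot \supp{\transMat(\state,\action)} \cap A_m \neq \emptyset\}$ with $\mathtt{r}(\state) = \min\{m \mid \state \in A_m\}$ is exactly the standard construction; finiteness follows because any state outside the fixed point admits an action all of whose successors stay outside, yielding a memoryless scheduler witnessing $\prob_{\state}^{\min}(\eventually \targets) = 0$. What your version buys is independence from the external reference; what the paper's version buys is brevity, since the cited Proposition~1 of \cite{chatterjee_fixed_2025} is precisely this ranking-function certificate.
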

\begin{proof}
Directly follows from \cite[Lemma 2]{chatterjee_fixed_2025} and \cite[Proposition 1]{chatterjee_fixed_2025}.
\end{proof}

\mecCertificate*
\begin{proof}
Let $\mdp_{/ \mathcal{D}} = (\partition \union \{\exit\}, \actions \union \{\tau\}, \mecClass[\partition]{\init}, \transMat')$ be the quotient with respect to the partition $\partition$ where 
\begin{itemize}[align=left, leftmargin=*, itemsep=0mm, topsep=0.8mm, parsep=0mm]
\item $\transMat'(\mecClass[\partition]{\state}, \action, \mecClass[\partition]{\state'}) = \transMat(\state, \action, \mecClass[\partition]{\state'})$ for all $(\state, \action) \in \SA_\partition$ and $\state' \in \states$,
\item $\transMat'(\mecClass[\partition]{\state}, \tau, \exit) = 1$ for all non-trivial $D \subseteq \partition$, and $\exit$ is absorbing.
\end{itemize}

To show the correctness of the MEC certificates, we consider the characterization of the MEC decomposition from \cite[Proposition~3.26]{jantsch_certificates_2022}. Specifically, $\partition$ corresponds to the MEC partition if and only if
\begin{itemize}
\item each $D \in \partition$ is strongly connected and
\item the quotient w.r.t. $\partition$ does not contain ECs (except for the added absorbing states).
\end{itemize}
Since the first condition is equivalent to the existence of EC certificates, we only need to show that the existence of $\mathtt{r}$ is equivalent to the absence of ECs in the quotient $\mdp_{/ \mathcal{D}}$, i.e. its only EC is formed by $\exit$.
This is equivalent to showing that in $\mdp_{/ \mathcal{D}}$ the minimal probability of reaching $\exit$ is positive from all states. We first show that from $\mathtt{r}$ we can construct a $\mathtt{r}'$ that satisfies the conditions of \Cref{lemma:fixed-point-qualitative} w.r.t. $\mdp_\partition$. Conversely, we show that a vector $\mathtt{r}'$ of \Cref{lemma:fixed-point-qualitative} satisfies the conditions imposed on $\mathtt{r}$.
\begin{itemize}
\item[$\Rightarrow$:] By assumption, for all $(\state, \action) \in \SA_{\partition}$ we have
\[
1 + \min_{\state' \in \supp{\transMat(\state, \action)}} \mathtt{r}(\mecClass[\partition]{\state'}) \leq \mathtt{r}(\mecClass[\partition]{\state}).
\]
Consider $\mathtt{r}' \in \naturals^{\partition \union \{\exit\}}$ defined by $\mathtt{r}'(D) = \mathtt{r}(D) + 1$ for all $D \in \partition$ and $\mathtt{r}'(\exit) = 0$. For all $(\state, \action) \in \SA_\partition$ we have
\begin{align*}
\mathtt{r}'(\mecClass[\partition]{\state}) = \mathtt{r}(\mecClass[\partition]{\state}) + 1 &\geq (\min_{\state' \in \supp{\transMat(\state, \action)}} \mathtt{r}(\mecClass[\partition]{\state'}) + 1) + 1\\
&= \min_{\state' \supp{\transMat(\state, \action)}} \mathtt{r}'(\mecClass[\partition]{\state'}) + 1
\end{align*} 
By construction, we have $\mathtt{r}'(D) \geq 1$ for all $D \in \partition$ and thus also $\mathtt{r}'(D) \geq 1 + \mathtt{r}'(\exit) = 1$ for all non-trivial $D$. By \Cref{lemma:fixed-point-qualitative} this implies that $\prob_{D}^{\min}(\eventually \exit) > 0$ for all $D \in \partition$. 
\item[$\Leftarrow$:] Suppose the minimal probability of reaching $\exit$ is positive from every state in $\mdp_\partition$. Then, by \Cref{lemma:fixed-point-qualitative} there exists a corresponding $\mathtt{r}'$ that satisfies
\[
1 + \min_{D' \in \supp{\transMat'(D, \action)}} \mathtt{r}'(D') \leq \mathtt{r}'(D)
\]
for all $D \in \partition$ and $(\state, \action) \in \SA_\partition$ with $D = \mecClass[\partition]{\state}$. However, this implies that the conditions on the MEC certificate are satisfied.
\end{itemize}
\qed 
\end{proof}

\absenceCertificate*
\begin{proof}~
\begin{itemize}
\item[$\Rightarrow$:] Suppose there does not exist an EC $\ec$ with $\Conj_{i=1}^k \ec \modelsRabin \pairs_i$. Then for all $(F_1, E_1) \in \pairs_1, \ldots, (F_k, E_k) \in \pairs_k$ we have that the sub-MDP formed by $\bigcap_{i=1}^k E_i$ does not contain an EC $\ec$ such that $\states(\ec) \intersection F_i \neq \emptyset$ for all $i \in [k]$. Otherwise there would be an EC satisfying all Rabin properties. For each of these sub-MDPs we can then determine a corresponding MEC certificate.
\item[$\Leftarrow$:] Suppose there exists a absence certificate and for the sake of contradiction assume that $\mdp$ has an EC $\ec$ such that $\Conj_{i=1}^k \ec \modelsRabin \pairs_i$. Let $(F_1, E_1) \in \pairs_1, \ldots, (F_k, E_k) \in \pairs_k$ denote the Rabin pairs that are satisfied by $\ec$, that is $\states(\ec) \subseteq \bigcap_{i=1}^k E_i$ and $\states(\ec) \intersection F_i \neq \emptyset$ for all $i \in [k]$. Since $\mathtt{Abs}$ is a absence certificate, there exists a \emph{MEC certificate} $(\mathtt{EC}, \mathtt{r}, \partition)$ for the sub-MDP formed by $\bigcap_{i=1}^k E_i$. Further, there exists $D \in \partition$ such that $\states(\ec) \subseteq D$ and we know that $D \intersection F_i = \emptyset$ for some $i \in [k]$. However, this contradicts the assumption that $\ec$ satisfies all properties. Hence, there does not exist a satisfying EC in $\mdp$.
\end{itemize}
\qed
\end{proof}

\hardness*
\begin{proof}
\textsf{NP} membership is clear, since given an EC, we can check for all Rabin properties $\pairs_i$ whether there exists a Rabin pair $(F, E) \in \pairs_i$ such that $\states(\ec) \intersection F \neq \emptyset$ and $\states(\ec) \subseteq E$.
\textsf{NP}-hardness follows via reduction from SAT. For a CNF formula 
\[
\varphi = C_1 \land \ldots \land C_m \text{ over variables } X = \{x_0, \ldots, x_n\},
\] 
we consider the MDP $\mdp$ with states $\states = X \union \{\init\}$. Each state $x \in X$ has a transition to state $\init$ with probability $1$ and $\init$ has a transition to all states in $X$ and itself with probability $1$. Let $\pairs_i = \{ (\{x \}, \states) \mid x \in C_i \} \union \{ (\states, \states \setminus \{x\} ) \mid \neg x \in C_i \}$ for all $i \in [m]$. Note that the construction of $\mdp$ and $\pairs_1, \ldots, \pairs_m$ only requires polynomial time. Intuitively, $\pairs_i$ encodes the clause $C_i$. Whenever a variable $x$ appears in $C_i$ and is not negated, we add a Rabin pair that requires an EC to contain it. If $x$ appears negated in $C_i$, then a Rabin pair expresses that an EC should not contain it.
\begin{enumerate}
\item[$\Rightarrow$:] Suppose there exists an EC $\ec$ that satisfies all Rabin properties $\pairs_1, \ldots, \pairs_m$. Consider the assignment $\beta \colon X \to \{0, 1\}$ that sets all variables $x \in \states(\ec) \intersection X$ to $1$ and all variables $x \in (\states \setminus \states(\ec)) \intersection X$ to $0$. We claim that $\beta$ is a satisfying assignment for $\varphi$. For each clause $C_i$, we know that there exists a pair $(F, E) \in \pairs_i$ such that $\states(\ec) \intersection F \neq \emptyset$ and $\states(\ec) \subseteq E$. Further, $(F, E)$ is either of the form $(\{x\}, \states)$ or $(\states, \states \setminus \{x\})$. If the former holds, then we know $x \in \states(\ec)$, $x \in C_i$ and $\beta(x) = 1$, hence the clause $C_i$ is satisfied by $\beta$. If the latter holds, then $x \notin \states(\ec)$, $\neg x \in C_i$ and $\beta(x) = 0$ and again the clause is satisfied.
\item[$\Leftarrow$:] Now suppose $\beta$ is a satisfying assignment for $\varphi$ and let $\states' = \{x \in X \mid \beta(x) = 1\}$. We then consider the EC $\ec$ formed by $\states' \union \{\init\}$ (which exists by construction of $\mdp$). For each clause $C_i$ there exists $x \in C_i$ with $\beta(x) = 1$ or there exists $\neg x \in C_i$ with $\beta(x) = 0$. If the former holds, then $x \in \states(\ec)$ and the Rabin pair $(\{x\}, \states) \in \pairs_i$ is satisfied. If the latter holds, then $x \notin \states(\ec)$ and the Rabin pair $(\states, \states \setminus \{x\}) \in \pairs_i$ is also satisfied.
\end{enumerate}
\qed
\end{proof}

\section{Supplementary Material for \Cref{subsection:certificate-rabin-streett}}
\label{appendix:certificates}
\subsection{Omitted Proofs}
\theoremReduction*
\begin{proof}
The first statement follows from \cite{forejt_quantitative_2011} and \cite[Lemma~2]{randour_percentile_2015}. We show the second statement for $\triangleright = \ \geq$. The proof for $\triangleright = \ >$ is completely analogous.
\begin{align*}
\mdp \models \forall \scheduler \centerdot \Disj_{i=1}^k \queryProb_{\geq \lambda_i}(\bar{\prop}_i) &\iff \mdp \not\models\exists \scheduler \centerdot \Conj_{i=1}^k \queryProb_{< \lambda_i}(\bar{\prop}_i) \\
&\iff \mdp \not\models\exists \scheduler \centerdot \Conj_{i=1}^k \queryProb_{> 1- \lambda_i}(\prop_i) \\
&\iff  \mdp_{/ \partition_{\MECS}}^{\mathcal{I}^*} \not\models \exists \scheduler \centerdot \Conj_{i=1}^k \queryProb_{> 1- \lambda_i}(\targetSet_i)\\
&\iff \mdp_{/ \partition_{\MECS}}^{\mathcal{I}^*} \models  \forall \scheduler \centerdot \Disj_{i=1}^k \queryProb_{\leq 1 - \lambda_i}(\targetSet_i)\\
&\iff \mdp_{/ \partition_{\MECS}}^{\mathcal{I}^*} \models  \forall \scheduler \centerdot \Disj_{i=1}^k \queryProb_{\geq  \lambda_i}(\bar{\targetSet}_i)
\end{align*}
The last equivalence follows from the fact that in $\mdp_{/ \partition_{\MECS}}^{\mathcal{I}^*}$ the absorbing states $\{\bot_I\}_{I \subseteq [k]}$ are reached almost surely, regardless of the choice of the scheduler. \qed
\end{proof}

\begin{lemma}
Let $\widehat{\query} = \exists \scheduler \centerdot \Conj_{i=1}^k \queryProb_{\triangleright \lambda_i}(\eventually G_i)$ and $\widehat{\Phi} = \forall \scheduler \centerdot \Disj_{i=1}^k \queryProb_{\triangleright \lambda_i}(\eventually \bar{G}_i)$ where $G_i = \{\exit_I \mid I \subseteq [k], i \in I\}$ and $\bar{G}_i = \{\exit_I \mid I \subseteq [k], i \notin I\}$ for all $i \in [k]$.
\begin{enumerate}[1), itemsep=0mm, topsep=0.8mm, parsep=0mm]
\item If $\mathcal{I} \subseteq \mathcal{I}^*$ and $\mdp_{/ \partition_{\MECS}}^{\mathcal{I}} \models \widehat{\query}$, then $\mdp_{/ \partition_{\MECS}}^{\mathcal{I}^*} \models \widehat{\query}$.
\item If $\mathcal{I} \supseteq \mathcal{I}^*$ and $\mdp_{/ \partition_{\MECS}}^{\mathcal{I}} \models \widehat{\Phi}$, then $\mdp_{/ \partition_{\MECS}}^{\mathcal{I}^*} \models \widehat{\Phi}$.
\end{enumerate}
\end{lemma}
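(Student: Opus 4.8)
The plan is to exploit that, for a fixed partition $\partition$, all quotients $\mdp_{/ \partition}^{\mathcal{J}}$ (for varying $\mathcal{J} \subseteq \partition \times 2^{[k]}$) share the same state space $\partition \cup \{\bot_I\}_{I \subseteq [k]}$, the same non-$\tau$ transitions, and the same absorbing states $\bot_I$; they differ only in which actions $\tau_I$ are enabled, namely $\tau_I$ is enabled in a class $\mecClass[\partition]{\state}$ exactly when $(\mecClass[\partition]{\state}, I) \in \mathcal{J}$, and where enabled it deterministically moves to $\bot_I$. Hence, if $\mathcal{J} \subseteq \mathcal{J}'$, every enabled state--action pair of $\mdp_{/ \partition}^{\mathcal{J}}$ is also enabled in $\mdp_{/ \partition}^{\mathcal{J}'}$ with the same successor distribution. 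I would first record the auxiliary fact that every scheduler $\scheduler$ of $\mdp_{/ \partition}^{\mathcal{J}}$ extends to a scheduler of $\mdp_{/ \partition}^{\mathcal{J}'}$ inducing the same probability measure on infinite paths. Indeed, $\pathsFin(\mdp_{/ \partition}^{\mathcal{J}}) \subseteq \pathsFin(\mdp_{/ \partition}^{\mathcal{J}'})$, so $\scheduler$ is already defined on a subset of the finite paths of $\mdp_{/ \partition}^{\mathcal{J}'}$; extending it arbitrarily (e.g. Dirac on any enabled action) elsewhere yields $\scheduler'$, and since $\scheduler$ (hence $\scheduler'$) never assigns positive probability to an action that is not enabled in $\mdp_{/ \partition}^{\mathcal{J}}$, no finite path outside $\pathsFin(\mdp_{/ \partition}^{\mathcal{J}})$ is reached with positive probability. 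Thus $\prob^{\scheduler'}$ and $\prob^{\scheduler}$ agree on all relevant cylinders, and in particular $\prob^{\scheduler'}(\eventually G_i) = \prob^{\scheduler}(\eventually G_i)$ and $\prob^{\scheduler'}(\eventually \bar{G}_i) = \prob^{\scheduler}(\eventually \bar{G}_i)$, because $G_i$ and $\bar{G}_i$ are the same subsets of $\{\bot_I\}_I$ in both quotients.

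With this fact (applied with $\partition = \partition_{\MECS}$), both statements are immediate. For 1), assume $\mathcal{I} \subseteq \mathcal{I}^*$ and take a scheduler $\scheduler$ witnessing $\mdp_{/ \partition_{\MECS}}^{\mathcal{I}} \models \widehat{\query}$, i.e. $\prob^{\scheduler}(\eventually G_i) \triangleright \lambda_i$ for all $i \in [k]$. Applying the fact with $\mathcal{J} = \mathcal{I}$, $\mathcal{J}' = \mathcal{I}^*$ turns $\scheduler$ into a scheduler of $\mdp_{/ \partition_{\MECS}}^{\mathcal{I}^*}$ achieving the same probabilities, so $\mdp_{/ \partition_{\MECS}}^{\mathcal{I}^*} \models \widehat{\query}$. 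For 2), assume $\mathcal{I}^* \subseteq \mathcal{I}$ and let $\scheduler$ be an arbitrary scheduler of $\mdp_{/ \partition_{\MECS}}^{\mathcal{I}^*}$. Applying the fact with $\mathcal{J} = \mathcal{I}^*$, $\mathcal{J}' = \mathcal{I}$ turns $\scheduler$ into a scheduler of $\mdp_{/ \partition_{\MECS}}^{\mathcal{I}}$ with the same values $\prob^{\scheduler}(\eventually \bar{G}_i)$; since $\mdp_{/ \partition_{\MECS}}^{\mathcal{I}} \models \widehat{\Phi}$, we get $\Disj_{i=1}^k \prob^{\scheduler}(\eventually \bar{G}_i) \triangleright \lambda_i$, and as $\scheduler$ was arbitrary this gives $\mdp_{/ \partition_{\MECS}}^{\mathcal{I}^*} \models \widehat{\Phi}$.

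I do not expect a deep obstacle: the argument is essentially bookkeeping about schedulers and path measures. The one point warranting care is well-definedness of the quotients as MDPs (every state needs an enabled action): for $\mdp_{/ \partition_{\MECS}}^{\mathcal{I}}$ this is implicit in the hypothesis $\mdp_{/ \partition_{\MECS}}^{\mathcal{I}} \models \widehat{\query}$ (resp. $\widehat{\Phi}$), and for $\mdp_{/ \partition_{\MECS}}^{\mathcal{I}^*}$ it holds because $(\states(\mec), \emptyset) \in \mathcal{I}^*$ for every MEC $\mec$, so $\tau_\emptyset$ is always available in a collapsed MEC state. The other part I would justify carefully, rather than wave at, is the cylinder-set computation in the auxiliary fact, for which I would appeal to the standard construction of $\prob^{\scheduler}_{\mdp,\init}$.
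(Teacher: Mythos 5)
Your proposal is correct and follows essentially the same route as the paper: both parts rest on the observation that for $\mathcal{J} \subseteq \mathcal{J}'$ the quotient $\mdp_{/ \partition_{\MECS}}^{\mathcal{J}}$ has the same states and a subset of the enabled actions of $\mdp_{/ \partition_{\MECS}}^{\mathcal{J}'}$, so schedulers lift while preserving the induced path measure. The only cosmetic difference is that you prove part 2) directly by lifting an arbitrary scheduler from the $\mathcal{I}^*$-quotient into the $\mathcal{I}$-quotient, whereas the paper argues by contraposition (reducing to the $\existsCQ$-case via duality); your added remarks on well-definedness of the quotients and on $(\states(\mec),\emptyset) \in \mathcal{I}^*$ are fine but not needed.
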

\begin{proof}
For the first statement, we simply observe that a scheduler satisfying the query $\widehat{\query}$ in $\mdp_{/ \partition_{\MECS}}^{\mathcal{I}}$ is also a satisfying $\widehat{\query}$ in $\mdp_{/ \partition_{\MECS}}^{\mathcal{I}^*}$. The reason is that both MDPs have the same states and, due to $\mathcal{I} \subseteq \mathcal{I}^*$, any action that is available in a state in $\mdp_{/ \partition_{\MECS}}^{\mathcal{I}}$ is also available the same state in $\mdp_{/ \partition_{\MECS}}^{\mathcal{I}^*}$.

For the second statement, we prove via contraposition. Suppose we have $\mdp_{/ \partition_{\MECS}}^{\mathcal{I}^*} \models \exists \scheduler \centerdot \Conj_{i=1}^k \queryProb_{< \lambda_i}(\eventually \bar{G}_i)$. This is equivalent to 
\[
\mdp_{/ \partition_{\MECS}}^{\mathcal{I}^*} \models \exists \scheduler \centerdot \Conj_{i=1}^k \queryProb_{> 1-\lambda_i}(\eventually G_i).
\]
Since now we have $\mathcal{I} \supseteq \mathcal{I}^*$, by same reasoning as before, there exists a satisfying scheduler for $\mdp_{/ \partition_{\MECS}}^{\mathcal{I}}$ as well.
\qed
\end{proof}

\subsubsection{Certificates for Rabin and Streett Queries}

\soundAndComplete*
\begin{proof}
Let us begin with the first statement.
\begin{itemize}
\item[$\Rightarrow$:] From \Cref{lemma:index}, we know that $\mecQuotient \models \widehat{\query}$. We set $\partition = \partition_\MECS$ and $\mathcal{I} = \mathcal{I}^*$. By \Cref{lemma:mec-certificate} there exists a MEC certificate $\mathtt{MEC}$. Further, whenever we have $(D, I) \in \mathcal{I}^*$, we know by construction that there exists an $\ec$ that satisfies all Rabin properties indexed by $I$. From \Cref{lemma:ec-certificate} we know that there exists valid EC-certificate for $\states(\ec)$. The existence of $\vect{y}$ follows from \Cref{lemma:reach-certs}.
\item[$\Leftarrow$:] From \Cref{lemma:mec-certificate} we know that the existence of a valid MEC certificate implies $\partition = \partition_{\MECS}$. From the existence of $\vect{y}$ in the last condition, we know that $\mdp_{/ \partition}^\mathcal{I} \models \exists \scheduler \centerdot \Conj_{i=1}^k \queryProb_{\geq \lambda_i}(\eventually G_i)$. Further, for all $(D, I) \in \mathcal{I}$ we know that the MEC formed by $D$ contains some EC in which all Rabin properties indexed by $I$ are satisfied. It follows that $\mathcal{I} \subseteq \mathcal{I}^*$. Altogether, this implies $\mecQuotient \models \exists \scheduler \centerdot \Conj_{i=1}^k \queryProb_{\geq \lambda_i}(\eventually G_i)$. The statement then follows from \Cref{lemma:index}.
\end{itemize}
Now we proof the second statement.
\begin{itemize}
\item[$\Rightarrow$:] From \Cref{lemma:index}, we know that $\mecQuotient \models \widehat{\query}$. Again, we set $\partition = \partition_\MECS$ and $\mathcal{I} = \mathcal{I}^*$. By \Cref{lemma:mec-certificate} we know there exists a MEC certificate. For all $D \subseteq \states$ and $I \subseteq [k]$ with $(D, I) \notin \mathcal{I}^*$ we know that $D$ \emph{does not contain} an EC that satisfies all Rabin properties indexed by $I$. From \Cref{proposition:absence-certificate}, we know there exists an absence certificate for the sub-MDP $\mdp[D]$ and $\{\pairs_i\}_{i \in I}$. The existence of $(\vect{x}, \vect{y})$ again follows from \Cref{lemma:reach-certs}.
\item[$\Leftarrow$:] From \Cref{lemma:mec-certificate} we know that $\partition = \partition_{\MECS}$. From the existence of $(\vect{x}, \vect{z})$, we have that $\mdp_{/ \partition}^\mathcal{I} \models \forall \scheduler \centerdot \Disj_{i=1}^k \queryProb_{\geq \lambda_i}(\eventually \bar{G}_i)$. Further, for all $D \subseteq \states$ and $I \subseteq [k]$ with $(D, I) \notin Z$ we know that the existence of an absence certificate for $\mdp[D]$ and $\{\pairs_i\}_{i \in I}$ implies that $D$ not contain an EC that satisfies the Rabin properties $\{ \pairs_i \}_{i  \in I}$. Thus we have $\mathcal{I}^* \subseteq \mathcal{I}$. Altogether, we get $\mecQuotient \models \forall \scheduler \centerdot \Disj_{i=1}^k \queryProb_{\geq \lambda_i}(\eventually \bar{G}_i)$. Applying \Cref{lemma:index} then complete the proof.
\end{itemize}
\end{proof}

\subsection{Fixed-point Certificates and Farkas Certificates for Single Objective Reachability}
\label{appendix:fixed-point-farkas-comparison}
In this section, we compare the Farkas certificates originally introduced in \cite{funke_farkas_2020} and the more recently proposed fixed-point certificates \cite{chatterjee_fixed_2025}. Both types of certificates certify \emph{single} reachability probabilities, although in slightly different settings. While fixed point certificates can be seen as probability bounds for \emph{all states}, Farkas cerificates certify that a probability constraint \emph{in the initial state} is satisfied for a \emph{given bound}. 

The remainder of this section is structured as follows. We briefly recall both types of certificates and how they can be computed. Afterwards we highlight the differences and then provide a modified version of Farkas certificates with which the reachability probabilities in \emph{all states} can be certified. We fix an MDP $\mdp = (\states \union \{\targetSet\}, \actions, \init, \transMat)$ where $\targetSet$ and $\states$ are disjoint and $\targetSet$ is the set of designated target states. W.l.o.g. we may assume that $\targetSet$ is absorbing. Further, in the following we exclude state-action pairs of $\targetSet$ from $\SA$. We write $\prob_{\state}^{\max}(\eventually \targetSet)$ and $\prob_{\state}^{\min}(\eventually \targetSet)$ to denote the maximal and minimal probabilities of reaching $\targetSet$ from state $\state \in \states$, respectively.

\medskip

\noindent\textbf{Fixed-point certificates.}
Fixed-point certificates are essentially \emph{bounds} for the \emph{optimal} reachability probabilities. More precisely, \emph{valid} fixed-point certificates are lower (resp. upper) bounds $\vect{x} \in [0,1]^\states$ for the vector $\vect{x}_{\opt} = (\prob_{\state}^{\opt}(\eventually \targetSet))_{\state \in \states}$ where $\opt \in \{\max, \min\}$. To check whether such a vector $\vect{x}$ is indeed a valid bound, the \emph{Bellman operator} $\mathcal{B}^{\opt} \colon [0,1]^\states \to [0,1]^\states$ defined by
\[
\mathcal{B}^{\opt}(\vect{x})(\state) = 
\begin{cases}
1 & \text{if } \state \in \targetSet\\
\opt_{\action \actions(\state)} \sum_{\state' \in \states} \transMat(\state, \action, \state') \cdot \vect{x}(\state') & \text{otherwise}
\end{cases}
\] 
is used. Since the optimal reachability probabilities are the \emph{least fixed point} of $\mathcal{B}^{\opt}$, to certify that a vector $\vect{x} \in [0,1]^\states$ is an \emph{upper bound} for $\vect{x}_{\opt}$, i.e. $\vect{x}_{\opt} \leq \vect{x}$, it suffices to check that $\vect{x}$ is a \emph{pre fixed point} of $\mathcal{B}^{\opt}$, i.e. $\mathcal{B}^{\opt}(\vect{x}) \leq \vect{x}$. 
For \emph{lower bounds}, i.e. $\vect{x} \leq \vect{x}_{\opt}$, it is checked whether $\vect{x}$ is \emph{post fixed point}, i.e. $\vect{x} \leq \mathcal{B}^{\opt}(\vect{x})$. 
However, for lower bounds this does not suffice (note that $\vect{1}$ is always a trivial post-fixed point). Essentially, for $\min$-probabilities it needs to be additionally checked that $\vect{x}(\state) > 0 \implies \prob_{\state}^{\min}(\eventually \targetSet) > 0$ for each $\state \in \states$. For $\max$-probabilities this is analogous, but an (implicit) witness strategy needs to be included in addition. To certify these conditions, the authors present certificates for \emph{qualitative reachability}, i.e. certificates for $\prob_\state^{\opt}(\eventually \targetSet) > 0$. We omit their definitions here and refer to \cite{chatterjee_fixed_2025} for details.

\smallskip

\noindent\textit{Computation of fixed-point certificates.} In their implementation, the authors compute \emph{both} a lower and upper bound for the optimal reachability probabilities. The difference between the lower and upper bound be specified by the user.
To compute these certificates, they propose different methods, most prominently \emph{interval iteration}. While the MDP and returned certificate are represented exactly (i.e. numerical values are represented as rational numbers), the computations are performed using floating point arithmetic. As noted by the authors \cite{chatterjee_fixed_2025}, imprecise arithmetic usually breaks the inductivity of the certificates, so \emph{safe rounding} and \emph{smoothing} are proposed to mitigate these problems. We refer to \cite{chatterjee_fixed_2025} for details. 

\medskip

\noindent\textbf{Farkas certificates.} Farkas certificates \cite{funke_farkas_2020,jantsch_certificates_2022} are derived from known LP characterizations for optimal reachability and certify the \emph{constraints on optimal reachability probabilities}, i.e. they certify constraints of the form $\prob_{\init}^{\opt}(\eventually \targetSet) \bowtie \lambda$ where $\opt \in \{\max, \min\}$, $\bowtie \ \in \{\leq, <, \geq, >\}$ and $\lambda \in [0,1]$. While subtle, the difference compared to fixed-point certificates is that Farkas certificates only certify the probability in the initial state and for a fixed bound $\lambda$. In the following, let $\triangleright \in \{\geq, >\}$ and $\triangleleft \in \{\leq, < \}$.

\smallskip

\noindent\textit{Farkas certificates for maximal probabilities.} A certificate for $\prob_{\init}^{\max}(\eventually \targetSet) \triangleright \lambda$ is a vector $\vect{y} \in \realsnn^\SA$ that satisfies
\begin{enumerate}[(1)]
\item for all $\state \in \states$: $\initDistr(\state) + \sum_{(\state', \action') \in \SA} \transMat(\state', \action') \cdot \vect{y}(\state, \action) \geq \sum_{\action \in \actions(\state)} \vect{y}(\state, \action)$,
\item $\sum_{(\state, \action) \in \SA} \transMat(\state, \action, \targetSet) \cdot \vect{y}(\state, \action) \triangleright \lambda$.
\end{enumerate}
A certificate for $\prob_{\init}^{\max}(\eventually \targetSet) \triangleleft \lambda$ is a vector $\vect{x} \in \realsnn^\states$ that satisfies
\begin{enumerate}[(1)]
\item for all $(\state, \action) \in \SA$: $\vect{x}(\state) \geq \transMat(\state, \action, \targetSet) + \sum_{\state' \in \states} \transMat(\state, \action, \state') \cdot \vect{x}(\state')$,
\item $\vect{x}(\init) \triangleleft \lambda$.
\end{enumerate}
Note that the certificates for upper bounds are essentially the same as for the fixed-point certificates, except for the addition of the constraints on the bound.

\smallskip

\noindent\textit{Farkas certificates for minimal probabilities.} Farkas certificates for minimal probabilities are defined for \emph{EC-free} MDPs, that is, for MDPs whose ECs are all formed by states in $\targetSet$. Thus, in the following we assume that all ECs are formed by states in $\targetSet$. A certificate for $\prob_{\init}^{\min}(\eventually \targetSet) \triangleleft \lambda$ is a vector $\vect{y} \in \realsnn^\SA$ that satisfies
\begin{enumerate}[(1)]
\item for all $\state \in \states$: $\initDistr(\state) + \sum_{(\state', \action') \in \SA} \transMat(\state', \action') \cdot \vect{y}(\state, \action) \geq \sum_{\action \in \actions(\state)} \vect{y}(\state, \action)$,
\item $\sum_{(\state, \action) \in \SA} \transMat(\state, \action, \targetSet) \cdot \vect{y}(\state, \action) \triangleleft \lambda$.
\end{enumerate}
A certificate for $\prob_{\init}^{\min}(\eventually \targetSet) \triangleright \lambda$ is a vector $\vect{x} \in \realsnn^\states$ that satisfies
\begin{enumerate}[(1)]
\item for all $(\state, \action) \in \SA$: $\vect{x}(\state) \leq \transMat(\state, \action, \targetSet) + \sum_{\state' \in \states} \transMat(\state, \action, \state') \cdot \vect{x}(\state')$,
\item $\vect{x}(\init) \triangleright \lambda$.
\end{enumerate}
Now the lower bounds essentially are in correspondence with the lower bounds for fixed-point certificates. To obtain an EC-free MDP, the \emph{MEC quotient}, as discussed in the main body of this work, can be constructed. This construction can be certified with the certificates for the MEC decomposition proposed in \cite{jantsch_certificates_2022} and which has been adapted in this work.

\smallskip

\noindent\textit{Computation of Farkas certificates.} Computing Farkas certificates amounts to finding solutions of systems of linear inequalities, which can be done via LPs. Alternatively, the computation via value iteration and policy iteration was proposed in \cite{jantsch_certificates_2022} but not practically considered thus far.

\medskip

\noindent\textbf{Modified Farkas certificates.} We now show that Farkas certificates can be adapted, so that they also certify the probabilities \emph{in all states}. To this end, we revisit the primal and dual formulation of the LPs for computing optimal reachability probabilities \cite{baier_principles_2008-1}, from which Farkas certificates are derived.
\smallskip

\noindent\textit{Modified Farkas certificates for maximal probabilities.} Let us consider the LP formulations for maximal reachability probabilities first.
\begin{myframe}{{$\mathsf{LP}_{\mathsf{primal}}^{\max}$}}

\medskip

$\min \sum_{\state \in \states} \vect{x}(\state)$ where $\vect{x} \in \realsnn^\states$ subject to
\begin{enumerate}[(1), itemsep=0mm, topsep=0.8mm, parsep=0mm]
\item for all $(\state, \action) \in \SA$: $\vect{x}(\state) \geq \sum_{\state' \in \states} \transMat(\state, \action, \state') \cdot \vect{x}(\state')$
\item for all $\state \in \targetSet$: $\vect{x}(\state) = 1$
\end{enumerate}
\end{myframe}

\begin{myframe}{{$\mathsf{LP}_{\mathsf{dual}}^{\max}$}}

\medskip

$\max \sum_{\state \in \SA} \transMat(\state, \action, \targetSet) \cdot \vect{y}(\state, \action)$ where $\vect{y} \in \realsnn^\SA$ and $\vect{x} \in \reals^\states$ subject to
\begin{enumerate}[(1), itemsep=0mm, topsep=0.8mm, parsep=0mm]
\item for all $\state \in \states$: $1 + \sum_{(\state', \action') \in \SA} \transMat(\state', \action') \cdot \vect{y}(\state, \action) \geq \sum_{\action \in \actions(\state)} \vect{y}(\state, \action)$,
\end{enumerate}
\end{myframe}
By \emph{weak duality} of LPs, for any feasible solution $\vect{x}$ of the primal $\mathsf{LP}_{\mathsf{primal}}^{\max}$ and any feasible solution $\vect{y}$ of the dual $\mathsf{LP}_{\mathsf{primal}}^{\max}$ we have
\[
\sum_{(\state, \action) \in \SA} \transMat(\state, \action, \targetSet) \cdot \vect{y}(\state, \action) \leq \sum_{\state \in \states} \vect{x}(\state).
\]
From \cite{jantsch_certificates_2022}, we know that $\prob_{\state}^{\max}(\eventually \targetSet) \leq \vect{x}(\state)$ for all $\states$ and thus $\sum_{\state \in \states} \prob_{\state}^{\max}(\eventually \targetSet) \leq \sum_{\state \in \states} \vect{x}(\state)$. Note that for the optimal solution $\vect{x}^*$ of $\mathsf{LP}_{\mathsf{primal}}^{\max}$ equality is attained. Altogether we thus get:
\[
\sum_{(\state, \action) \in \SA} \transMat(\state, \action, \targetSet) \cdot \vect{y}(\state, \action) \leq \sum_{\state \in \states} \prob_{\state}^{\max}(\eventually \targetSet) \leq \sum_{\state \in \states} \vect{x}(\state).
\]
Let $\Delta(\vect{x}, \vect{y}) = \card{\sum_{\state \in \states} \vect{x}(\state) - \sum_{\state \in \SA} \transMat(\state, \action, \targetSet) \cdot \vect{y}(\state, \action)}$ denote the \emph{duality gap}. It then follows that $\vect{x}(\state) - \Delta(\vect{x}, \vect{y}) \leq \prob_{\state}^{\max}(\eventually \targetSet) \leq \vect{x}(\state)$ for all $\state \in \states$. Suppose this were not the case, then there exists $\state \in \states$ with $\prob_{\state}^{\max}(\eventually \targetSet) < \vect{x}(\state) - \Delta(\vect{x}, \vect{y})$. Unfolding the definition gives us
\[
\prob_{\state}^{\max}(\eventually \targetSet) + \sum_{\state' \in \states \setminus \{\state\}} \vect{x}(\state') < \sum_{\state \in \SA} \transMat(\state, \action, \targetSet) \cdot \vect{y}(\state, \action).
\]
Since the left side is bounded from below by $\sum_{\state \in \states} \prob_{\state}^{\max}(\eventually \targetSet)$, we get a contradiction. Thus, we have $\vect{x}(\state) - \Delta(\vect{x}, \vect{y}) \leq \prob_{\state}^{\max}(\eventually \targetSet) \leq \vect{x}(\state)$ for all $\state \in \states$.

In summary, we can certify maximal reachability probabilities of all states as in \cite{chatterjee_fixed_2025}, by taking any feasible pair of primal $\vect{x}$ and dual $\vect{y}$ solution. The absolute difference between the upper and lower bounds is then given by the duality gap $\Delta(\vect{x}, \vect{y})$. In particular, the certificates for lower bounds are simpler than those of fixed-point certificates, in the sense that no additional qualitative certificate is required. The only difference to the Farkas certificates presented in \cite{funke_farkas_2020,jantsch_certificates_2022} is the flow constraint in the dual, where we have replaced $\boldDelta(\state)$ with $1$.

\smallskip

\noindent\textit{Modified Farkas certificates for minimal probabilities.} For minimal reachability probabilities we proceed analogously. Let us recall the LPs for minimal reachability probabilities \cite{baier_principles_2008-1}. Let us define $\states_{= 0}^{\min} = \{\state \in \states \mid \prob_{\state}^{\min}(\eventually \targetSet) = 0\}$. This set can be computed via graph algorithms.

\begin{myframe}{{$\mathsf{LP}_{\mathsf{primal}}^{\min}$}}

\medskip

$\max \sum_{\state \in \states} \vect{x}(\state)$ where $\vect{x} \in \realsnn^\states$ subject to
\begin{enumerate}[(1), itemsep=0mm, topsep=0.8mm, parsep=0mm]
\item for all $(\state, \action) \in \SA$: $\vect{x}(\state) \leq \transMat(\state, \action, \targetSet) + \sum_{\state' \in \states} \transMat(\state, \action, \state') \cdot \vect{x}(\state')$
\item for all $\state \in \states_{=0}^{\min}$: $\vect{x}(\state) = 0$
\end{enumerate}
\end{myframe}
\begin{myframe}{{$\mathsf{LP}_{\mathsf{dual}}^{\min}$}}

\medskip

$\min \sum_{\state \in \SA} \transMat(\state, \action, \targetSet) \cdot \vect{y}(\state, \action)$ where $\vect{y} \in \realsnn^\SA$ subject to
\begin{enumerate}[(1), itemsep=0mm, topsep=0.8mm, parsep=0mm]
\item for all $\state \in \states \setminus \states_{= 0}^{\min}$: $1 + \sum_{(\state', \action') \in \SA} \transMat(\state', \action') \cdot \vect{y}(\state, \action) \leq \sum_{\action \in \actions(\state)} \vect{y}(\state, \action)$
\end{enumerate}
\end{myframe}
Again, by weak duality, for any feasible solution $\vect{x}$ of the primal $\mathsf{LP}_{\mathsf{primal}}^{\min}$ and any feasible solution $\vect{y}$ of the dual $\mathsf{LP}_{\mathsf{primal}}^{\min}$ we have for all $\state \in \states$,
\[
\sum_{\state \in \states} \vect{x}(\state) \leq \prob_{\state}^{\min}(\eventually \targetSet) \leq \sum_{(\state, \action) \in \SA} \transMat(\state, \action, \targetSet) \cdot \vect{y}(\state, \action) .
\]
As a consequence, we get $\vect{x}(\state) \leq \prob_{\state}^{\min}(\eventually \targetSet) \leq \vect{x}(\state) + \Delta(\vect{x}, \vect{y})$ for all $\state \in \states$ by the same reasoning as for maximal probabilities. Notice how the primal solution now provides the lower bound, whereas for maximal probabilities they were used as upper bounds.

However, for minimal probabilities it is \emph{not} sufficient to return a feasible pair $(\vect{x}, \vect{y})$, since we additionally need to provide a certificate for the set $\states_{=0}^{\min}$. It actually suffices to consider the set $\states_{\MECS} = \{\state \in \states \mid \states \text{ is contained in an MEC}\}$ instead of $\states_{=0}^{\min}$ (see \cite[Theorem 3.4]{de_alfaro_formal_1997}). \todo{the idea is as follows. we can first construct the quotient, cast the reach problem to the SSSP problem from de alfaro (theorem 3.4) and then setup the corresponding LP. however that is just the same as setting values of MECs to zero.} To certify this set, we can simply use the MEC certificate from \Cref{lemma:mec-certificate}.

\subsection{Certifying Algorithm for Multi-objective Reachability Queries}
\label{appendix:certifying-verification-algorithm}
Now we discuss how the algorithm from \cite{forejt_pareto_2012} for checking satisfaction of multi-objective reachability queries can be modified so that it becomes certifying, i.e. returns the Farkas certificates from \cite{baier_certificates_2024}. In the following, we assume that $\mdp = (\states \union \targets, \actions, \init, \transMat)$ is in reachability form and $\targetSet_1, \ldots, \targetSet_k \subseteq F$. Further, we may assume that all states in $\states$ can reach $F$. We consider the query $\query = \exists \scheduler \centerdot \Conj_{i=1}^k \queryProb_{\geq \lambda_i}^\scheduler(\eventually G_i)$.

\renewcommand{\algorithmicrequire}{\textbf{Input:}}
\renewcommand{\algorithmicensure}{\textbf{Output:}}
\begin{algorithm}[t]
\begin{algorithmic}[1]
\Require MDP $\mdp$ and reachability query $\query = \exists \scheduler \centerdot \Conj_{i=1}^k \queryProb_{\geq}^\scheduler(\eventually G_i)$
\Ensure $\mdp \models \query$ and Farkas certificate
\State $X \gets \emptyset$, $\boldLambda \gets (\lambda_1, \ldots, \lambda_k)$, {$\schedulers \gets \emptyset$}
\Repeat
\State Find $\vect{z} \in \realsnn^k$ separating $\boldLambda$ from $\down{X}$
\State $\scheduler^* \gets \argmax_{\scheduler} \sum_{i=1}^k \vect{z}(i) \cdot \prob^\scheduler(\eventually \targetSet_i)$
\State {$ \schedulers \gets \schedulers \union \{\scheduler^*\}$}
\State $\vect{q} \gets \bigl(\prob_{\init}^{\scheduler^*}(\eventually \targetSet_1), \ldots, \prob_{\init}^{\scheduler^*}(\eventually \targetSet_k)\bigr)$
\If{$\vect{z}^\top \vect{q} < \vect{z}^\top \boldLambda$}
{

\ForAll{$\state \in \states$}
\State $\vect{x}(\state) \gets \sum_{i=1}^k \vect{z}(i) \cdot \prob_\state^{\scheduler^*}(\eventually \targetSet_i)$
\EndFor
}
\State \Return $(\text{false}, \vect{x}, \vect{z})$
\EndIf
\State $X \gets X \union \{\vect{q}\}$
\Until{$\boldLambda \in \down{X}$}
{

\ForAll{$\scheduler_j \in \schedulers = \{\scheduler_0, \ldots, \scheduler_\ell \}$}
\State $\vect{y}_j \gets \text{Expected visiting times under } \scheduler_j$
\EndFor
}
\State {{Find $\gamma_0, \ldots, \gamma_\ell \in [0, 1]$ s.t. $\sum_{j=1}^\ell \gamma_j = 1$ and $\sum_{j=0}^\ell \gamma_j \cdot \prob^{\scheduler_j}(\eventually \targetSet_i) \geq \lambda_i$ for all $i \in [k]$.}}
\State {$\vect{y} \gets \sum_{j=1}^{\ell} \gamma_j \cdot \vect{y}_j$}
\State \Return $(\text{true}, \vect{y})$
\end{algorithmic}
\caption{Modified \emph{certifying} variant of the algorithm from \cite{forejt_pareto_2012} for checking reachability queries. Modifications and additions are indicated in.}
\label{algorithm:certifying-algorithm-reach}
\end{algorithm}

Our modified variant of the algorithm is shown in \Cref{algorithm:certifying-algorithm-reach}. The correctness of the algorithm with respect to the result, that is, whether $\mdp \models \query$ holds, was already established in \cite{forejt_pareto_2012}. Hence, we only need to argue that the returned certificates are indeed valid. In the following, we thus consider the cases when the query is determined to be satisfied and not satisfied.

\begin{lemma}
If \Cref{algorithm:certifying-algorithm-reach} returns $(\text{false}, \vect{x}, \vect{z})$, then $(\vect{x}, \vect{z})$ is a Farkas certificate \cite{baier_certificates_2024} for $\mdp \not\models \exists \scheduler \centerdot \Conj_{i=1}^k \queryProb_{\geq \lambda_i}^\scheduler(\eventually G_i)$ or equivalently $\mdp \models \forall \scheduler \centerdot \Disj_{i=1}^k \queryProb_{< \lambda_i}^\scheduler(\eventually G_i)$.
\end{lemma}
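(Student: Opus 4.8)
The plan is to unwind what the algorithm of \Cref{algorithm:certifying-algorithm-reach} has computed at the moment it returns $(\text{false}, \vect{x}, \vect{z})$ and verify that $(\vect{x}, \vect{z})$ meets the defining conditions of a $\universalDQ$-Farkas certificate in the sense of \Cref{lemma:reach-certs} for the query $\forall \scheduler \centerdot \Disj_{i=1}^k \queryProb_{< \lambda_i}(\eventually G_i)$, which is the dual of $\exists \scheduler \centerdot \Conj_{i=1}^k \queryProb_{\geq \lambda_i}(\eventually G_i)$. The return happens only when, after Line~3 produces a nonnegative separating vector $\vect{z} \in \realsnn^k$, Line~4 a scheduler $\scheduler^*$ maximising $\sum_{i=1}^k \vect{z}(i) \cdot \prob_{\init}^{\scheduler}(\eventually G_i)$, Line~6 the vector $\vect{q} = \bigl(\prob_{\init}^{\scheduler^*}(\eventually G_i)\bigr)_{i \in [k]}$, and Lines~8--9 the vector $\vect{x} \in \reals^\states$ with $\vect{x}(\state) = \sum_{i=1}^k \vect{z}(i) \cdot \prob_{\state}^{\scheduler^*}(\eventually G_i)$, the test in Line~7 evaluates to $\vect{z}^\top \vect{q} < \vect{z}^\top \boldLambda$.

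The two elementary conditions are immediate. Since $\vect{z} \geq \vect{0}$ and $\prob_{\state}^{\scheduler^*}(\eventually G_i) \in [0,1]$ we get $\vect{x} \in \realsnn^\states$. And since $\vect{x}(\init) = \sum_i \vect{z}(i) \cdot \prob_{\init}^{\scheduler^*}(\eventually G_i) = \vect{z}^\top \vect{q}$, the guard $\vect{z}^\top \vect{q} < \vect{z}^\top \boldLambda$ of Line~7 is exactly the strict bound $\vect{x}(\init) < \boldLambda^\top \vect{z}$ of the certificate (which in particular forces $\boldLambda^\top \vect{z} > 0$, so $\vect{z} \neq \vect{0}$).

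The remaining and principal step is the linear inequality relating $\SM \vect{x}$ and $\TM \vect{z}$. I would first argue that $\scheduler^*$ may be assumed memoryless deterministic and \emph{uniformly} optimal for the weighted reachability objective with weights $\vect{z}$: this is a reachability-reward problem with bounded values and hence admits uniformly optimal MD schedulers (\cite{puterman_markov_1994,baier_principles_2008-1}), and the value/policy iteration used in \cite{forejt_pareto_2012} to implement Line~4 returns such a scheduler. Consequently $\vect{x}$ is the optimal value vector: $\vect{x}(\state) = \max_{\scheduler} \sum_i \vect{z}(i) \cdot \prob_{\state}^{\scheduler}(\eventually G_i)$ for \emph{every} $\state \in \states$. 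Unfolding the Bellman optimality equation for $\vect{x}$, and using that each $G_i$ consists of absorbing states so that the target-successor contribution of $(\state,\action)$ equals $\sum_i \vect{z}(i) \cdot \transMat(\state,\action,G_i) = (\TM \vect{z})(\state,\action)$, gives $\vect{x}(\state) = \max_{\action \in \actions(\state)} \bigl( \sum_{t \in \states} \transMat(\state,\action,t) \cdot \vect{x}(t) + (\TM \vect{z})(\state,\action) \bigr)$ for all $\state \in \states$. Since this is a maximum over actions, for every $(\state,\action) \in \SA$ we obtain $\vect{x}(\state) - \sum_{t \in \states} \transMat(\state,\action,t) \cdot \vect{x}(t) \geq (\TM \vect{z})(\state,\action)$, i.e. $(\SM \vect{x})(\state,\action) \geq (\TM \vect{z})(\state,\action)$ — precisely the inequality occurring in the Farkas certificate of \Cref{lemma:reach-certs}. (The standing hypotheses of that lemma — $\mdp$ in reachability form with all end components formed by states in $F$ — hold for the quotient MDPs on which the algorithm is invoked.) Combining $\vect{x} \in \realsnn^\states$, $\SM \vect{x} \geq \TM \vect{z}$ and $\vect{x}(\init) < \boldLambda^\top \vect{z}$, \Cref{lemma:reach-certs} yields $\mdp \models \forall \scheduler \centerdot \Disj_{i=1}^k \queryProb_{< \lambda_i}(\eventually G_i)$, equivalently $\mdp \not\models \exists \scheduler \centerdot \Conj_{i=1}^k \queryProb_{\geq \lambda_i}(\eventually G_i)$.

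The delicate point, and the one I expect to require the most care, is the passage to the optimal value vector: the inequality $\SM \vect{x} \geq \TM \vect{z}$ must hold at \emph{all} state--action pairs, including at states unreachable from $\init$, which is only guaranteed when $\scheduler^*$ is uniformly — not merely initial-state — optimal. I would settle this by pinning down that Line~4 is realised via policy/value iteration on the weighted-reachability MDP, exactly as in \cite{forejt_pareto_2012}, whose output is a uniformly optimal MD scheduler, and then appealing to the correctness of that computation. Everything else is routine verification against the definition of the certificate.
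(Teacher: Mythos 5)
Your proof is correct and follows essentially the same route as the paper's: both identify $\vect{x}$ with the optimal weighted-reachability value vector, unfold the Bellman (optimality) inequalities to obtain $\SM\vect{x} \geq \TM\vect{z}$, and combine this with the Line~7 guard $\vect{x}(\init) = \vect{z}^\top\vect{q} < \vect{z}^\top\boldLambda$ to invoke the Farkas characterization. You are in fact slightly more explicit than the paper about the need for $\scheduler^*$ to be \emph{uniformly} optimal (the paper asserts $\vect{x}(\state) = \sup_{\scheduler}\sum_i\vect{z}(i)\cdot\prob_{\state}^{\scheduler}(\eventually G_i)$ ``by construction'' and cites \cite{de_alfaro_formal_1997} for the resulting inequalities), but this is a refinement of the same argument rather than a different one.
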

\begin{proof}
By construction of $\vect{x}$, we have $\vect{x}(\state) = \sup_{\scheduler} \sum_{i=1}^k \vect{z}(i) \cdot \prob_{\state}^{\scheduler}(\eventually \targetSet_i)$ for all $\state \in \states$. We then observe the following equivalences
\begin{align*}
\sup_{\scheduler} \sum_{i=1}^k \vect{z}(i) \cdot \prob_{\state}^{\scheduler}(\eventually \targetSet_i)
 &= \sup_{\scheduler}  \sum_{i=1}^k \vect{z}(i) \cdot \sum_{f \in \targetSet_i} \prob_{\state}^\scheduler(\eventually f)\\
  &= \sup_{\scheduler}  \sum_{i=1}^k \sum_{f \in \targetSet_i} \vect{z}(i) \cdot \prob_{\state}^\scheduler(\eventually f)\\
  &= \sup_{\scheduler} \sum_{f \in F} \prob_{\state}^\scheduler(\eventually f) \cdot \sum_{i=1}^k \vect{z}(i) \cdot \vect{1}_{\targetSet_i}(f)
\end{align*}
for all $\state \in \states$. We define $\vect{r}(f) = \sum_{i=1}^k \vect{z}(i) \cdot \vect{1}_{\targetSet_i}(f)$ for all $f \in F$. It follows from the work of \cite{de_alfaro_formal_1997}, that $\vect{x}$ and $\vect{r}$ satisfy
\[
\vect{x}(\state) \geq \sum_{\state \in \states'} \transMat(\state, \action, \state') \cdot \vect{x}(\state') + \sum_{f \in F} \transMat(\state, \action, f) \cdot \vect{r}(f) \quad \text{ for all } (\state, \action) \in \SA.
\]
Expanding the definition of $\vect{r}$ and reordering summation gives us
\[
\vect{x}(\state) \geq \sum_{\state \in \states'} \transMat(\state, \action, \state') \cdot \vect{x}(\state') + \sum_{i=1}^k \transMat(\state, \action, \targetSet_i) \cdot \vect{z}(i) \quad \text{ for all } (\state, \action) \in \SA.
\]
Together with the fact that \Cref{algorithm:certifying-algorithm-reach} only returns $(\text{false}, \vect{x}, \vect{z})$ when $\vect{x}(\init) = \vect{z}^\top \vect{q} < \vect{z}^\top \boldLambda$, we can conclude that $(\vect{x}, \vect{z})$ indeed correspond to the Farkas certificates from \cite{baier_certificates_2024} and this completes the proof. \qed
\end{proof}
We now prove the affirmative case.
\begin{lemma}
If \Cref{algorithm:certifying-algorithm-reach} returns $(\text{true}, \vect{y})$, then $\vect{y}$ is a Farkas certificate \cite{baier_certificates_2024} for $\mdp \models \exists \scheduler \centerdot \Conj_{i=1}^k \queryProb_{\geq \lambda_i}^\scheduler(\eventually G_i)$.
\end{lemma}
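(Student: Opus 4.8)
The plan is to exhibit $\vect{y} = \sum_{j=0}^{\ell} \gamma_j \vect{y}_j$ as an element of the polyhedron from part~1) of \Cref{lemma:reach-certs}, i.e. to check $\vect{y} \in \realsnn^\SA$, $\SM^\top \vect{y} \leq \initDistr$ and $\TM^\top \vect{y} \geq \boldLambda$ (writing $\targetSet_i = G_i$). The key is to understand the vectors $\vect{y}_j$ of expected visiting times under $\scheduler_j$. We may assume each $\scheduler_j$ produced by the algorithm is memoryless, deterministic and \emph{proper}, i.e. reaches $F$ almost surely from $\init$: since $\mdp$ is in reachability form, among the maximizers of the weighted reachability objective in the \textsf{argmax} step there is always a proper one (states that reach $F$ with probability $0$ under $\scheduler_j$ carry no flow and can be redirected toward $F$ without changing any $\prob^{\scheduler_j}(\eventually \targetSet_i)$). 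Hence $\vect{y}_j(\state,\action) = \sum_{n \geq 0} \prob^{\scheduler_j}_{\mdp,\init}\big((\state,\action)\text{ used at step }n\big)$ is finite for all $(\state,\action) \in \SA$, so $\vect{y}_j \in \realsnn^\SA$, and therefore $\vect{y} \geq 0$ because all $\gamma_j \geq 0$.

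Next I would establish the two standard flow identities for a proper scheduler. \emph{Flow conservation}: for every $t \in \states$,
\[
\textstyle\sum_{\action \in \actions(t)} \vect{y}_j(t,\action) \;=\; \initDistr(t) + \sum_{(\state,\action) \in \SA} \transMat(\state,\action,t)\,\vect{y}_j(\state,\action),
\]
which follows by summing over $n \geq 0$ the step-wise identity $\mu_{n+1}(t) = \sum_{(\state,\action)} \transMat(\state,\action,t)\,\prob^{\scheduler_j}((\state,\action)\text{ at step }n)$ for the state distribution $\mu_n$, using $\mu_0 = \initDistr = \vect{1}_{\init}$ and finiteness (Tonelli) to exchange the order of summation; by definition of $\SM$ this is exactly $(\SM^\top \vect{y}_j)(t) = \initDistr(t)$. \emph{Flow into the targets}: for every $i \in [k]$,
\[
(\TM^\top \vect{y}_j)(i) \;=\; \textstyle\sum_{(\state,\action) \in \SA} \transMat(\state,\action,\targetSet_i)\,\vect{y}_j(\state,\action) \;=\; \prob^{\scheduler_j}(\eventually \targetSet_i),
\]
because $\targetSet_i \subseteq F$ consists of absorbing states that are excluded from $\SA$, so the events ``in $\states$ at step $n$, in $\targetSet_i$ at step $n+1$'' are pairwise disjoint over $n$ and their union is $\eventually \targetSet_i$; their total probability is precisely the displayed sum.

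Finally I would take the convex combination with the coefficients $\gamma_0,\dots,\gamma_\ell \in [0,1]$ computed near the end of the algorithm, which satisfy $\sum_j \gamma_j = 1$ and $\sum_j \gamma_j\,\prob^{\scheduler_j}(\eventually \targetSet_i) \geq \lambda_i$ for all $i \in [k]$. Flow conservation gives $\SM^\top \vect{y} = \sum_j \gamma_j\,\SM^\top \vect{y}_j = \big(\sum_j \gamma_j\big)\,\initDistr = \initDistr$, hence in particular $\SM^\top \vect{y} \leq \initDistr$; the target identity gives $(\TM^\top \vect{y})(i) = \sum_j \gamma_j\,\prob^{\scheduler_j}(\eventually \targetSet_i) \geq \lambda_i$ for every $i$, i.e. $\TM^\top \vect{y} \geq \boldLambda$. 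By \Cref{lemma:reach-certs} this certifies $\mdp \models \exists \scheduler \centerdot \Conj_{i=1}^k \queryProb_{\geq \lambda_i}(\eventually G_i)$ and in passing re-establishes that the query holds, as required. The main obstacle is the well-definedness of the $\vect{y}_j$: one has to argue that the schedulers returned by the algorithm can be taken proper so that expected visiting times are finite and the flow identities hold with equality; once this is in place, the remainder is a routine convexity computation.
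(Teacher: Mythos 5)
Your proof is correct and takes essentially the same route as the paper's: both recognize $\vect{y}=\sum_{j}\gamma_j\vect{y}_j$ as the expected visiting times of the scheduler that initially randomizes over the $\scheduler_j$ with weights $\gamma_j$, and conclude via \Cref{lemma:reach-certs}(1) — the paper simply delegates the flow identities and the witness property to citations of \cite{forejt_pareto_2012,jantsch_certificates_2022}, where you verify them directly. Your additional care about properness of the $\scheduler_j$ (so the visiting times are finite) addresses a point the paper glosses over and is resolved soundly, with the minor caveat that redirecting flow out of end components may \emph{increase} the individual probabilities $\prob^{\scheduler_j}(\eventually \targetSet_i)$ rather than leave them unchanged, which is harmless for the constraint $\sum_j\gamma_j\,\prob^{\scheduler_j}(\eventually \targetSet_i)\geq\lambda_i$.
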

\begin{proof}
It follows from \cite{forejt_pareto_2012} that the coefficients $\gamma_0, \ldots, \gamma_\ell$ in Line 15 of \Cref{algorithm:certifying-algorithm-reach} exist. Further, consider the scheduler $\scheduler$ that randomly selects $\scheduler_j$ with probability $\gamma_\ell$ in the initial state and then behaves like $\scheduler_j$. Then, $\scheduler$ is a witness scheduler for $\query$ \cite{forejt_pareto_2012}. Finally, the expected visiting times of $\scheduler$ are exactly given by $\vect{y}$, which is known to be a valid Farkas certificate for $\mdp \models \exists \scheduler \centerdot \Conj_{i=1}^k \queryProb_{\geq \lambda_i}^\scheduler(\eventually G_i)$ \cite{jantsch_certificates_2022}. \qed
\end{proof}

\begin{remark}
Compared to \cite[Algorithm~1]{forejt_pareto_2012}, we have added the additional requirement in Line 3 of \Cref{algorithm:certifying-algorithm-reach} that $\vect{z}$ is nonnegative. However, such a nonnegative separating vector always exists \cite{etessami_multi-objective_2008}.
\end{remark}
\begin{remark}
For queries with upper bounds, the modification of the algorithm is analogous, but additionally requires that the ECs of $\mdp$ are only formed by states in $F$.
\end{remark}

\section{Supplementary Material for \Cref{section:witnesses-mdps}}
\subsection{Omitted Proofs}
In the following, we denote by $\mecClass[\MECS]{\state}$ the class in $\partition_{\MECS}$ a state belongs to.

\begin{lemma}
Let $\states' \subseteq \states$ and $X = \{\mecClass[\MECS]{\state} \mid \state \in \states'\}$. Further, let $\labeling \colon \states \to 2^\labels$ be a \emph{MEC-equivalent} labeling. Then, we have $\labeling(\states') = \widehat{\labeling}(X)$.
\label{lemma:mec-equivalent}
\end{lemma}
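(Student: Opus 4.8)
The plan is to prove the two set inclusions $\labeling(\states') \subseteq \widehat{\labeling}(X)$ and $\widehat{\labeling}(X) \subseteq \labeling(\states')$ separately, after unfolding the definitions $\labeling(\states') = \Union_{\state \in \states'} \labeling(\state)$ and $\widehat{\labeling}(X) = \Union_{D \in X} \widehat{\labeling}(D) = \Union_{D \in X} \Union_{\state \in D} \labeling(\state)$. Everything reduces to chasing a single label $l$ through these unions.

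For the inclusion $\labeling(\states') \subseteq \widehat{\labeling}(X)$, no use of the MEC-equivalence hypothesis is needed. Given $l \in \labeling(\states')$, pick $\state \in \states'$ with $l \in \labeling(\state)$ and set $D \coloneqq \mecClass[\MECS]{\state}$. By definition of $X$ we have $D \in X$, and since $\state \in D$ we get $l \in \widehat{\labeling}(D) \subseteq \widehat{\labeling}(X)$.

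The converse inclusion $\widehat{\labeling}(X) \subseteq \labeling(\states')$ is where the hypothesis enters. Take $l \in \widehat{\labeling}(X)$; then there is $D \in X$ and some state $\state \in D$ with $l \in \labeling(\state)$. By definition of $X$ there is a state $\state' \in \states'$ with $\mecClass[\MECS]{\state'} = D$, so $\state$ and $\state'$ lie in the same MEC. Since $\labeling$ is MEC-equivalent w.r.t. $\mdp$, this yields $\labeling(\state) = \labeling(\state')$, hence $l \in \labeling(\state') \subseteq \labeling(\states')$.

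The whole argument is elementary, so I do not expect a genuine obstacle; the only subtle point — and the single place the hypothesis is actually used — is that a state witnessing a label inside a class $D \in X$ need not itself belong to $\states'$. MEC-equivalence is precisely what lets us transfer that label to a representative of $D$ that does lie in $\states'$, and a small counterexample (a two-state MEC $D$ with only one of its states in $\states'$ carrying the label) shows the statement fails without it.
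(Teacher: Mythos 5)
Your proof is correct and follows essentially the same route as the paper's: the paper writes the argument as a single chain of equalities whose key step replaces $\labeling(\state)$ by $\Union_{\state' \in \mecClass[\MECS]{\state}} \labeling(\state')$ using MEC-equivalence, which is exactly the content of your two inclusions. Your remark that only the backward inclusion uses the hypothesis, together with the two-state counterexample, is a nice (if optional) clarification of where MEC-equivalence is genuinely needed.
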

\begin{proof}
The statement follows from these equivalences:
\begin{align*}
\labeling(\states') \overset{\text{def}}{=} \Union_{\state \in \states'} \labeling(\state) \overset{(\dag)}{=} \Union_{\state \in \states'} \Union_{\state' \in \mecClass[\MECS]{\state}} \labeling(\state')  \overset{\text{def}}{=} \Union_{\state \in \states'} \widehat{\labeling}(\mecClass[\MECS]{\state}) = \widehat{\labeling}(X).
\end{align*}
The equality $(\dag)$ follows from $\labeling$ being MEC-equivalent.
\qed
\end{proof}

\mecQuotientWitness*
\begin{proof}
Suppose we have $\mecQuotient' \models \widehat{\existsQuery}$ and $\mecQuotient'$ is minimal w.r.t. $\widehat{\labeling}$.
From \Cref{lemma:index} we know that for the corresponding subsystem $\mdp'$ with states $\states'$ we have $\mdp' \models \existsQuery$.
Suppose $\mdp'$ is not minimal. Then, there exists a subsystem $\mdp''$ with states $\states''$ such that $\mdp'' \models \query$ and $\card{\labeling(\states'')} < \card{\labeling(\states')}$. Let $Y = \{\mecClass[\MECS]{\state} \mid \state \in \states'' \}$. From \Cref{lemma:index}, we know that $\mecQuotient_Y \models \widehat{\query}$. Further, from \Cref{lemma:mec-equivalent} we get $\labeling{(\states'')} = \widehat{\labeling}(Y)$. Altogether, we have:
\[
\card{\widehat{\labeling}(Y)} = \card{\labeling(\states'')} < \card{\labeling(\states')} = \card{\widehat{\labeling}(X)}.
\]
However, this contradicts the minimality of $\mecQuotient_X$. \qed
\end{proof}
\forallQueriesMinimality*
\begin{proof}
We first show the following claim:
\begin{claim}
Let $\forallQuery$ be a $(\forall, \lor)$-query, $\mdp = (\states, \actions, \init, \transMat)$ an MDP and $\states' \subseteq \states$ such that there exists an MEC $\mec \in \MECS(\mdp)$ with $\states' \intersection \states(\mec) \neq \emptyset $ and $\states(\mec) \not\subseteq \states'$, i.e.\ $\states'$ partially includes an MEC of $\mdp$. Let $\states'' = \states' \setminus \states(\mec)$. If $\mdp_{\states'} \models \forallQuery$, then $\mdp_{\states''} \models \forallQuery$.
\end{claim}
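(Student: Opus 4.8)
The plan is to take an arbitrary scheduler $\scheduler''$ for $\mdp_{\states''}$ and extend it to a scheduler $\scheduler'$ for $\mdp_{\states'}$ that attains, for each property $\prop_i$ occurring in $\forallQuery = \forall \scheduler \centerdot \Disj_{i=1}^{k} \queryProb_{\triangleright \lambda_i}(\prop_i)$, the \emph{same} probability; then $\mdp_{\states'} \models \forallQuery$ transfers immediately to $\mdp_{\states''} \models \forallQuery$. Write $R \coloneqq \states' \setminus \states'' = \states' \cap \states(\mec)$ and picture subsystems with an explicit trap state $\exit$ (cf.\ \Cref{appendix:subsystems}); one may assume $\init \notin \states(\mec)$, since otherwise $\mdp_{\states''}$ is not a subsystem of $\mdp$. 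Note that the naive extension which mimics $\scheduler''$ everywhere is useless: it only yields $\prob_{\mdp_{\states''}}^{\scheduler''}(\prop_i) \le \prob_{\mdp_{\states'}}^{\scheduler'}(\prop_i)$, the wrong inequality; and dualising $\forallQuery$ to an $\existsCQ$-query and invoking monotonicity (\Cref{lemma:monotonicity-subsystems}) fails, because in a sub-stochastic subsystem $\prob(\prop_i) + \prob(\bar\prop_i) < 1$ in general. Instead, $\scheduler'$ should be built so that it deliberately \emph{abandons the run} — steers it into $\exit$ — the moment it enters $R$.

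The key step I would prove first is an \emph{escape lemma}: in $\mdp_{\states'}$ there is a memoryless scheduler $\sigma$ using only state-action pairs of $\mec$ such that, from every $\state \in R$, the induced run reaches $\exit$ almost surely while staying inside $R \cup \{\exit\}$ until it does. Partial inclusion gives $R \subsetneq \states(\mec)$, so $\states(\mec) \setminus \states' \neq \emptyset$; since the sub-MDP induced by $\mec$ is strongly connected and pairs of $\mec$ never leave $\states(\mec)$, every $\state \in R$ has a shortest $\mec$-path to $\states(\mec) \setminus \states'$ of length at most $\card{\states(\mec)}$, and letting $\sigma(\state)$ be a first pair along such a path defines $\sigma$. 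Under $\sigma$ the distance to $\states(\mec) \setminus \states'$ decreases by one with probability at least the least positive transition probability of $\mdp$, so the run almost surely reaches some $t \in \states(\mec) \setminus \states'$, i.e.\ leaks to $\exit$ in $\mdp_{\states'}$; and since $\sigma$ uses only $\mec$-pairs, the run stays in $R \cup \{\exit\}$ and never visits $\states''$.

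I would then define $\scheduler'$ on $\mdp_{\states'}$ as follows: on finite paths that have so far stayed in $\states''$ — exactly the finite paths of $\mdp_{\states''}$ — it plays $\scheduler''$, and from the first visit to $R$ onwards it plays $\sigma$ (this is consistent, as $\sigma$ confines the run to $R \cup \{\exit\}$). Splitting the runs of $\mdp_{\states'}$ under $\scheduler'$ by whether they ever visit $R$: those that never do stay in $\states''$ and are governed by $\scheduler''$, and since $\mdp_{\states'}$ and $\mdp_{\states''}$ have the same transition probabilities on $\states''$ while every run satisfying $\prop_i$ in $\mdp_{\states''}$ lies entirely in $\states''$, these runs contribute their full weight $\prob_{\mdp_{\states''}}^{\scheduler''}(\prop_i)$ to $\prob_{\mdp_{\states'}}^{\scheduler'}(\prop_i)$; those that do visit $R$ leak to $\exit$ almost surely by the escape lemma, hence form a null set among infinite runs and contribute nothing. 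So $\prob_{\mdp_{\states'}}^{\scheduler'}(\prop_i) = \prob_{\mdp_{\states''}}^{\scheduler''}(\prop_i)$ for every $i$. Since $\mdp_{\states'} \models \forallQuery$, some $i$ has $\prob_{\mdp_{\states'}}^{\scheduler'}(\prop_i) \triangleright \lambda_i$, hence $\prob_{\mdp_{\states''}}^{\scheduler''}(\prop_i) \triangleright \lambda_i$; as $\scheduler''$ was arbitrary, $\mdp_{\states''} \models \forallQuery$.

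The hard part will be the escape lemma, and in particular justifying that $\sigma$ can be restricted to $\mec$-pairs: this restriction is precisely what keeps the run out of $\states''$ and thus spares us from having to re-synchronise with $\scheduler''$ after an excursion through $R$. Showing it still suffices for almost-sure leakage is where strong connectivity of $\mec$ and partial inclusion ($R \subsetneq \states(\mec)$) enter. The remaining measure-theoretic bookkeeping parallels the monotonicity argument of \Cref{lemma:monotonicity-subsystems} and should be routine.
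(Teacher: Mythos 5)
Your proof is correct and follows essentially the same route as the paper: the paper also extends an arbitrary scheduler of $\mdp_{\states''}$ to one of $\mdp_{\states'}$ that, upon entering the partially included MEC, steers to $\exit$ almost surely (citing strong connectivity of the MEC for this escape fact rather than proving it), yielding equal probabilities for all $\prop_i$; the only cosmetic difference is that the paper argues by contraposition while you argue directly, and you prove the escape lemma from scratch.
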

\begin{proof}
Suppose this was not the case, i.e. $\mdp_{\states''} \not\models \forallQuery$, then there exists a scheduler $\scheduler$ for $\mdp_{\states''}$ such that $\Conj_{i=1}^k \prob_{\mdp_{\states''}}^\scheduler(\prop_i) < \lambda_i$. Since $\mec$ is only partially included it is possible to reach $\exit$ almost surely from $\mec$ \cite[~Chapter 4.4]{jantsch_certificates_2022}. Let $\scheduler'$ be a scheduler for $\mdp_{\states'}$ that behaves like $\scheduler$ in $\states''$. Inside $\mec$, that is, once a state $\states' \setminus \states''$ is reached, it ensures that $\exit$ is reached almost surely. We then get $\prob_{\mdp_{\states'}}^{\scheduler'}(\prop_i) = \prob_{\mdp_{\states''}}^\scheduler(\prop_i)$ for all $i \in [k]$, contradicting $\mdp_{\states'} \models \forallQuery$.
\qed
\end{proof}

Suppose $\mecQuotient$ is a witnessing subsystem for $\widehat{\forallQuery}$ and minimal w.r.t. $\widehat{\labeling}$. Let us denote its states by $X \subseteq \partition_{\MECS}$. The corresponding subsystem for $\mdp$ is denoted by $\mdp'$ and its states with $\states'$. For the sake of contradiction, suppose $\mdp'$ is not minimal w.r.t. $\labeling$. Then, there exists a witnessing subsystem $\mdp''$ with states $\states''$ such that $\card{\labeling(\states'')} < \card{\labeling(\states')}$. Due to the claim above, we may assume that $\states''$ does not partially contain an MEC. Let $Y = \{ \mecClass[\MECS]{\state} \mid \state \in \states''\}$. It follows from \Cref{lemma:index} that $\mecQuotient_Y \models \widehat{\forallQuery}$. We get the following:
\[
\labeling(\states'') = \Union_{\state \in \states''} \labeling(\state) \overset{(\dag)}{=} \Union_{\state \in \states''} \Union_{\state' \in \mecClass[\MECS]{\state}} \labeling(\state') = \widehat{\labeling}{(Y)}.
\]
The equality $(\dag)$ follows from the fact that if a state from $\mecClass[\MECS]{\state}$ is contained in $\mdp''$, then all states in $\mecClass[\MECS]{\state}$ are. Altogether, we get:
\[
\card{\widehat{\labeling}(Y)} = \card{\labeling(\states'')} < \card{\labeling(\states')} = \card{\widehat{\labeling}(X)}
\]
However, this contradicts the minimality of $\mecQuotient'$ w.r.t. $\widehat{\labeling}$. Thus, $\mdp'$ has to be minimal w.r.t. $\labeling$.
\qed
\end{proof}

\subsection{Minimal Witnessing Subsystems for \existsCQ-queries}
\label{appendix:milp-exists-queries}
Unlike for $(\forall, \lor)$-queries, a minimal witnessing subsystem for \existsCQ-queries may \emph{partially} contain MECs \cite[~Chapter 4.4]{jantsch_certificates_2022}, meaning that the previous approach based on the MEC quotient does not yield minimal witnesses in general. We first provide a MILP for determining minimal satisfying ECs in an MDP. This is interesting and useful on its own, as minimal ECs can serve as explanations for how a property is satisfied in the long run. We then use this encoding to provide a MILP for determining minimal witnesses for $\existsCQ$-queries with Rabin properties. For $i \in [k]$, we let $i \oplus 1 \coloneqq 1$ if $i = k$ and $i \oplus 1 \coloneqq i + 1$ otherwise.

\begin{definition}[Copy MDP]
Let $\mathfrak{p} = \tuple{(F_i, E_i)}_{i \in [k]}$ be a tuple of Rabin pairs and $E = \bigcap_{i=1}^k E_i$.  \todo{The way that Rabin conditions are defined, this needs to be $S \setminus \bigcup_{i=1}^k E_i$ I believe.} The \emph{copy MDP} is given by ${\mdp^\mathfrak{p}} = (E \times [k], \actions, \tuple{\init, 1}, \transMat^\mathfrak{p})$ where, for all $\tuple{\state, i}, \tuple{\state', i'} \in E \times [k]$ and $\action \in \actions(\state)$, $\transMat^\mathfrak{p}$ is defined by
\[
\transMat^\mathfrak{p}(\tuple{\state, i}, \action, \tuple{\state', i'}) =
\begin{cases}
  \transMat(\state,\action, \state') & \text{if } i = i' \land \state \notin F_i \text{ or } i \oplus 1 = i' \land \state \in F_i \\
  0 & \text{otherwise}.
\end{cases}  
\]
\label{def:copy-mdp}
\end{definition}
Intuitively, ${\mdp^\mathfrak{p}}$ consists of $k$ copies of $\mdp$ and a ``jump'' from copy $i$ to copy $i \oplus 1$ happens when a state in $F_i$ is reached.
\begin{proposition}
\label{theorem:ec-and-product-ecs}
$\ec \subseteq \SA$ is an EC of $\mdp$ with $\Conj_{i \in [k]}\ec \modelsRabin (F_i, E_i)$ if and only if there is an EC $\ec'$ of ${\mdp^\mathfrak{p}}$ such that
\begin{enumerate*}[1), itemsep=0mm, topsep=0.8mm, parsep=0mm]
\item $\ec'$ contains a state-action pair $(\tuple{\state, i}, \action)$ with $\state \in F_i$ for some $i \in [k]$
\item and $\{(\state, \action) \mid \exists i \in [k] \centerdot (\tuple{\state, i}, \action) \in \ec' \} = \ec$.
\end{enumerate*}
\end{proposition}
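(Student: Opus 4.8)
I would prove the two implications separately. For a set $D'\subseteq\SA_{\mdp^\mathfrak{p}}$ of state-action pairs of the copy MDP, write $\varpi(D')=\{(\state,\action)\mid\exists i\in[k]\centerdot(\tuple{\state,i},\action)\in D'\}$ for its projection to $\SA_\mdp$, so that condition~2) of the statement reads $\varpi(\ec')=\ec$. Two structural facts about $\transMat^\mathfrak{p}$ drive the argument: (a) for a fixed $(\tuple{\state,i},\action)$ all successors lie in a single copy $i'$ (namely $i'=i\oplus1$ if $\state\in F_i$ and $i'=i$ otherwise) and $\transMat^\mathfrak{p}(\tuple{\state,i},\action,\tuple{\state',i'})=\transMat(\state,\action,\state')$; (b) along any path of $\mdp^\mathfrak{p}$ the copy component either stays or advances by one ($i\mapsto i\oplus1$), and it advances exactly when the current state lies in $F_i$ for the current copy $i$.

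For the ``$\Leftarrow$'' direction, given such an $\ec'$ I would set $\ec=\varpi(\ec')$ and first check that $\ec$ is an EC of $\mdp$: by~(a), $\transMat^\mathfrak{p}(\tuple{\state,i},\action,\cdot)$ only redistributes $\transMat(\state,\action,\cdot)$ within one copy, so closedness of $\ec'$ in $\mdp^\mathfrak{p}$ transfers to closedness of $\ec$ in $\mdp$, and projecting paths of the induced sub-MDP of $\ec'$ shows $\ec$ is strongly connected; since every state of $\mdp^\mathfrak{p}$ has first component in $E=\bigcap_{i=1}^k E_i$ we also get $\states(\ec)\subseteq E_i$ for all $i$. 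The only real work is showing $\states(\ec)\cap F_i\neq\emptyset$ for \emph{every} $i\in[k]$: from the given pair $(\tuple{\state^*,i^*},\action^*)$ with $\state^*\in F_{i^*}$, taking $\action^*$ moves into copy $i^*\oplus1$, and since $\ec'$ is strongly connected there is a path in $\ec'$ from there back to $\tuple{\state^*,i^*}$; by~(b) this path must cycle the copy component through all of $i^*\oplus1,\dots,i^*$, and each advance out of a copy $c$ occurs at some config $\tuple{\state,c}\in\ec'$ with $\state\in F_c$, giving $\state\in\states(\ec)\cap F_c$; together with $\state^*\in F_{i^*}$ this covers every $c$. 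Hence $\Conj_{i\in[k]}\ec\modelsRabin(F_i,E_i)$. This is the cleaner direction.

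For ``$\Rightarrow$'', given an EC $\ec$ of $\mdp$ with $\Conj_{i\in[k]}\ec\modelsRabin(F_i,E_i)$ --- i.e.\ $\states(\ec)\subseteq E$ and $\states(\ec)\cap F_i\neq\emptyset$ for all $i$ --- I would work inside the sub-MDP $\mathcal{K}$ of $\mdp^\mathfrak{p}$ on states $\states(\ec)\times[k]$ with pairs $\{(\tuple{\state,i},\action)\mid(\state,\action)\in\ec,\ i\in[k]\}$, which is a legitimate (closed) sub-MDP because $\ec$ is closed and $\states(\ec)\subseteq E$. I would then take an infinite path of $\mathcal{K}$ whose projection to $\mdp$ traverses every transition of $\ec$ infinitely often --- such a path exists since $\ec$ is finite and strongly connected --- and note, using~(a), (b) and $\states(\ec)\cap F_i\neq\emptyset$, that its copy component advances through all $k$ copies infinitely often. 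Letting $\ec'$ be the EC formed by the state-action pairs visited infinitely often along this path, a pigeonhole argument gives $\varpi(\ec')=\ec$ (each $\ec$-pair is hit infinitely often, so one of its $k$ lifts is, and conversely $\varpi(\ec')\subseteq\ec$ trivially), while any of the infinitely many copy advances supplies a pair $(\tuple{\state,i},\action)\in\ec'$ with $\state\in F_i$; this $\ec'$ establishes 1) and 2).

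I expect ``$\Rightarrow$'' to be the main obstacle: one must ensure that the recurrently visited pairs actually form an \emph{end component} of $\mdp^\mathfrak{p}$ --- closed, not merely strongly connected --- and that it projects onto \emph{all} of $\ec$ rather than onto some proper satisfying sub-end-component (note that $\mathcal{K}$ may well contain ECs confined to a single copy $c$ that avoid that copy's $F_c$-states, whose projection is a strictly smaller EC). I would handle this by choosing the lifted path carefully, e.g.\ as a fair randomized run of $\mathcal{K}$, or as an eventually periodic run that traverses every \emph{transition} of $\ec$, so that both closedness of the limit pair set and surjectivity of $\varpi$ onto $\ec$ are guaranteed.
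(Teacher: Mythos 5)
Your ``$\Leftarrow$'' direction is correct and is essentially a fleshed-out version of the paper's argument: the paper only gestures at ``$\ec'$ needs to go through all copies'', and your explicit tracking of the copy counter along the return path from the successor of $(\tuple{\state^*,i^*},\action^*)$ back to $\tuple{\state^*,i^*}$ supplies exactly the missing detail (the $k-1$ advances cover every $F_c$ with $c\neq i^*$, and $\state^*$ covers $F_{i^*}$). For ``$\Rightarrow$'' you take a genuinely different route. The paper argues deterministically: it defines the reachability relation on $\states(\ec)\times[k]$ induced by $\ec$-actions, descends in finitely many steps to a configuration $\tuple{\state^*,j^*}$ whose reachable set $R$ is a bottom SCC, and sets $\ec'=\{(\tuple{\state,j},\action)\mid(\state,\action)\in\ec,\ \tuple{\state,j}\in R\}$; closedness is then automatic (a reachable set is forward-closed), and surjectivity of the projection holds because from any configuration one can reach some copy of any $\ec$-state by walking around $\ec$ while the copy counter cycles. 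You instead extract $\ec'$ as the set of pairs visited infinitely often along a single infinite run of $\mathcal{K}$, and you correctly name the two obstacles (closedness of the limit set, and surjectivity of $\varpi$ onto all of $\ec$). Your first proposed fix --- a uniform memoryless (``fair randomized'') run of $\mathcal{K}$ --- does close both: almost surely the limit pair set is an end component \cite{de_alfaro_formal_1997}, the projected run is a uniform run on the strongly connected $\ec$ and hence hits every pair of $\ec$ infinitely often, and the copy counter advances infinitely often, so a witnessing path exists. This is sound, at the cost of importing the almost-sure limit-EC theorem where the paper needs only a finite graph search. Your second proposed fix, however, does not suffice as stated: an eventually periodic run whose \emph{projection} traverses every transition $(\state,\action,\state')$ of $\ec$ can still violate closedness, because the edge to $\state'$ might only ever be taken from a copy $j\neq i$ while $(\tuple{\state,i},\action)$ also lies on the cycle, so the required successor $\tuple{\state',i'}$ need not appear on it. Note also that $\mathcal{K}$ is in general not strongly connected (for $k=2$ and a deterministic two-state cycle $u\to v\to u$ with $F_1=\{u\}$, $F_2=\{v\}$, the configurations $\tuple{u,2}$ and $\tuple{v,1}$ are transient), so one cannot repair this by taking a closed walk covering all of $\mathcal{K}$; any repair effectively re-derives the paper's bottom-SCC construction.
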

\begin{proof}~
\begin{itemize}
\item[$\Rightarrow$:] For states $\tuple{\state, j}$ and $\tuple{\state', j'}$ of ${\mdp^\mathfrak{p}}$, we write $\tuple{\state, j} \rightarrow \tuple{\state', j'}$ if there exists $(\state, \action) \in \ec$ such that $\transMat^\mathfrak{p}(\tuple{\state, j}, \action, \tuple{\state', j'}) > 0$. We then define $\mathsf{Reach}(\tuple{\state, j})$ to be the set of states in ${\mdp^\mathfrak{p}}$ reachable under $\rightarrow$ from $\tuple{\state, j}$.
We claim that there exists a state $\tuple{\state^*, j^*}$ such that for every $\tuple{\state', j'} \in \mathsf{Reach}(\tuple{\state^*, j^*})$ we have $\tuple{\state^*, j^*} \in \mathsf{Reach}(\tuple{\state', j'})$, that is, the states that are reachable from $\tuple{\state^*, j^*}$ can also reach it. To find such a state, we consider the following sequence of states. Let $\tuple{\state_0, j_0}$ be an arbitrary state of ${\mdp^\mathfrak{p}}$ with $\state_0 \in \states(\ec)$ and let $n = 0$.
\begin{enumerate}
\item Check whether all states in $\mathsf{Reach}(\tuple{\state_n, j_n})$ can reach $\tuple{\state_n, j_n}$. If so, then $\tuple{\state_n, j_n}$ is the desired state.
\item If this is not the case, then let $\tuple{\state_{n+1}, j_{n+1}}$ be a state reachable from $\tuple{\state_n, j_n}$ that cannot reach it, increment $n$ and go back to 1.
\end{enumerate}
To see that this computation eventually terminates, observe that we have 
\[
\emptyset \neq \mathsf{Reach}(\tuple{\state_{n+1}, i_{n+1}}) \subseteq \mathsf{Reach}(\tuple{\state_n, j_n}).
\]
Further, for all $0 \leq h \leq n$, we have $\tuple{\state_h, j_h} \notin \mathsf{Reach}(\tuple{\state_{n+1}, j_{n+1}})$. Otherwise $\tuple{\state_{h+1}, j_{h+1}}$ could have reached $\tuple{\state_h, j_h}$, contradicting the choice of $\tuple{\state_{h+1}, j_{h+1}}$. Thus, altogether we have
\[
\emptyset \neq \mathsf{Reach}(\tuple{\state_{n+1}, j_{n+1}}) \subseteq \mathsf{Reach}(\tuple{\state_n, j_n}) \setminus \{\tuple{\state_0, j_0}, \ldots, \tuple{\state_n, j_n}\}
\]
and we can conclude the existence of $\tuple{\state^*, j^*}$. We then define $\ec_{\mdp^\mathfrak{p}}$ by
\[
\ec' = \{ (\tuple{\state, j}, \action) \mid (\state, \action)\in \ec \land \tuple{\state, j} \in \mathsf{Reach}(\tuple{\state^*, j^*})\}.
\]
By construction, $\ec'$ is strongly connected. To see that both conditions are satisfied by $\ec'$, we note that for any state $\tuple{\state, j}$ in ${\mdp^\mathfrak{p}}$ and $\state' \in \states(\ec)$, there exists some $j'$ such that $\tuple{\state', j'} \in \mathsf{Reach}(\tuple{\state, j})$. This follows from the fact that $\ec$ is an EC and contains some state from $F_i$ for every $i \in [k]$.   
\item[$\Leftarrow$:] By assumption, we have $\ec = \{(\state, \action) \mid \exists i \in [k] \centerdot (\tuple{\state, i}, \action) \in \ec'\}$. It is easy to see that $\ec$ forms an EC in $\mdp$. Due to the first condition imposed on $\ec'$, we know that \emph{for every} $i \in [k]$, there exists $(\tuple{\state, i}, \action)$ with $\state \in F_i$. Intuitively, if $\ec'$ contains one such pair, it needs to ``go through all copies'' in order to form an EC. Together with the fact that $\ec$ only contains pairs from $\bigcap_{i=1}^k E_i$ by construction of ${\mdp^\mathfrak{p}}$, we get $\ec \models_{\mathsf{Rabin}} (F_i, E_i)$ for all $i \in [k]$.
\end{itemize}
\qed
\end{proof}
The previous theorem relates ECs that satisfy all Rabin pairs in $\mdp$ with corresponding ECs in ${\mdp^\mathfrak{p}}$. %

\medskip

\noindent\emph{Minimal satisfying ECs:} 
For an MDP $\mdp$ and tuple of Rabin pairs $\mathfrak{p} = \tuple{(F_i, E_i)}_{i \in [k]}$, the \emph{copy MDP} $\mdp^{\mathfrak{p}}$ consists of $k$ copies of $\mdp$ and a ``jump'' from copy $i$ to $i \oplus 1$ happens from states in $F_i$ (\Cref{def:copy-mdp}). We have established the relation between ECs of $\mdp$ and $\mdp^\mathfrak{p}$ in \Cref{theorem:ec-and-product-ecs}, namely that ECs in $\mdp^\mathfrak{p}$ including such ``jumps'' correspond to satisfying ECs for pairs in $\mathfrak{p}$.
We now encode this observation into a system of linear equalities. It has been shown in \cite[Lemma~3.8]{jantsch_certificates_2022} that vectors $\vect{x} \in \realsnn^{\SA_{\mdp^\mathfrak{p}}}$ that satisfy, for all states $\tuple{\state, i}$ of $\mdp^\mathfrak{p}$,\todo{Would it make sense to use $\vec{y}$ here instead of $\vec{x}$? }
\[
\sum\nolimits_{(\tuple{\state', j}, \action') \in \SA_{\mdp^\mathfrak{p}}} \transMat^\mathfrak{p}(\tuple{\state', j}, \action', \tuple{\state, i}) \cdot \vect{x}(\tuple{\state', j}, \action') = \sum\nolimits_{\action \in \actions(\state)} \vect{x}(\tuple{\state, i}, \action),
\]
induce ECs in $\mdp^\mathfrak{p}$. More precisely, for $\ec_1, \ldots, \ec_\ell \subseteq \ECS(\mdp^\mathfrak{p})$ there exists $\vect{x}$ as above, such that $\stateSupp{\vect{x}} = \ec_1 \union \ldots \union \ec_\ell$ and vice versa. Let $\mathcal{S}_\mathfrak{p}$ denote the set of vectors $\vect{x}$ that satisfy these equalities w.r.t. $\mdp^\mathfrak{p}$. Then, we only need to enforce that the ``jumps'' have positive values. The MILP $\mathsf{RabinEC}$ in \Cref{figure:milp-rabin} builds on this idea. For a labeling $\labeling$ and $\ec \subseteq \SA$, we define $\labeling(\ec) = \Union_{(\state, \action) \in \ec} \labeling(\state)$.
\begin{theorem}
Let $\mdp$ be an MDP, $\mathfrak{p} = \tuple{(F_j, E_j)}_{j \in [\ell]}$, $\ec \subseteq \SA_\mdp$ and $\labeling \colon \states \to (2^\labels \setminus \emptyset)$ a labeling. Then, $\ec$ is an EC of $\mdp$ such that $\Conj_{j=1}^\ell\ec \modelsRabin (F_j, E_j)$ and $\card{\labeling(\ec)}$ is minimal if and only if there is an \emph{optimal solution} $(\boldBeta, \vect{x})$ of $\mathsf{RabinEC}\left(\mdp, \labeling, \tuple{(F_j, E_j)}_{j \in [k]}\right)$ with $\ec = \{(\state, \action) \in \SA \mid \exists j \in [k] \centerdot \vect{x}(\tuple{\state, j}, \action) > 0\}$.
\end{theorem}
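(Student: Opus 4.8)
The plan is to show that feasible solutions of $\mathsf{RabinEC}(\mdp,\labeling,\mathfrak{p})$ correspond to end components of the copy MDP $\mdp^\mathfrak{p}$ that cycle through all $k$ copies, transfer this to $\mdp$ via \Cref{theorem:ec-and-product-ecs}, and thereby identify the optimal value of the MILP with the minimal label-count over satisfying ECs. Recall $\mathsf{RabinEC}$ minimises $\sum_{l\in\labels}\boldBeta(l)$ over $\boldBeta\in\{0,1\}^\labels$ and $\vect{x}\in\mathcal{S}_\mathfrak{p}$, subject to the coupling $\vect{x}(\tuple{\state,j},\action)\leq M\cdot\boldBeta(l)$ for all relevant $(\tuple{\state,j},\action)$ and $l\in\labeling(\state)$, and the requirement that, for each copy $j$, the total $\vect{x}$-flow on transitions leaving copy $j$ (those out of states in $F_j$) is positive; by homogeneity of $\mathcal{S}_\mathfrak{p}$ this may be normalised to ``$\geq 1$''. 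I would use two structural facts: (i) by \cite[Lemma~3.8]{jantsch_certificates_2022}, $\supp{\vect{x}}$ for $\vect{x}\in\mathcal{S}_\mathfrak{p}$ is precisely a finite union of ECs of $\mdp^\mathfrak{p}$, and every such union is realised by some $\vect{x}\in\mathcal{S}_\mathfrak{p}$; (ii) the observation from the proof of \Cref{theorem:ec-and-product-ecs} that an EC of $\mdp^\mathfrak{p}$ containing even one jump-transition (a pair $(\tuple{\state,i},\action)$ with $\state\in F_i$) must visit every copy, hence contains a jump-transition out of each copy. Write $\mathrm{pr}(\ec')=\{(\state,\action)\in\SA\mid\exists j\centerdot(\tuple{\state,j},\action)\in\ec'\}$ for the $\mdp$-projection.

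The core are two constructions. \emph{(a)}~Given an EC $\ec$ of $\mdp$ with $\ec\modelsRabin(F_j,E_j)$ for all $j$: by \Cref{theorem:ec-and-product-ecs} there is an EC $\ec'$ of $\mdp^\mathfrak{p}$ with $\mathrm{pr}(\ec')=\ec$ meeting conditions 1) and 2); by (i) pick $\vect{x}\in\mathcal{S}_\mathfrak{p}$ with $\supp{\vect{x}}=\ec'$, rescaled (using (ii) and homogeneity) so the jump constraints hold, and in fact taken to be a vertex solution whose coordinates are bounded by a quantity depending only on $\mdp$ and $k$; take $\boldBeta$ to be the indicator of $\labeling(\ec)$. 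With $M$ fixed above that bound, this is feasible, has objective $\card{\labeling(\ec)}$, and its support projects to $\ec$. \emph{(b)}~Given a feasible $(\boldBeta,\vect{x})$: by (i), $\supp{\vect{x}}$ is a union of ECs of $\mdp^\mathfrak{p}$; the jump constraint for copy $1$ plus (ii) yields a member $\ec^\star$ of this union that spans all copies and contains all jump-transitions, so conditions 1),2) of \Cref{theorem:ec-and-product-ecs} hold for $\ec^\star$; hence $\mathrm{pr}(\ec^\star)$ is an EC of $\mdp$ with $\mathrm{pr}(\ec^\star)\modelsRabin(F_j,E_j)$ for all $j$, and $\labeling(\mathrm{pr}(\ec^\star))\subseteq\labeling(\mathrm{pr}(\supp{\vect{x}}))\subseteq\{l\mid\boldBeta(l)=1\}$ by the coupling constraints, so $\card{\labeling(\mathrm{pr}(\ec^\star))}\leq\sum_l\boldBeta(l)$. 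Together, (a) and (b) show the optimal value of $\mathsf{RabinEC}$ equals $\min\{\card{\labeling(\ec)}\mid \ec\text{ an EC with }\ec\modelsRabin(F_j,E_j)\ \forall j\}$.

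Both directions of the theorem now follow. For ``$\Rightarrow$'', if $\ec$ is a minimal satisfying EC then construction (a) gives an optimal solution (its objective $\card{\labeling(\ec)}$ equals the optimal value, as $\ec$ realises the combinatorial minimum) whose support projects exactly to $\ec$ by condition 2). For ``$\Leftarrow$'', let $(\boldBeta,\vect{x})$ be optimal and $\ec=\mathrm{pr}(\supp{\vect{x}})$; by optimality $\boldBeta$ is the indicator of $\labeling(\ec)$, so $\card{\labeling(\ec)}$ equals the optimal value, i.e.\ the combinatorial minimum. With $\ec^\star\subseteq\supp{\vect{x}}$ from (b) we have $\mathrm{pr}(\ec^\star)\subseteq\ec$, so $\card{\labeling(\mathrm{pr}(\ec^\star))}\leq\card{\labeling(\ec)}$, while also $\card{\labeling(\mathrm{pr}(\ec^\star))}$ is at least the combinatorial minimum $=\card{\labeling(\ec)}$; hence equality, $\labeling(\mathrm{pr}(\ec^\star))=\labeling(\ec)$, and $\mathrm{pr}(\ec^\star)$ is a minimal satisfying EC. Passing, if necessary, to the equally optimal solution whose support is exactly $\ec^\star$ (same objective, projection $\mathrm{pr}(\ec^\star)$), we obtain an optimal solution whose projection is the claimed minimal satisfying EC.

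The \textbf{main obstacle} is precisely fact (i): $\supp{\vect{x}}$ is only guaranteed to be a \emph{union} of ECs of $\mdp^\mathfrak{p}$, whereas \Cref{theorem:ec-and-product-ecs} needs a \emph{single} EC cycling through all $k$ copies. The jump-positivity constraints together with the spanning property (ii) are exactly what force such a component to exist inside $\supp{\vect{x}}$, and the remaining care is in arguing that for an optimal solution one may take the support to be one such component (and $\boldBeta$ the indicator of its projected label set), so that the literal ``$\ec=\{\dots\}$'' in the statement holds; the normalisation of $\vect{x}$ and the choice of $M$ are routine once this is set up.
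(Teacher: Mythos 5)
Your proof follows essentially the same route as the paper's: both directions rest on the same two ingredients, namely the correspondence between satisfying ECs of $\mdp$ and ECs of the copy MDP $\mdp^\mathfrak{p}$ that contain a jump (\Cref{theorem:ec-and-product-ecs}), and the fact from \cite[Lemma~3.8]{jantsch_certificates_2022} that the supports of vectors in $\mathcal{S}_\mathfrak{p}$ are exactly the unions of ECs of $\mdp^\mathfrak{p}$. Where you differ is in rigour rather than strategy: the paper's written proof simply asserts, in both halves of the optimality argument, that the support of the relevant $\vect{x}$ ``forms an EC'', whereas you correctly point out that Lemma~3.8 only yields a \emph{union} of ECs, and you use the jump-positivity constraint together with the observation that any EC of $\mdp^\mathfrak{p}$ containing one jump must traverse all $\ell$ copies to extract a single component $\ec^\star$ from that union. (Your reading of constraint~(2) as one inequality per copy is slightly off --- it is a single aggregate sum --- but this is harmless, since one positive jump suffices to locate $\ec^\star$.) Framing both directions through the identity ``optimal MILP value $=$ minimal label count over satisfying ECs'' is a clean reorganisation of the paper's two contradiction arguments and is an improvement on the written proof.

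One caveat: your ``$\Leftarrow$'' direction establishes something slightly weaker than the literal statement. Given an arbitrary optimal solution $(\boldBeta,\vect{x})$, you show that $\mathrm{pr}(\ec^\star)$ is a minimal satisfying EC and then \emph{pass to a different optimal solution} whose support is exactly $\ec^\star$; the theorem, however, asserts that the projection $\ec$ of the \emph{given} optimal solution is itself a minimal satisfying EC. These need not coincide: if the support of $\vect{x}$ contains a second, disjoint EC all of whose labels are already covered by $\labeling\bigl(\mathrm{pr}(\ec^\star)\bigr)$, the objective is unchanged, so the solution is still optimal, yet $\ec$ is then a disconnected union and not an EC at all. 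This is not really a defect you introduced --- the same configuration defeats the paper's own backward direction, whose claim that optimality forces the support to be a single EC is left unjustified --- but you should be aware that your argument (like the paper's) only proves the biconditional under the reading ``there exists an optimal solution whose support is a single satisfying EC projecting to $\ec$'', which is what your construction in part~(a) and your repaired solution in part~(b) in fact deliver.
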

\begin{proof}~
\begin{itemize}
\item[$\Rightarrow$:] From \Cref{theorem:ec-and-product-ecs}, we know that there exists a corresponding EC $\ec_\mathfrak{p}$ in $\mdp^\mathfrak{p}$ such that $\ec = \{(\state, \action) \in \SA_\mdp \mid \exists j \in [\ell] \centerdot (\tuple{\state, j}, \action) \in \ec'\}$ and $(\tuple{\state, j}, \action) \in \ec_\mathfrak{p}$ for some $j \in [k]$, $\state \in F_j$ and $\action \in \actions$. As discussed above, it was shown in \cite[Lemma~3.8]{jantsch_certificates_2022} that there exists an $\vect{x} \in \realsnn^{\SA_{\mdp^\mathfrak{p}}}$ such that $\transMat_{\mdp^\mathfrak{p}}^\top \vect{x} = 0$ and $\stateSupp{\vect{x}} = \ec_\mathfrak{p}$. It directly follows that there exists some $j \in [\ell]$, $\state \in F_j$ and $\action \in \actions$ for which we have $\vect{x}(\tuple{\state, j}, \action) > 0$. Further, since $\vect{x}$ is in the kernel of $\transMat_{\mdp^\mathfrak{p}}^\top$, w.l.o.g. we may assume $\vect{x}(\tuple{\state, j}, \action) \geq 1$.

\medskip

It remains to be shown that there exists $\boldBeta \in \{0, 1\}^\labels$ so that $(\boldBeta, \vect{x})$ is an optimal solution.
We define $\boldBeta(l) = 1$ if $\vect{x}(\tuple{\state, j}, \action) > 0$ and $l \in \labeling(\state)$ for some $j \in [\ell]$, $\state \in \states$ and $\action \in \actions$. Otherwise we define $\boldBeta(l)$ to be zero. For the sake of contradiction, suppose $(\boldBeta, \vect{x})$ is not optimal. Let $(\boldBeta', \vect{x}')$ be an optimal solution. It follows from \cite[Lemma~3.8]{jantsch_certificates_2022} that $\stateSupp{\vect{x}'} \eqqcolon\ec_{\mathfrak{p}}'$ forms an EC in $\mdp^\mathfrak{p}$\todo{here, the restriction that $\labeling$ does not map to the empty set is crucial.}. Let $\ec'$ be the corresponding EC in $\mdp$. It follows that:
\begin{align*}
\card{\labeling(\ec')} = \card{\Union\nolimits_{(\tuple{\state,j}, \action) \in \ec_\mathfrak{p}'} \labeling(\state)} &= \sum_{l \in \labels} \boldBeta'(\state) \\ &< \sum_{l \in \labels} \boldBeta(\state) = \card{\Union\nolimits_{(\tuple{\state, j}, \action) \in \ec_\mathfrak{p}} \labeling(\state)} = \card{\labeling(\ec)}.
\end{align*}
However, this contradicts the minimality of $\ec$. Hence, $(\boldBeta, \vect{x})$ is an optimal solution.
\item[$\Leftarrow$:] Let $(\boldBeta, \vect{x})$ be an optimal solution. Again, from \cite[Lemma~3.8]{jantsch_certificates_2022} we know that $\stateSupp{\vect{x}} \eqqcolon \ec_\mathfrak{p}$ forms an EC in $\mdp^\mathfrak{p}$\todo{again, this follows from optimality}. We observe that $(\tuple{\state, j}, \action) \in \ec_\mathfrak{p}$ for some $j \in [\ell]$ and $\state \in F_j$, due to the second constraint in the MILP. By \Cref{theorem:ec-and-product-ecs} this yields an EC $\ec$ of $\mdp$ that satisfies all Rabin pairs. It remains to be shown that $\card{\labeling(\ec)}$ is minimal. Again, we prove by contradiction. Suppose $\ec$ is not minimal, then there exists a minimal EC $\ec'$ for which we can find a corresponding EC $\ec_\mathfrak{p}'$ in $\mdp^\mathfrak{p}$. Similarly, we can then show that there must exist a solution $(\boldBeta', \vect{x}')$ such that $\sum_{l \in \labels} \boldBeta'(l) < \sum_{l \in \labels} \boldBeta(l)$, contradicting the optimality of $(\boldBeta, \vect{x})$. Hence, $\ec$ is minimal w.r.t. $\labeling$.
\end{itemize}
\qed
\end{proof}

\medskip

\noindent\emph{Minimal witnesses for $\existsCQ$-queries:} Based on the previous result, we can now provide a MILP encoding for determining minimal witnessing subsystems for arbitrary labelings and Rabin properties. For the remainder of this section, we fix Rabin properties $\{\prop_i\}_{i \in [k]}$ with pairs $\{\pairs_i\}_{i \in [k]}$. For all $I = \{i_1, \ldots, i_\ell\} \subseteq [k]$,  we define $\pairs_I = \pairs_{i_1} \times \cdots \times \pairs_{i_\ell}$. Note that an EC satisfies all Rabin properties $\{\pairs_i\}_{i \in I}$ if and only if it satisfies all Rabin pairs in $\mathfrak{p}$ for some $\mathfrak{p} \in \pairs_I$.
\begin{figure}[t]
\centering

\begin{myframe}{{$\mathsf{RabinEC}\bigl(\mdp, \labeling,\mathfrak{p} = \tuple{(F_j, E_j)}_{j \in [\ell]}\bigr)$}}

\smallskip

$\min \sum\nolimits_{l \in \labels} \boldBeta(\state)$ s.t. $\boldBeta \in \{0, 1\}^\labels$, $\vect{x} \in \mathcal{S}_\mathfrak{p}$ and
\begin{enumerate}[(1), itemsep=0mm, topsep=0.8mm, parsep=0mm]
\item for all $(\tuple{\state, j}, \action) \in \SA_{\mdp^\mathfrak{p}}$ and $l \in \labeling(\state)$: $\vect{x}(\tuple{\state, j}, \action) \leq \boldBeta(l) \cdot M$, \label{milp:rabin-ec-transient-indicator}
\item $\sum_{j \in [\ell]} \sum_{\state \in F_j} \sum_{\action \in \actions(\state)} \vect{x}(\tuple{\state, j}, \action) \geq 1$. \label{milp:rabin-ec-positive}
\end{enumerate}
\end{myframe}

\begin{myframe}{{$\mathsf{Rabin}\bigl(\mdp, \labeling, \{\pairs_i\}_{i \in [k]}, \{\lambda_i\}_{i \in [k]}\bigr)$}}

\smallskip

$\min\sum\nolimits_{l \in \labels} \boldBeta(\state)$ s.t. $\boldBeta \in \{0, 1\}^\labels$, $\vect{y} \in \realsnn^{\SA \union \states}$, $\{\vect{y}_I \in \realsnn^\states\}_{I \subseteq [k]}$, $\{\vect{x}_{\mathfrak{p}} \in \mathcal{S}_{\mathfrak{p}} \}_{I \subseteq [k], \mathfrak{p} \in \pairs_I}$ and
\begin{enumerate}[(1), itemsep=0mm, topsep=0.8mm, parsep=0mm]
\item for all $(\state, \action) \in \SA$ and $l \in \labeling(\state)$: $\vect{y}(\state, \action) \leq \boldBeta(l) \cdot M$, \label{milp:rabin-transient-indicator}
\item for all $I \subseteq [k]$, $\mathfrak{p} \in \pairs_{I}$, $(\tuple{\state, j}, \action) \in \SA_{\mathfrak{p}}$ and $l \in \labeling(\state)$: $\vect{x}_{\mathfrak{p}}(\tuple{\state, j}, \action) \leq \boldBeta(l) \cdot M$, \label{milp:rabin-recurrent-indicator}
\item for all $\state \in \states$: $\initDistr(\state) + \sum_{(\state', \action') \in \SA}\vect{y}(\state', \action') \cdot \transMat(\state', \action', \state) = \vect{y}(\state)$, \label{milp:rabin-flow-1}
\item for all $\state \in \states$: $\vect{y}(\state) \geq \sum_{\action \in \actions(\state)} \vect{y}(\state, \action) + \sum_{I \subseteq [k]} \vect{y}_I(\state)$, \label{milp:rabin-flow-2}
\item for all $i \in [k]$: $\sum_{\state \in \states} \sum_{i \in I \subseteq [k]} \vect{y}_I(\state) \geq \lambda_i$,  \label{milp:rabin-threshold}
\item for all $I \subseteq [k]$ and $\state \in \states$: $\vect{x}_I(\state) \geq \vect{y}_I(\state)$ \label{milp:rabin-positive}
\end{enumerate}
\end{myframe}
\caption{MILP for finding minimal satisfying ECs and witnessing subsystems for $(\exists,\land)$-queries with simple Rabin objectives, respectively.}
\label{figure:milp-rabin}
\end{figure}
For $\vect{x}_\mathfrak{p} \in \mathcal{S}_\mathfrak{p}$, we define $\stateSupp{\vect{x}_\mathfrak{p}} = \{\state \in \states \mid \exists i \in I \centerdot \exists \action \in \actions(\state) \centerdot (\tuple{\state, i}, \action) \in \supp{\vect{x}_\mathfrak{p}}\}$. For readability, for all $I \subseteq [k]$ and $\state \in \states$ we define 
\[
\vect{x}_I(\state) \coloneqq \sum_{\mathfrak{p} = \tuple{(F_1, E_1), \ldots, (F_\ell, E_\ell)} \in \pairs_I} \sum_{j \in [\ell]} \sum_{f \in F_j} \sum_{\action \in \actions(f)} \vect{1}_{\state}(f) \cdot \vect{x}_\mathfrak{p}(\tuple{\state, j}, \action).
\]
The idea is that if $\vect{x}_I(\state) > 0$, then $\state$ is contained in an EC that satisfies all Rabin properties indexed by $I$.
The MILP $\mathsf{Rabin}$ is given in \Cref{figure:milp-rabin}. 
Let us provide an intuition for the encoding. The constraints \ref{milp:rabin-transient-indicator} and \ref{milp:rabin-recurrent-indicator} enforce that if $\boldBeta(l) = 0$, then the label $l$ may not appear in the subsystem. The constraints \ref{milp:rabin-flow-1} and \ref{milp:rabin-flow-2} encode a transient flow. Intuitively, \ref{milp:rabin-threshold} requires that the Rabin property $\pairs_i$ is satisfied with probability at least $\lambda_i$. The variable $\vect{y}_I(\state)$ encodes the probability that starting from a state $\state$ an EC, in which all properties $\{\pairs\}_{i \in I}$ are satisfied, is realized. Lastly, we need to encode that if $\vect{y}_I(\state) > 0$, such an EC actually exists and can be realized. This is handled in \ref{milp:rabin-positive}. Note the similarity to constraint \ref{milp:rabin-ec-positive} in \textsf{RabinEC}. The following theorem proves the correctness of the encoding.
\begin{theorem}
Let $\mdp$ be an MDP and $\query = \exists \scheduler \centerdot \Conj_{i=1}^k \queryProb_{\geq \lambda_i}(\prop_i)$ an $\existsCQ$-query. Further, let $\states' \subseteq \states$. Then, $\mdp_{\states'}$ is a witnessing subsystem for $\query$ (i.e. $\mdp_{\states'} \models \query$) if and only if there is a solution $(\boldBeta, \vect{y}, \{\vect{y}_I\}_{I \subseteq [k]}, \{\vect{x}_\mathfrak{p}\}_{I \subseteq [k], \mathfrak{p} \in \pairs_I})$ of 
\[\mathsf{Rabin}(\mdp, \labeling, \{\pairs_i\}_{i \in [k]}, \{\lambda_i\}_{i \in [k]})
\] 
such that 
\[
\stateSupp{\vect{y}} \union \Union\nolimits_{I \subseteq [k]} \Union\nolimits_{\mathfrak{p} \in \pairs_I} \stateSupp{\vect{x}_\mathfrak{p}} \subseteq \states'.
\]
\label{theorem:rabin-milp}
\end{theorem}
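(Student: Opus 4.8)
The plan is to prove the two implications separately, each time going through a witness scheduler and the occupation-measure viewpoint used for multi-objective reachability in \cite{baier_certificates_2024,forejt_quantitative_2011}, together with \Cref{theorem:ec-and-product-ecs} and \cite[Lemma~3.8]{jantsch_certificates_2022} to translate between satisfying ECs of $\mdp$, ECs of the copy MDPs $\mdp^\mathfrak{p}$, and feasible vectors of $\mathcal{S}_\mathfrak{p}$. The vectors $\vect{y}$, $\{\vect{y}_I\}_I$ will encode a transient flow together with ``commitment mass'' (the probability of entering, from a given state, an EC realizing the objectives indexed by $I$), while the $\vect{x}_\mathfrak{p}$ witness that such ECs actually exist inside the subsystem.

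\textbf{From a solution to a witnessing subsystem.} Given a feasible tuple with $\stateSupp{\vect{y}} \cup \bigcup_{I}\bigcup_{\mathfrak{p} \in \pairs_I}\stateSupp{\vect{x}_\mathfrak{p}} \subseteq \states'$, I would first observe that evaluating constraint~\ref{milp:rabin-flow-1} at any $\state \notin \states'$ forces $\initDistr(\state)=0$ and $\transMat(\state',\action',\state)=0$ whenever $\vect{y}(\state',\action')>0$; hence the transient flow never leaves $\states'$ and is a genuine sub-flow of $\mdp_{\states'}$. Using the standard occupation-measure-to-scheduler correspondence (cf.\ \cite{forejt_quantitative_2011,baier_certificates_2024}), define $\scheduler$ to play, in each $\state$ with $\vect{y}(\state)>0$, the action $\action$ with probability $\vect{y}(\state,\action)/\vect{y}(\state)$ and to \emph{commit to $I$} with probability $\vect{y}_I(\state)/\vect{y}(\state)$; by constraint~\ref{milp:rabin-flow-2} the residual probability is non-negative and can be spent arbitrarily (e.g.\ leaking out of $\states'$). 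Committing to $I$ at $\state$ is admissible: since $\vect{y}_I(\state)>0$, constraint~\ref{milp:rabin-positive} gives $\vect{x}_I(\state)>0$, so there are $\mathfrak{p}\in\pairs_I$, an index $j$ with $\state\in F_j$, and $\action$ with $\vect{x}_\mathfrak{p}(\tuple{\state,j},\action)>0$; by \cite[Lemma~3.8]{jantsch_certificates_2022} the support of $\vect{x}_\mathfrak{p}$ is a union of ECs of $\mdp^\mathfrak{p}$, and the one containing $(\tuple{\state,j},\action)$ satisfies condition~1) of \Cref{theorem:ec-and-product-ecs}, hence projects to an EC $\ec$ of $\mdp$ with $\ec\modelsRabin\pairs_i$ for all $i\in I$, $\state\in\states(\ec)$, and $\states(\ec)\subseteq\states'$. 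So $\scheduler$ commits by steering into $\ec$ and following an $\ec$-internal strategy that realizes all $\{\prop_i\}_{i\in I}$ almost surely. Since committing is terminal, the occupation-measure identities yield $\prob_{\mdp_{\states'},\init}^{\scheduler}(\text{eventually stay in an EC realizing }\prop_i)\ge\sum_{\state}\sum_{I\ni i}\vect{y}_I(\state)\ge\lambda_i$ by constraint~\ref{milp:rabin-threshold}, whence $\mdp_{\states'}\models\query$.

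\textbf{From a witnessing subsystem to a solution.} Conversely, let $\scheduler$ witness $\query$ in $\mdp_{\states'}$. By the structure theory of witness schedulers for multi-objective $\omega$-regular queries \cite{forejt_quantitative_2011} I may assume $\scheduler$ first runs a transient phase after which, almost surely, it reaches and then stays forever inside a single EC of $\mdp_{\states'}$; moreover, exploiting \Cref{theorem:ec-and-product-ecs} and strong connectivity of that limit EC $\ec$, I may further extend the transient phase so that it only commits once it sits at a ``jump'' state $\state^{*}$ of $\ec$ — i.e.\ $\state^{*}\in F_j$ for some pair selected in a chosen $\mathfrak{p}\in\pairs_{I}$, with $I$ the set of objectives realized in $\ec$ and the associated copy EC $\ec'$ of $\mdp^\mathfrak{p}$ containing $\tuple{\state^{*},j}$ — and then commits and plays an $\ec'$-realizing strategy. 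This keeps everything inside $\states'$ because $\states(\ec)\subseteq\states'$. Now set $\vect{y}(\state,\action)$, $\vect{y}(\state)$ to the expected numbers of transient uses of $(\state,\action)$ and transient visits of $\state$, $\vect{y}_I(\state)$ to the probability of committing to $I$ at $\state$, and for each committed $(\mathfrak{p},\ec')$ take $\vect{x}_\mathfrak{p}\in\mathcal{S}_\mathfrak{p}$ with support $\ec'$ (via \cite[Lemma~3.8]{jantsch_certificates_2022}), scaled so its $\state^{*}$-contribution dominates the commitment mass there; finally put $\boldBeta(l)=1$ iff $l$ labels a state carrying positive $\vect{y}$- or $\vect{x}_\mathfrak{p}$-mass and take $M$ large. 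Constraints~\ref{milp:rabin-transient-indicator},~\ref{milp:rabin-recurrent-indicator} hold by the choice of $\boldBeta$; \ref{milp:rabin-flow-1} and~\ref{milp:rabin-flow-2} are exactly the in-/out-flow equations of the transient occupation measure (each transient visit of $\state$ is followed by one action or one commitment, so~\ref{milp:rabin-flow-2} even holds with equality); \ref{milp:rabin-positive} holds by the scaling; and~\ref{milp:rabin-threshold} holds since any run satisfying $\prop_i$ has limit EC $\modelsRabin\pairs_i$ and therefore commits to some $I\ni i$, giving $\sum_\state\sum_{I\ni i}\vect{y}_I(\state)\ge\prob_{\mdp_{\states'}}^{\scheduler}(\prop_i)\ge\lambda_i$. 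The support condition is immediate.

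\textbf{Main obstacle.} The flow bookkeeping and the occupation-measure bijection are routine. The delicate point is the interaction between the ``commit at a state'' mechanism and the fact that $\vect{x}_I(\state)$ only records mass on $F_j$-copies: one must argue that a witness scheduler can, without loss of generality, be arranged to commit precisely at a jump state lying in the designated copy EC $\ec'$ (needing strong connectivity of $\ec$ and \Cref{theorem:ec-and-product-ecs}), and, in the other direction, that a feasible $\vect{x}_\mathfrak{p}$ with $\vect{x}_\mathfrak{p}(\tuple{\state,j},\action)>0$ genuinely certifies an EC of $\mdp$ realizing $\{\prop_i\}_{i\in I}$ through $\state$ inside $\states'$. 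Both rest on \Cref{theorem:ec-and-product-ecs} and \cite[Lemma~3.8]{jantsch_certificates_2022}, and getting the quantifier over $\mathfrak{p}\in\pairs_I$ and the copy-index $j$ to line up with \Cref{def:copy-mdp} is where the argument needs the most care.
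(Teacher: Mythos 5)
Your proposal is correct and follows essentially the same route as the paper's proof: both directions hinge on \Cref{theorem:ec-and-product-ecs} together with \cite[Lemma~3.8]{jantsch_certificates_2022} to pass between satisfying ECs of $\mdp$, ECs of the copy MDPs, and vectors in $\mathcal{S}_\mathfrak{p}$, and on decomposing the witness scheduler by the set $I$ of objectives realized in the limit EC, committing at designated jump states. The only difference is presentational — the paper routes the transient bookkeeping through an auxiliary MDP with fresh $\tau_I$-actions and the Farkas certificate lemma, whereas you invoke the occupation-measure/scheduler correspondence directly — which does not change the substance of the argument.
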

\begin{proof}
We define $\mdp' \coloneqq \mdp_{\states'}$.
\begin{itemize}
\item[$\Rightarrow$:] Suppose $\mdp'$ is a witnessing subsystem for $\query$ and let $\scheduler$ be a satisfying scheduler. We need to show that there exists a corresponding solution of $\mathsf{Rabin}\bigl(\mdp, \labeling, \{\pairs_i\}_{i \in [k]}, \{\lambda_i\}_{i \in [k]}\bigr)$. 
For all $I \subseteq [k]$, we define the events
\[
A_I \coloneqq \Conj_{i \in I} \prop_i \land \Conj_{i \in [k]\setminus I} \bar{\prop}_i.
\]
Let $I = \{i_1, \ldots, i_\ell \} \subseteq [k]$. We collect the MECs in which the event $A_I$ is satisfied with positive probability, formally given by
\[
\mathcal{U}_I = \{\mec \in \MECS(\mdp') \mid \prob_{\mdp'}^\scheduler(A_I \mid \eventually \globally \mec) > 0 \}.
\]
We observe that for any MEC $\mec \in \mathcal{U}_I$ there exists an EC $\ec' \subseteq \mec$ and $\mathfrak{p} = \tuple{(F_j, E_j)}_{j \in [\ell]}\in \pairs_I$ such that $\Conj_{j \in [\ell]}\ec' \modelsRabin (F_j, E_j)$. Let us denote such satisfying EC with $\ec_\mathfrak{p}$. Note that this EC is not necessarily unique. By \Cref{theorem:ec-and-product-ecs}, for all such ECs $\ec_\mathfrak{p}$ there exists a corresponding EC $\ec_\mathfrak{p}'$ in $\mdp^\mathfrak{p}$. We collect all these corresponding ECs of $\mdp^\mathfrak{p}$ in the set $\mathcal{D}_{\mathfrak{p}} \subseteq \ECS(\mdp^\mathfrak{p})$. It follows from \cite[Lemma~3.8]{jantsch_certificates_2022} that there exists $\vect{x}_\mathfrak{p} \in \mathcal{S}_\mathfrak{p}$ so that the support of $\vect{x}_\mathfrak{p}$ induces the ECs in $\mathcal{D}_\mathfrak{p}$. More precisely, $\supp{\vect{x}_\mathfrak{p}} = \Union_{\ec' \in \mathcal{D}_\mathfrak{p}} \ec'$ and $\transMat_{\mdp^\mathfrak{p}}^\top \vect{x}_\mathfrak{p} = 0$. Further, we have $\stateSupp{\vect{x}_\mathfrak{p}} \subseteq \states'$ for all $I \subseteq [k]$ by definition of $\mathcal{U}_I$.

\medskip

It remains to be shown that there exist a corresponding $\vect{y}$ and $\{\vect{y}_I\}_{I \subseteq [k]}$. \todo{the following is quite technical. Intuitively, we need to ``connect'' the transient with the recurrent part at the right states.}

\medskip

By \Cref{theorem:ec-and-product-ecs}, for every $\mec \in \mathcal{U}_I$ there exists $\mathfrak{p} = \tuple{(F_j, E_j)}_{j \in [\ell]} \in \pairs_I$ and satisfying EC $\ec_\mathfrak{p} \subseteq \mec$ and $(\state, \action) \in \ec_\mathfrak{p}$ and $j \in [\ell]$ such that $\state \in F_j$ and $\vect{x}_\mathfrak{p}(\tuple{\state, j}, \action) > 0$. Let $\state_{\mec, \mathfrak{p}}$ denote such a state where this property is satisfied.

\medskip

Consider the MDP $\mathcal{N}$ obtained from $\mdp'$ as follows. For every MEC $\mec \in \MECS(\mdp')$, we add an action $\tau_\mec$ from all of its states leading to absorbing state $\bot_\mec$ with probability $1$. Further, for every $I \subseteq [k]$ and $\mec \in \mathcal{U}_I$ we add an action $\tau_I$ from $\state_{\mec,\mathfrak{p}}$ leading to absorbing state $\bot_{I}$ with probability $1$. We show that there exists a scheduler $\scheduler''$ such that for all $I \subseteq [k]$ we have:
\[
\prob_{\mathcal{N}}^{\scheduler''}(\eventually \bot_I) = \prob_{\mdp'}^\scheduler(A_I).
\]
Such a scheduler $\scheduler''$ is obtained as follows. By \cite[Lemma 4]{randour_percentile_2015}, there exists a scheduler $\scheduler'$ for which we have $\prob_{\mathcal{N}}^{\scheduler'}(\eventually \bot_\mec) = \prob_{\mdp'}^\scheduler(\eventually \globally \mec)$ for all $\mec \in \MECS(\mdp')$. Then, we modify $\scheduler'$ into a scheduler $\scheduler''$ in the following way. Whenever $\scheduler'$ chooses $\tau_\mec$, $\scheduler''$ instead switches to a scheduler $\scheduler_{\mec,\mathfrak{p}}$ with probability $\prob_{\mdp'}^\scheduler(A_I \mid \eventually \globally \mec)$. The scheduler $\scheduler_{\mec,\mathfrak{p}}$ ensures that $\state_{\mec,\mathfrak{p}}$ is reached almost surely and once it is reached chooses action $\tau_I$. This is possible, since $\prob_{\mdp'}^\scheduler(A_I \mid \eventually \globally \mec) > 0$ implies the existence of $\state_{\mec,\mathfrak{p}}$ and inside an MEC each state can be reached almost surely. Then, we have for all $I \subseteq [k]$:
\begin{align*}
\prob_{\mathcal{N}}^{\scheduler''}(\eventually \bot_I) &= \sum_{\mec \in \MECS(\mdp')} \prob_{\mathcal{N}}^{\scheduler'}(\eventually \bot_\mec) \cdot \prob_{\mdp'}^{\scheduler}(A_I \mid \eventually \globally \mec)\\
&= \sum_{\mec \in \MECS(\mdp')} \prob_{\mdp'}^{\scheduler}(\eventually \globally \mec) \cdot \prob_{\mdp'}^{\scheduler}(A_I \mid \eventually \globally \mec) = \prob_{\mdp'}^\scheduler(A_I).
\end{align*}
It follows from \Cref{lemma:reach-certs}, that there exists a vector $\vect{y}'$ with $\stateSupp{\vect{y}'} \subseteq \states'$, satisfying the constraints \ref{milp:rabin-flow-1} and \ref{milp:rabin-flow-2} of the MILP
\[
\mathcal{F}_{\triangleright}\bigl(\mathcal{N}, \labeling, \{\bot_I\}_{I \subseteq [k]}, \{\gamma_I\}_{I \subseteq [k]}\bigr)
\] 
in \Cref{fig:milps-reachability} where $\gamma_I = \prob_{\mdp'}^\scheduler(A_I)$ for all $I \subseteq [k]$. We obtain the desired $\vect{y}$ and $\{\vect{y}_I\}_{I \subseteq [k]}$ by setting $\vect{y}(\state, \action) = \vect{y}'(\state, \action)$ for all $(\state, \action) \in \SA_{\mdp'}$ and zero otherwise. Further, we set $\vect{y}_I(\state) = \vect{y}'(\state, \tau_I)$ for all $\state \in \states'$ and where $\tau_I$ is available and zero otherwise.

\medskip

By construction, $\vect{y}$ and $\{\vect{y}_I\}_{I \subseteq [k]}$ satisfy constraint \ref{milp:rabin-flow-1} and \ref{milp:rabin-flow-2} in \Cref{figure:milp-rabin}. Constraint \ref{milp:rabin-threshold} is satisfied since
\begin{align*}
\sum_{\state \in \states} \sum_{i \in I \subseteq [k]} \vect{y}_I(\state) \geq \sum_{i \in I \subseteq [k]} \gamma_I &= \sum_{i \in I \subseteq [k]} \prob_{\mdp'}^{\scheduler}(A_I) 
\\&= \prob_{\mdp'}^{\scheduler}(\prop_i) \geq \lambda
\end{align*}
for all $i \in [k]$. 
To see that constraint \ref{milp:rabin-positive} is satisfied, we observe that, $\vect{y}_I(\state) > 0$ implies that the $\tau_I$ action is available in $\state$ in $\mathcal{N}$. By construction, this implies
\begin{align*}
\vect{x}_\mathfrak{p}(\tuple{\state, j}, \action) > 0 \text{ and } \state \in F_j \text{ for some } & \mathfrak{p} = \tuple{(F_1, E_1), \ldots, (F_\ell, E_\ell)} \in \pairs_I,\\ &j \in [\ell] \text{ and } \action \in \actions.
\end{align*}
Together with observation that for any $\xi > 0$, we have $\transMat_{\mdp^\mathfrak{p}}^\top \xi \cdot \vect{x}_\mathfrak{p} = 0$, we see that constraint \ref{milp:rabin-positive} is satisfied. Lastly, it is easy to see that we can then choose $\boldBeta$ accordingly.
\item[$\Leftarrow$:] Let $(\boldBeta, \vect{y}, \{\vect{y_I}\}_{I \subseteq [k]}, \{\vect{x}_{\mathfrak{p}} \in \mathcal
{S}_{\mathfrak{p}} \}_{I \subseteq [k], \mathfrak{p} \in \pairs_I})$ be a solution of the MILP such that 
\[
\stateSupp{\vect{y}} \union \Union_{I \subseteq [k]} \Union_{\mathfrak{p} \in \pairs_I} \stateSupp{\vect{x}_\mathfrak{p}} \subseteq \states'.
\]
We will now show that there exists a satisfying scheduler $\scheduler$ for $\mdp'$. To this end, we make the following observations.

\medskip

\textit{Observation 1:} Let $\mathcal{N}$ be the MDP obtained from $\mdp'$ by adding a fresh action $\tau_I$ in $\state$ leading to a fresh absorbing state $\bot_I$ with probability $1$, for all $\state \in \states'$ and $I \subseteq [k]$ with $\vect{y}_I(\state) > 0$.
We define the vector $\vect{y}_\mathcal{N} \in \realsnn^{\SA_{\mathcal{N}}}$ by $\vect{y}_\mathcal{N}(\state, \action) = \vect{y}(\state, \action)$ for all $(\state, \action) \in \SA_{\mdp'}$ and $\vect{y}_\mathcal{N}(\state, \tau_I) = \vect{y}_I(\state)$ for all $\state \in \states'$ and $I \subseteq [k]$.
Then, $\vect{y}_\mathcal{N}$ satisfies the constraints in \Cref{lemma:reach-certs} for $\existsCQ$-queries where the target states are given by  $\{\bot_I \}_{I \subseteq [k]}$ and the thresholds by $\gamma_I = \sum_{\state \in \states} \vect{y}_I(\state)$ for all $I \subseteq [k]$. 

\smallskip

From \Cref{lemma:reach-certs}, we know that there exists a memoryless scheduler $\scheduler'$ for $\mathcal{N}$ such that $\prob_{\mathcal{N}}^{\scheduler'}(\eventually \bot_I) \geq \gamma_I$ for all $I \subseteq [k]$.

\medskip

\textit{Observation 2:} Because of constraint \ref{milp:rabin-positive}, we have that 
\begin{align*}
\vect{y}_I(\state)> 0 \implies & \text{there exists } \mathfrak{p} = \tuple{(F_1, E_1), \ldots, (F_\ell, E_\ell)} \in \pairs_I \text{ and } j \in [\ell]\\
& \text{such that } \state \in F_j \text{ and } \vect{x}_\mathfrak{p}(\tuple{\state, j}, \action) > 0 \text{ for some } \action \in \actions.
\end{align*}
We know from \cite[Lemma~3.8]{jantsch_certificates_2022} that $\stateSupp{\vect{x}_\mathfrak{p}}$ induces ECs in $\mdp'$. Thus, if $\vect{y}_I(\state) > 0$, we know that $\state$ is part of an EC. Intuitively, $\vect{x}_\mathfrak{p}(\tuple{\state, j}, \action) > 0$ implies that $\state$ is contained in an EC of $\mdp^\mathfrak{p}$ that contains a jump from copy $j$ to copy $j \oplus 1$. By \Cref{theorem:ec-and-product-ecs}, we then know that all states $\state \in \states$ with $\vect{y}_I(\state) > 0$ are contained in an EC $\ec_\mathfrak{p}$ of $\mdp'$ that satisfies all pairs in $\mathfrak{p}$.

\medskip

We are now ready to construct a satisfying scheduler $\scheduler$. In the first phase, $\scheduler$ follows the memoryless scheduler $\scheduler'$. Whenever, $\scheduler'$ chooses action $\tau_I$ in a state $\state$, $\scheduler$ switches to a scheduler that realizes the corresponding $\ec_\mathfrak{p}$, for some $\mathfrak{p} \in \pairs_I$, that contains $\state$ and satisfies all Rabin pairs indexed by $I$. Using observation 2 and the fact that by construction $\tau_I$ is only available in states $\state \in \states'$ with $\vect{y}_I(\state) > 0$, we know such an EC must exist.

\medskip 

Finally, for all $i \in [k]$ we get:
\begin{align*}
\prob_{\mdp'}^{\scheduler}(\prop_i) &\geq \prob_{\mdp'}^{\scheduler}(\text{``Switch to satisfying EC for } I \text{ with } i \in I \text{''})\\
&= \sum_{i \in I \subseteq [k]} \prob_{\mathcal{N}}^{\scheduler'}(\eventually \bot_I)\\
&\geq \sum_{i \in I \subseteq [k]} \gamma_I = \sum_{\state \in \states} \sum_{i \in I \subseteq [k]} \vect{y}_I(\state) \geq \lambda_i
\end{align*}
\qed
\end{itemize}
\end{proof}

\section{Supplementary Material for \Cref{section:dtmc}}
\label{appendix:dtmc}
Given matrix $\vect{B}$ and $D, D' \subseteq \states \times \autoStates$, let $\vect{B}_{D,D'}$ be the restriction of $\vect{B}$ to the rows and columns in $D$ and $D'$, respectively.

\begin{remark}[Sub-stochastic chains.] The results from \cite{baier_markov_2023-1} can easily be extended to sub-stochastic Markov chains, in which we require only $\sum_{\state' \in \states}\transMat(\state, \state') \leq 1$ for each $\state \in \states$ (instead of $=1$).
Let $\dtmc = (\states, \init, \transMat)$ be a sub-stochastic Markov chain and $\automaton$ be an UBA over alphabet $\alphabet = S$.
Define $\dtmc' = (\states \cup \{ \exit \},\init,\transMat')$, where $\transMat'(s, \exit) = 1 - \sum_{\state' \in \states} \transMat(\state, \state')$, $\transMat'(\exit, \exit) = 1$ and $\transMat'(\state,\state') = \transMat(\state,\state')$ for all $\state, \state' \in \states$.
Clearly, $\prob_{\dtmc',\state}(\lang{\automaton, \autoState}) = \prob_{\dtmc,\state}(\lang{\automaton, \autoState})$, since $\automaton$ immediately rejects any word containing letter $\exit$.
Furthermore, the matrix $B_{\dtmc' \times \automaton}$ restricted to $\states \times \autoStates$ is equal to $B_{\dtmc \times \automaton}$, since $\automaton$ does not have any transition labeled by $\exit$. Hence, the approach outlined to compute the value vector for $\dtmc \times \automaton$ can also be applied to sub-stochastic $\dtmc$.
\end{remark}

\begin{lemma}
\label{lemma:strongly-connected-square-matrices}
Let $\vect{A} \in \realsnn^{\states \times \states}$ be strongly connected and $\vect{x} \in \realsnn^\states$. Then, the following statements hold:
\begin{enumerate}[itemsep=0mm, topsep=0.8mm, parsep=0mm]
\item If $\vect{A} \vect{x} \geq \rho(\vect{A}) \cdot \vect{x}$ holds, then we have $\vect{A} \vect{x} = \rho(\vect{A}) \cdot \vect{x}$.
\item The eigenspace associated with $\rho(\vect{A})$ is one-dimensional.
\end{enumerate}
\end{lemma}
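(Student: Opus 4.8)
Both statements are classical consequences of Perron--Frobenius theory applied to the nonnegative matrix $\vect{A}$, whose \emph{strong connectivity} (strong connectivity of the induced graph) is precisely irreducibility. Write $\rho := \rho(\vect{A})$ for the Perron root and put $n = \card{\states}$. I will use two standard facts about an irreducible nonnegative matrix $\vect{A}$: (i) $\rho$ admits a strictly positive right eigenvector $\vect{x}_0 > \vect{0}$ and a strictly positive left eigenvector $\vect{w} > \vect{0}$ with $\vect{w}^\top \vect{A} = \rho\,\vect{w}^\top$ (apply Perron--Frobenius to $\vect{A}$ and to $\vect{A}^\top$); and (ii) $(\vect{I} + \vect{A})^{n-1} > \vect{0}$ entrywise.

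\emph{Statement 1.} Given $\vect{A}\vect{x} \geq \rho\,\vect{x}$ with $\vect{x} \in \realsnn^{\states}$, multiply from the left by $\vect{w}^\top > \vect{0}$ and use $\vect{w}^\top \vect{A} = \rho\,\vect{w}^\top$ to get $\rho\,\vect{w}^\top\vect{x} = \vect{w}^\top\vect{A}\vect{x} \geq \rho\,\vect{w}^\top\vect{x}$, so $\vect{w}^\top(\vect{A}\vect{x} - \rho\,\vect{x}) = 0$. Since $\vect{w} > \vect{0}$ and $\vect{A}\vect{x} - \rho\,\vect{x} \geq \vect{0}$ componentwise, this forces $\vect{A}\vect{x} - \rho\,\vect{x} = \vect{0}$.

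\emph{Statement 2.} The eigenspace of the real matrix $\vect{A}$ for the real eigenvalue $\rho$ has a basis of real vectors (take real and imaginary parts of any complex eigenvector), so it suffices to show every real eigenvector $\vect{y}$ for $\rho$ is a scalar multiple of $\vect{x}_0$. Replacing $\vect{y}$ by $-\vect{y}$ if necessary we may assume $\vect{y}$ has a positive entry; set $\epsilon := \min\{\,\vect{x}_0(\state)/\vect{y}(\state) : \vect{y}(\state) > 0\,\} > 0$, so that $\vect{z} := \vect{x}_0 - \epsilon\,\vect{y}$ lies in $\realsnn^{\states}$, has at least one zero coordinate, and still satisfies $\vect{A}\vect{z} = \rho\,\vect{z}$. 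Then $(\vect{I}+\vect{A})^{n-1}\vect{z} = (1+\rho)^{n-1}\vect{z}$, hence $\vect{z} = (1+\rho)^{-(n-1)}(\vect{I}+\vect{A})^{n-1}\vect{z}$, and by (ii) the right-hand side is $> \vect{0}$ unless $\vect{z} = \vect{0}$. Since $\vect{z}$ has a zero coordinate, $\vect{z} = \vect{0}$, i.e. $\vect{y} = \epsilon^{-1}\vect{x}_0$; thus the eigenspace is one-dimensional.

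\emph{Main obstacle.} The computations are routine; the only real care needed is in invoking facts (i) and (ii) under hypotheses matching the notion of strong connectivity used here, including degenerate cases ($n = 1$, and periodic $\vect{A}$, where $\vect{A}^k$ may never become strictly positive even though $(\vect{I}+\vect{A})^{n-1}$ does). I would either cite a standard reference for (i)--(ii) or include the few lines establishing exactly these facts, and note in passing that $1 + \rho \neq 0$ always holds because $\rho \geq 0$, so the case $\rho = 0$ needs no separate treatment above.
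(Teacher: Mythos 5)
Your proof is correct. The paper itself supplies no argument for this lemma: it simply cites Berman and Plemmons (Theorems 1.3.31 and 2.1.4 of \emph{Nonnegative Matrices in the Mathematical Sciences}), which state exactly the two classical Perron--Frobenius facts you prove. Your derivation is the standard textbook route to those theorems and is sound: statement 1 follows cleanly by pairing $\vect{A}\vect{x} - \rho(\vect{A})\vect{x} \geq \vect{0}$ against a strictly positive left Perron eigenvector, and statement 2 by the subtraction trick $\vect{z} = \vect{x}_0 - \epsilon\,\vect{y}$ combined with $(\vect{I}+\vect{A})^{n-1} > \vect{0}$, which correctly sidesteps periodicity (where powers of $\vect{A}$ alone need not become positive). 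Your reduction of the complex eigenspace to real eigenvectors and your attention to the degenerate $n=1$ case are also in order. The only outstanding obligation is the one you flag yourself: facts (i) and (ii) must either be proved in a few lines or cited --- and at that point the most economical choice is to cite the same reference the paper does, which collapses your argument into the paper's one-line proof.
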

\begin{proof}
The first statement follows from \cite[Theorem 1.3.31]{berman_nonnegative_1994} and the second from \cite[Theorem 2.1.4]{berman_nonnegative_1994}.
\end{proof}

\begin{restatable}[Monotonicity]{lemma}{dtmcMonotonicity}
  \label{lemm:monotonicity}
  For all $\vect{x} \in [0,1]^{\states \times \autoStates}$ such that $\vect{B} \vect{x} \geq \vect{x}$ and $\boldMu_D^{\top} \vect{x}_D \leq 1$ for each $D \in \mathcal{D}^+$, we have $\vect{x} \leq \valueVector$.
\end{restatable}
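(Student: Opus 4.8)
The plan is to exploit the SCC structure of the graph induced by $\vect{B}$: first pin $\vect{x}$ down on the recurrent SCCs by Perron--Frobenius, then propagate the resulting bound through the transient part in reverse topological order. I would work under the standing assumption of this section that $\vect{B}$ has no non-accepting recurrent SCCs, so that $\mathcal{D}^- = \emptyset$ and the recurrent SCCs of $\vect{B}$ are exactly the elements of $\mathcal{D}^+$. From \cite{baier_markov_2023-1} I would take as given: each $D \in \mathcal{D}^+$ is a \emph{bottom} SCC of the graph of $\vect{B}$ (it has no outgoing edges) with $\rho(\vect{B}_D) = 1$; the value vector satisfies $\vect{B}\valueVector = \valueVector$ and $\boldMu_D^\top\valueVector_D = 1$ for all $D \in \mathcal{D}^+$ (so in particular $\vect{B}_D\valueVector_D = \valueVector_D$ and $\valueVector_D \neq \vect{0}$); and every SCC that is not recurrent has spectral radius strictly below $1$.

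For a fixed $D \in \mathcal{D}^+$: since $D$ has no outgoing edges, restricting $\vect{B}\vect{x} \ge \vect{x}$ to the rows of $D$ gives $\vect{B}_D\vect{x}_D \ge \vect{x}_D = \rho(\vect{B}_D)\,\vect{x}_D$, and \Cref{lemma:strongly-connected-square-matrices}(1) upgrades this to $\vect{B}_D\vect{x}_D = \vect{x}_D$. Hence $\vect{x}_D$ lies in the eigenspace of $\vect{B}_D$ for the eigenvalue $1$, which is one-dimensional by \Cref{lemma:strongly-connected-square-matrices}(2) and is spanned by the strictly positive Perron vector $\valueVector_D$; so $\vect{x}_D = c_D\,\valueVector_D$ for some scalar $c_D \ge 0$. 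The normalizer inequality then finishes this case: $1 \ge \boldMu_D^\top\vect{x}_D = c_D\,\boldMu_D^\top\valueVector_D = c_D$, so $\vect{x}_D \le \valueVector_D$.

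For the transient part, let $R := \bigcup_{D \in \mathcal{D}^+} D$ and $T := (\states \times \autoStates) \setminus R$. Because recurrent SCCs are bottom, $\vect{B}$ has no edges from $R$ to $T$, so it is block upper triangular for the split $(T, R)$; since all SCCs inside $T$ are non-recurrent, $\rho(\vect{B}_T) < 1$, hence $(\vect{I} - \vect{B}_T)^{-1} = \sum_{n \ge 0} \vect{B}_T^{\,n} \ge \vect{0}$. Set $\vect{d} := \vect{x} - \valueVector$. Restricting $\vect{B}\vect{x} \ge \vect{x}$ and $\vect{B}\valueVector = \valueVector$ to the rows of $T$ gives $\vect{B}_T\vect{x}_T + \vect{B}_{T,R}\,\vect{x}_R \ge \vect{x}_T$ and $\vect{B}_T\valueVector_T + \vect{B}_{T,R}\,\valueVector_R = \valueVector_T$; using $\vect{x}_R \le \valueVector_R$ from the previous paragraph together with $\vect{B}_{T,R} \ge \vect{0}$, subtracting the equality from the inequality yields $(\vect{I} - \vect{B}_T)\,\vect{d}_T \le \vect{0}$, and multiplying on the left by the nonnegative matrix $(\vect{I} - \vect{B}_T)^{-1}$ gives $\vect{d}_T \le \vect{0}$. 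Together with the recurrent case this is exactly $\vect{x} \le \valueVector$.

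The main obstacle is marshalling the structural input from \cite{baier_markov_2023-1} — that recurrent SCCs are precisely the bottom SCCs of $\vect{B}$ with Perron eigenvalue $1$ while every other SCC is strictly sub-unital — since that is what makes both the Perron--Frobenius step (through \Cref{lemma:strongly-connected-square-matrices}) and the convergence of $\sum_n \vect{B}_T^{\,n}$ available; once those are in hand, the remaining manipulations are routine.
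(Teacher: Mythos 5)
Your overall strategy---pin $\vect{x}$ down on the recurrent SCCs via Perron--Frobenius and the normalizers, then push the bound through the transient part using $(\vect{I}-\vect{B}_T)^{-1}\ge\vect{0}$---is the same as the paper's, which organizes it as a bottom-up induction over the SCC DAG. However, there is a genuine gap in your structural input: you take as given that every $D\in\mathcal{D}^+$ is a \emph{bottom} SCC of the graph of $\vect{B}$. This is false in general. Because $\automaton$ is only unambiguous, $\vect{B}$ is not stochastic, and a recurrent SCC (dominant eigenvalue $1$) can have outgoing edges; the paper deliberately calls recurrent SCCs the ``counterparts'' of bottom SCCs, not bottom SCCs. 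Your recurrent-SCC step breaks exactly there: restricting $\vect{B}\vect{x}\ge\vect{x}$ to the rows of $D$ gives $\vect{B}_D\vect{x}_D+\sum_{C}\vect{B}_{D,C}\vect{x}_C\ge\vect{x}_D$ with the extra (nonnegative) sum on the \emph{left-hand} side, so it cannot simply be dropped to conclude $\vect{B}_D\vect{x}_D\ge\vect{x}_D$; you first need $\vect{x}_C=\vect{0}$ on every SCC $C$ below $D$. Your two-phase ordering (all of $R$ first, then all of $T$) also presupposes the bottom-SCC property, since handling a recurrent $D$ with transient SCCs below it would require the transient bound there first, while your transient phase in turn uses $\vect{x}_R\le\valueVector_R$.

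The repair is exactly what the paper does: induct over the SCC DAG from the bottom up, and for a recurrent SCC $D$ invoke \cite[Proposition~13]{baier_markov_2023-1}, which gives $\valueVector_C=\vect{0}$ for the SCCs $C$ strictly below a recurrent SCC; combined with the induction hypothesis $\vect{0}\le\vect{x}_C\le\valueVector_C$ this yields $\vect{x}_C=\vect{0}$, after which your Perron--Frobenius argument ($\vect{B}_D\vect{x}_D=\vect{x}_D$ by \Cref{lemma:strongly-connected-square-matrices}, one-dimensional eigenspace, normalizer bound $c_D\le 1$) goes through verbatim, and your treatment of the non-recurrent blocks via $\rho<1$ and the nonnegative inverse is essentially the paper's. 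So the missing ingredient is not a new idea but the correct structural fact from \cite{baier_markov_2023-1}; as written, the proof rests on a premise that does not hold.
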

\begin{proof}
We prove the statement via induction over the DAG of SCCs of $\vect{B}$. Let $D$ be an SCC of $\vect{B}$ (possibly trivial) and let $\partition \downarrow$ denote the SCCs directly below $D$. We then have $\vect{x}_D \leq \vect{B}_{D} \vect{x}_D + \sum_{C \in \partition \downarrow} \vect{B}_{D, C} \vect{x}_C$. We distinguish between recurrent and non-recurrent SCCs.
\begin{case}[Recurrent SCCs]
Let $D$ be a recurrent SCC. By induction hypothesis and \cite[Proposition 13]{baier_markov_2023-1} we have $\vect{x}_C = 0$ for all $C \in \partition \downarrow$. Thus, we have $\vect{B}_D \vect{x}_D \geq \vect{x}_D$ and $\boldMu_D^\top \ \vect{x}_D \leq 1$. 
Recall that by definition of recurrent SCCs, we have $\rho(\vect{B}_D) = 1$.
Hence, from the first statement in \Cref{lemma:strongly-connected-square-matrices}, we know that $\vect{B}_D \vect{x}_D = \vect{x}_D$ holds. Further, due to the second statement in \Cref{lemma:strongly-connected-square-matrices}, we know that the eigenspace associated with $\rho(\vect{B}_D)$ is one-dimensional. It follows that $\vect{x}_D = \valueVector_D \cdot \gamma$ for some $\gamma \in [0, 1]$ and thus $\vect{x}_D \leq \valueVector_D$.

\end{case}
\begin{case}[Non-recurrent SCCs]
Now let $D$ be a non-recurrent SCC. By induction hypothesis, we have:
\[
\vect{x}' \; \coloneqq \; \sum_{C \in \partition \downarrow} \vect{B}_{D, C} \vect{x}_C \; \leq \; \sum_{C \in \partition \downarrow} \vect{B}_{D, C} \valueVector_C \; \eqqcolon \; \vect{v}'
\]
By definition of non-recurrent SCCs, we have $\rho(D) < 1$. In particular, this means that $(\vect{I} - \vect{B}_D)^{-1}$ exists and is nonnegative. This gives us:
\[
\vect{x}_D \leq (\vect{I} - \vect{B}_D)^{-1} \vect{x}' \qquad \valueVector_D = (\vect{I} - \vect{B}_D)^{-1} \vect{v}'
\]
Because $(\vect{I} - \vect{B}_D)^{-1}$ is nonnegative and $\vect{x}' \leq \vect{v}'$, we get $\vect{x}_D \leq \valueVector_D$. \qed
\end{case}
\end{proof}

\solToSubsystem*
\begin{proof}
Recall that $\stateSupp{\vect{x}} = \{\state \in \states \mid \exists \autoState \in \autoStates \centerdot (\state,\autoState) \in \supp{\vect{x}}\}$ and let $\states' \supseteq \stateSupp{\vect{x}}$.
Our aim is to show $\prob_{\dtmc'}(\prop) \geq \lambda$.
To this end, we first derive an equation system characterizing the value vector $\valueVector' \in [0,1]^{\states \times \autoStates}$ of $\dtmc'$, defined by $\valueVector'(\state,\autoState) = \prob_{\dtmc', \state}\bigl(\lang{\automaton, \autoState}\bigr)$.
Let $\vect{B}' = \vect{B}_{\dtmc' \times \automaton}$ be the matrix of the product $\dtmc' \times \automaton$, after removing non-accepting recurrent SCCs.
  \begin{claim}
  If $D \subseteq \states \times \autoStates$ is an accepting recurrent SCC of $\vect{B}'$, then it is also an accepting and recurrent SCC of $\vect{B}$.
  \end{claim}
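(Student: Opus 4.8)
The plan is to compare the two product matrices $\vect{B}' = \vect{B}_{\dtmc' \times \automaton}$ (the matrix after removing non-accepting recurrent SCCs, as in the statement) and $\vect{B} = \vect{B}_{\dtmc \times \automaton}$ directly, exploiting that $\transMat$ and $\transMat'$ coincide on $\states' \times \states'$ while $\transMat'(\state, \cdot) \equiv 0$ for $\state \notin \states'$. The assertion that $D$ is \emph{accepting} depends only on the vertex set $D$ (namely whether some $(\state, \autoState) \in D$ has $\autoState \in F$), so it transfers verbatim; only ``recurrent SCC of $\vect{B}$'' needs an argument, and I would split it into three steps: (i) $D \subseteq \states' \times \autoStates$; (ii) $\vect{B}_D = \vect{B}'_D$, whence $\rho(\vect{B}_D) = \rho(\vect{B}'_D) = 1$ and $D$ is strongly connected in $\vect{B}$; and (iii) $D$ is \emph{inclusion-maximal} among strongly connected vertex sets of $\vect{B}$.

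Steps (i) and (ii) are routine. For (i): if $(\state, \autoState) \in D$ with $\state \notin \states'$, then the row of $(\state, \autoState)$ in $\vect{B}'$ is the zero row, so $(\state, \autoState)$ has no outgoing edge; hence either $D = \{(\state, \autoState)\}$, forcing $\rho(\vect{B}'_D) = 0$, or $D$ is not strongly connected --- in both cases contradicting that $D$ is a recurrent SCC of $\vect{B}'$. Step (ii) then follows at once from the entry formula for the products: for $(\state, \autoState), (\state', \autoState') \in D$ we have $\state, \state' \in \states'$, so $\transMat'(\state, \state') = \transMat(\state, \state')$ and therefore the corresponding entries of $\vect{B}'$ and $\vect{B}$ agree; restricting to $D$ gives $\vect{B}'_D = \vect{B}_D$, which yields both the spectral-radius condition and strong connectivity of $D$ in $\vect{B}$.

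The crux is (iii). Let $\tilde{D}$ be the SCC of $\vect{B}$ containing $D$, and assume towards a contradiction that $D \subsetneq \tilde{D}$. Since $D$ is an \emph{accepting} recurrent SCC of $\vect{B}'$, the theory of \cite{baier_markov_2023-1} (extended to sub-stochastic chains as in the remark above) gives a $D$-normalizer with $\boldMu_D^{\top} \valueVector'_D = 1$ for the value vector $\valueVector'$ of $\dtmc' \times \automaton$, which forces $\valueVector'_D \neq \vect{0}$. As $\dtmc'$ is a subsystem of $\dtmc$, restricting a DTMC cannot increase acceptance probabilities, so $\valueVector' \leq \valueVector$ pointwise; hence $\valueVector$ is nonzero on $D \subseteq \tilde{D}$, i.e.\ $\tilde{D}$ contains a state of positive value. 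By the structural properties of products of DTMCs with UBAs established in \cite{baier_markov_2023-1}, every such SCC of $\vect{B}$ has spectral radius at most $1$, so $\rho(\vect{B}_{\tilde{D}}) \leq 1$. On the other hand, $\vect{B}_{\tilde{D}}$ is irreducible and $\vect{B}_D$ is a proper principal submatrix of it, so by strict monotonicity of the spectral radius for irreducible nonnegative matrices (a standard Perron-Frobenius fact, cf.\ \cite{berman_nonnegative_1994}) we get $\rho(\vect{B}_{\tilde{D}}) > \rho(\vect{B}_D) = 1$, a contradiction. Therefore $\tilde{D} = D$, so $D$ is an SCC of $\vect{B}$; it is recurrent since $\rho(\vect{B}_D) = 1$, and accepting as already observed.

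The main obstacle is step (iii), specifically locating and correctly invoking the structural facts from \cite{baier_markov_2023-1}: that the value vector of a (sub-stochastic) DTMC--UBA product is nonzero on each accepting recurrent SCC, and that every SCC of such a product carrying positive value has spectral radius at most $1$ (so that no SCC of $\vect{B}$ can strictly contain a recurrent one). A secondary point to be careful about is that forming $\vect{B}'$ involves deleting the non-accepting recurrent SCCs, so one should note that $D$ contains none of the deleted vertices, hence $\vect{B}'_D$ is literally the corresponding block of the full product matrix of $\dtmc' \times \automaton$ and the comparison in (ii) is legitimate.
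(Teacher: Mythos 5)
Your proof is correct, but it is considerably more thorough than the paper's own argument, which is a three-line observation: by construction $\vect{B}'_D \leq \vect{B}_D$ entrywise, hence $\rho(\vect{B}'_D) \leq \rho(\vect{B}_D)$, so $\rho(\vect{B}'_D) = 1$ forces $\rho(\vect{B}_D) = 1$ (implicitly using the fact from \cite{baier_markov_2023-1} that every SCC of a DTMC--UBA product has spectral radius at most one), and acceptance depends only on the vertex set. The paper never verifies that $D$ is inclusion-maximal strongly connected in $\vect{B}$, which is exactly your step (iii); this is not a vacuous point, since the subsequent line of the paper's proof needs $D \in \mathcal{D}^+$ so that $\boldMu_D$ is one of the precomputed normalizers, and ``recurrent SCC of $\vect{B}$'' is defined only for actual SCCs of $\vect{B}$. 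Your Perron--Frobenius argument (a proper principal submatrix of an irreducible nonnegative matrix has strictly smaller spectral radius, so $D \subsetneq \tilde{D}$ would give $\rho(\vect{B}_{\tilde{D}}) > 1$, contradicting the spectral-radius bound for SCCs of the product) closes that gap cleanly; your steps (i) and (ii), giving the exact block identity $\vect{B}'_D = \vect{B}_D$, also let you conclude $\rho(\vect{B}_D) = 1$ without appealing to the upper bound at that point, which is slightly sharper than the paper's domination argument. The one cosmetic weakness is in step (iii): you route the bound $\rho(\vect{B}_{\tilde{D}}) \leq 1$ through positivity of the value vector on $D$, whereas the fact you need holds unconditionally for every SCC of the product of a DTMC with an unambiguous automaton (it is a consequence of unambiguity alone), so the detour through $\boldMu_D^{\top}\valueVector'_D = 1$ and $\valueVector' \leq \valueVector$ can be dropped.
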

  \begin{proof}
  	By construction, $\vect{B}'_D \leq \vect{B}_D$ and hence $\rho(\vect{B}'_D) \leq \rho(\vect{B}_D)$.
  	Hence, $\rho(\vect{B}'_D) = 1$ implies $\rho(\vect{B}_D) = 1$ and therefore if $\vect{B}'_D$ is recurrent, then so is $\vect{B}_D$.
    If $\vect{B}'_D$ is accepting, then so is $\vect{B}_D$, since this depends only on $D$.
  \end{proof}
This implies that for every accepting and recurrent SCC $D$ of $B'$, the vector $\boldMu_D$ is a $D$-normalizer.
  \begin{claim}
    $\vect{B}' \vect{x} \geq \vect{x}$
  \end{claim}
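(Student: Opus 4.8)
The plan is to prove $\vect{B}' \vect{x} \ge \vect{x}$ by a direct row-by-row comparison of $\vect{B}'$ with $\vect{B}$, using nothing beyond the support hypothesis $\stateSupp{\vect{x}} \subseteq \states'$. Recall $\vect{B}'$ is the product matrix of the subsystem $\dtmc' = \dtmc_{\states'}$, whose transition matrix satisfies $\transMat'(\state,\state') = \transMat(\state,\state')$ for $\state,\state' \in \states'$ and $\transMat'(\state,\state') = 0$ otherwise; hence $\vect{B}'((\state,\autoState),(\state',\autoState')) = \vect{B}((\state,\autoState),(\state',\autoState'))$ when $\state,\state' \in \states'$ and $\vect{B}'((\state,\autoState),(\state',\autoState')) = 0$ otherwise. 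The observation that drives everything is that $\vect{x}$ vanishes on every product state whose first component lies outside $\states'$: if $\state' \notin \states' \supseteq \stateSupp{\vect{x}}$, then $(\state',\autoState') \notin \supp{\vect{x}}$ for all $\autoState'$.

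With this, I would fix a row index $(\state,\autoState)$ and split into cases. If $\state \notin \states'$, the entire row of $\vect{B}'$ is zero, so $(\vect{B}'\vect{x})(\state,\autoState) = 0 = \vect{x}(\state,\autoState)$, the last equality by the observation. If $\state \in \states'$, then $\vect{B}'$ and $\vect{B}$ agree on that row except at columns whose first component lies outside $\states'$, where $\vect{x}$ vanishes; hence $(\vect{B}'\vect{x})(\state,\autoState) = (\vect{B}\vect{x})(\state,\autoState) \ge \vect{x}(\state,\autoState)$ by the hypothesis $\vect{B}\vect{x} \ge \vect{x}$. So $\vect{B}'\vect{x} \ge \vect{x}$ holds for the full product matrix of $\dtmc' \times \automaton$. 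It remains to observe that passing to $\vect{B}'$ ``after removing non-accepting recurrent SCCs'' is harmless: any recurrent SCC of $\dtmc' \times \automaton$ is, after deleting only edges, a closed strongly connected subset of $\dtmc \times \automaton$, hence a bottom SCC there, which by the standing assumption that $\dtmc \times \automaton$ has no non-accepting recurrent SCCs must be accepting -- so there is nothing to delete. (Alternatively, restricting $\vect{B}'\vect{x} \ge \vect{x}$ to such an SCC $D$ and invoking \Cref{lemma:strongly-connected-square-matrices} with $\rho(\vect{B}'_D) = 1$ forces $\vect{x}_D$ into the one-dimensional Perron eigenspace of $\vect{B}'_D$, which one then rules out; either way the removed columns carry no $\vect{x}$-mass and the inequality survives restriction.)

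The claim itself is essentially bookkeeping, and the only mildly delicate point -- the one I would make sure to spell out rather than sweep under the rug -- is this last alignment of index sets around the removal step. Together with the companion claim of the proof (every accepting recurrent SCC of $\vect{B}'$ is an accepting recurrent SCC of $\vect{B}$, so each $\boldMu_D$ remains a valid $D$-normalizer for $\dtmc' \times \automaton$ and $\boldMu_D^{\top}\vect{x}_D \le 1$ holds by hypothesis), the claim lets us apply monotonicity (\Cref{lemm:monotonicity}, now to $\dtmc' \times \automaton$) and conclude $\vect{x} \le \valueVector'$; evaluating at $(\init,\autoInit)$ then yields $\prob_{\dtmc',\init}(\prop) = \valueVector'(\init,\autoInit) \ge \vect{x}(\init,\autoInit) \ge \lambda$, i.e. $\dtmc_{\states'} \models \queryProb_{\ge \lambda}(\prop)$, which completes \Cref{prop:soltosubsys}.
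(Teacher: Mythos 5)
Your core argument is exactly the paper's: since $\stateSupp{\vect{x}} \subseteq \states'$, the vector $\vect{x}$ vanishes on every product state whose $\dtmc$-component lies outside $\states'$, so each row of $\vect{B}'\vect{x}$ coincides with the corresponding row of $\vect{B}\vect{x}$ (trivially when $\state \notin \states'$, and because the columns where $\vect{B}'$ and $\vect{B}$ differ carry no $\vect{x}$-mass when $\state \in \states'$), and the inequality is inherited from $\vect{B}\vect{x} \geq \vect{x}$. The one place you go beyond the paper --- checking that the definition of $\vect{B}'$ as the product matrix \emph{after removing non-accepting recurrent SCCs} does not break the claim --- flags a real subtlety that the paper's own proof silently skips, but your primary justification for it is wrong: a recurrent SCC of a UBA product is \emph{not} a closed (bottom) strongly connected set in general, since $\vect{B}$ need not be stochastic and the paper only calls recurrent SCCs the \emph{counterparts} of bottom SCCs, so ``hence a bottom SCC there'' does not follow. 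Your conclusion that nothing gets removed is nevertheless correct, for a different reason: a recurrent SCC $D'$ of the subsystem product is strongly connected in $\vect{B}$ with $1 = \rho(\vect{B}'_{D'}) \leq \rho(\vect{B}_{D'}) \leq 1$ (the upper bound being the spectral-radius property of SCCs in UBA products), and by strict monotonicity of the Perron root under passing to proper principal submatrices of an irreducible nonnegative matrix, $D'$ must then be an entire SCC of $\vect{B}$, hence recurrent and, by the standing assumption, accepting. Your alternative parenthetical argument is too vague to evaluate (it never says what is being ``ruled out''). None of this affects the comparison with the published proof, which establishes the inequality entrywise over all of $\states \times \autoStates$ and does not address the removal step at all.
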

  \begin{proof}
    Our aim is to show for all $\state, \autoState \in \states \times \autoStates$:
    \[
    \sum_{\substack{\state',\autoState' \in \states \times \autoStates \\ \autoState' \in \autoTransition(\autoState,\state)}}\transMat'(s,s') \cdot \vect{x}(\state',\autoState') \geq \vect{x}(\state,\autoState)
    \]
By definition of $\transMat'$, we have $\transMat'(\state, \state') = \transMat(\state,\state')$ for all $\state' \in \states'$. Further, observe that $\vect{x}(\state, \autoState) = 0$ for all $\state \in \states\setminus\states'$ and $\autoState \in \autoStates$. Thus, we get:
\begin{align*}
 \sum_{\substack{\state',\autoState' \in \states \times \autoStates \\ \autoState' \in \autoTransition(\autoState,\state)}}\transMat'(\state,\state') \cdot \vect{x}(\state',\autoState') &= \sum_{\substack{\state',\autoState' \in \states' \times \autoStates \\ \autoState' \in \autoTransition(\autoState,\state)}}\transMat'(\state,\state') \cdot \vect{x}(\state',\autoState') \\
 &= \sum_{\substack{\state',\autoState' \in \states \times \autoStates \\ \autoState' \in \autoTransition(\autoState,\state)}} \transMat(\state,\state') \cdot \vect{x}(\state',\autoState') \ \geq \ \vect{x}(\state,\autoState)
\end{align*}
    The last inequality holds by the assumption $\vect{B} \vect{x} \geq \vect{x}$.
\end{proof}
  Having established these two claims and using the assumption $\boldMu_D^\top \vect{x}_D \leq 1$, we can apply \Cref{lemm:monotonicity} to $\vect{B}'$ to conclude $\vect{x} \leq \valueVector'$.
  But this already implies:
  \[ 
  \prob_{\dtmc'}(\prop) \ = \ \valueVector'(s_0,q_0) \ \geq \ \vect{x}(\init,\autoInit) \ \geq \ \lambda
  \]
  where the final inequality is again by assumption.
  \qed
  \end{proof}

\subsystemToSol*
\begin{proof}
  Let $\transMat'$ be the probabilistic transition matrix of $\dtmc'$.
  By definition, we have $\transMat' \leq \transMat$ and therefore for all $(\state, \autoState) \in \states \times \autoStates$:
  \[ \sum_{\substack{\state',\autoState' \in \states \times \autoStates \\ \autoState' \in \autoTransition(\autoState,\state)}}\transMat(\state,\state') \cdot \valueVector'(\state',\autoState') \geq  \sum_{\substack{\state',\autoState' \in \states \times \autoStates \\ \autoState' \in \autoTransition(\autoState,\state)}}\transMat'(\state,\state') \cdot \valueVector'(\state',\autoState') \; = \; \valueVector'(\state,\autoState). \]
  This shows $\vect{B} \valueVector' \geq \valueVector'$. The last equation follows from the defining equation of $\valueVector'$ \cite[Lemma 12]{baier_markov_2023-1}. Since the probability in a subsystem can only decrease, we have $\valueVector' \leq \valueVector$, which immediately shows that $\boldMu_D^\top \valueVector'_D \leq \boldMu_D^\top \valueVector_D  \leq 1$.
\qed
\end{proof}

\mcAndUba*
\begin{proof}
$\text{(i)} \Rightarrow \text{(ii)}$ follows from~\Cref{prop:subsystosol} and $\text{(ii)} \Rightarrow \text{(i)}$ from~\Cref{prop:soltosubsys}.
\end{proof}

\section{Supplementary Material for \Cref{section:experimental}}
\label{appendix:experiments}
Our implementation, experimental data and scripts for reproducing our results can be found in \cite{baier_2025_15680332}. Note that the latest version can be found at \url{https://doi.org/10.5281/zenodo.15680331}.

\medskip

\noindent\textbf{Comparison with \textsc{Prism}.} The runtime comparison for the considered models can be seen in \Cref{figure:verification-prism-switss}. Out of the $165$ considered problem instances, our tool was faster or equally as fast as \textsc{Prism} \cite{kwiatkowska_prism_2011-1} in $72$ instances. Conversely, \textsc{Prism} was faster in $58$ instances. Generally, we observe that the runtimes of the tool heavily depend on the query under consideration. We attribute the differences of the runtimes to the LTL translation tools. While we use \textsc{Spot} \cite{duret.22.cav}, \textsc{Prism} uses an internal implementation for LTL translation. Further, the fact that \textsc{Prism} is implemented in Java and our tool in C\texttt{++} might be also be contributing factor.

\begin{figure}[t]
\centering
\includegraphics[scale=0.3]{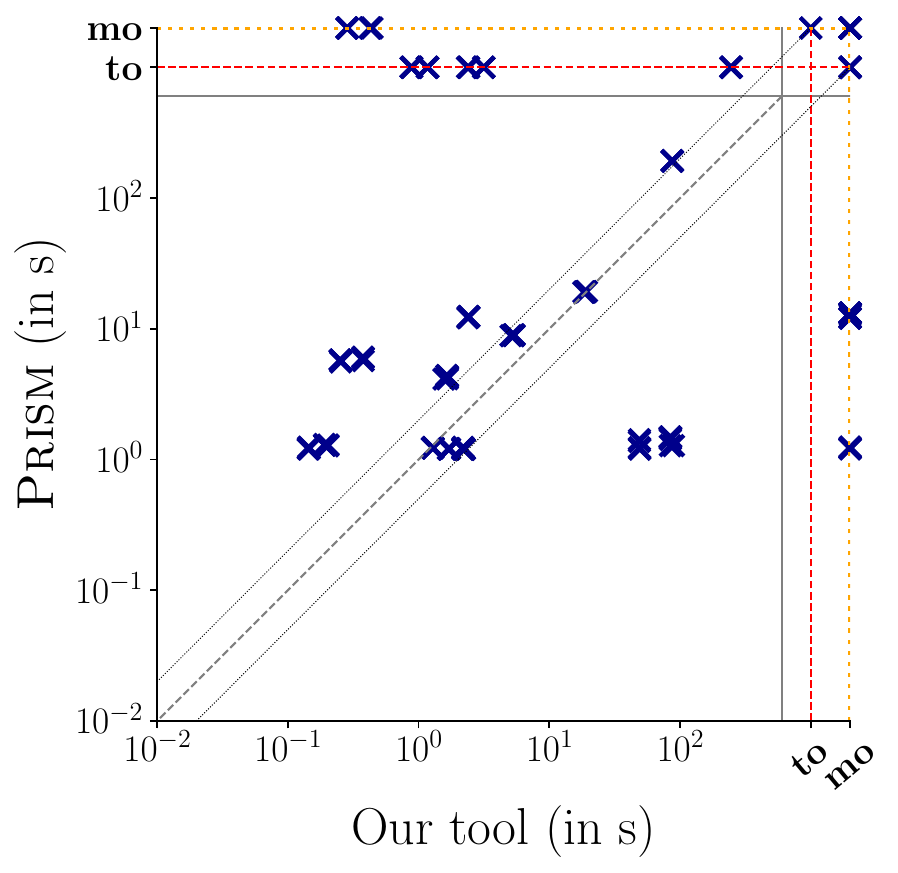}
\caption{Runtime comparison of \textsc{Prism} and our tool for the \emph{verification} of multi-objective $\omega$-regular queries.}
\label{figure:verification-prism-switss}
\end{figure}

\medskip

\noindent\textbf{LTL property in \textsf{RQ3}.} We have considered the property that ``a stable configuration is eventually reached and $k$ steps before that there are $7$ tokens in the ring''. Let $\texttt{sevenTokens}$ and $\texttt{stable}$ denote the corresponding atomic propositions. The LTL property is given by
\[
\Bigl(\neg\texttt{stable}\Bigr) \UntilOp \Bigl(\texttt{sevenTokens} \land \bigl((\neg \texttt{stable}) \UntilOp^{=k} \texttt{stable}\bigr)\Bigr)
\]
where $(\neg \texttt{stable}) \UntilOp^{=k} \texttt{stable}$ is syntactic sugar for $(\neg \texttt{stable}) \land \NextOp (\neg \texttt{stable}) \land \NextOp\NextOp (\neg \texttt{stable}) \land \ldots \land \NextOp^{k-1} (\neg \texttt{stable}) \land \NextOp^k (\texttt{stable})$. We note that the property is adapted from \cite{baier_markov_2023-1} for our experimental evaluation.

\end{document}